\pdfoutput=1 
\documentclass[12pt,one column]{IEEEtran}

\usepackage{setspace}
\doublespacing
\usepackage{amsmath}
\usepackage{amssymb}
\usepackage{cite}
\usepackage{color}
\usepackage{epsfig}
\usepackage{epsf}
\usepackage{rotating}
\usepackage{mathrsfs}
\usepackage{epsfig}
\usepackage{graphics}
\usepackage{bbold}
\usepackage{theorem}
\pdfoutput=1

\usepackage{amsmath}
\usepackage{amssymb}
\usepackage{cite}
\usepackage{color}
\usepackage{epsfig}
\usepackage{epsf}
\usepackage{rotating}
\usepackage{mathrsfs}
\usepackage{epsfig}
\usepackage{graphics}
\usepackage{theorem}

\newtheorem{thm}{Theorem}
\newtheorem{lem}{Lemma}
\newtheorem{coro}{Corollary}

\newtheorem{proposition}{Proposition}

\begin{document}

\pagenumbering{arabic} \setcounter{page}{0}

\thispagestyle{empty} \vskip 1cm


\title{{On the Design of Signature Codes  in\\ Decentralized Wireless Networks}}
\author{ Kamyar Moshksar and Amir K. Khandani \\
\small Coding \& Signal Transmission Laboratory (www.cst.uwaterloo.ca)\\
Dept. of Elec. and Comp. Eng., University of Waterloo\\ Waterloo, ON, Canada, N2L 3G1 \\
Tel: 519-725-7338, Fax: 519-888-4338\\e-mail: \{kmoshksa,
khandani\}@cst.uwaterloo.ca} \maketitle

\begin{abstract}
This paper addresses a unified approach towards communication in decentralized wireless networks of separate transmitter-receiver pairs. Different transmitters are connected to different receivers through channels with  static and non-frequency selective gains. In general, users are unaware of each other's codebooks and there is no central controller to assign the resources in the network  to the users.  A randomized signaling scheme is introduced in which each user locally spreads its Gaussian signal along a  randomly generated spreading code comprised of a sequence of nonzero elements over a certain alphabet. Along with spreading, each transmitter also masks its output independently from transmission to transmission. Using a conditional version of entropy power inequality and a key lemma on the differential entropy of mixed Gaussian random vectors, achievable rates are developed for the users. Assuming the channel gains are realization of independent continuous random variables, each user finds the optimum parameters  in constructing the randomized spreading and masking sequences by maximizing the average achievable rate per user. It is seen that  as the number of users increases, the achievable Sum Multiplexing Gain of the network approaches that of a centralized orthogonal scheme where multiuser interference is completely avoided.  An interesting observation is that in general the elements of a spreading code are not equiprobable over the underlying alphabet. This is in contrast to the customary use of binary  PN codes in spread spectrum communications in which the code elements may be selected with equal probability over $\{-1,1\}$. This particularly happens if the number of active users is greater than three. Finally, using the recently developed extremal inequality of Liu-Viswanath, we present an optimality result showing that transmission of Gaussian signals via spreading and masking yields higher achievable rates than the maximum achievable rate attained by applying masking only.  
\end{abstract}

\section{Introduction}

An important topic in  modern wireless communications is the subject
of decentralized networks. By definition, a decentralized network of separate transmitter-receiver pairs
has no central controller to allocate the network resources among
the active users. As such, resource allocation must be performed
locally at each node. In general, users are not already aware of the number of active users and the channel gains\footnote{Throughout the paper, we assume the channel from each transmitter to each receiver is modeled by a static and non-frequency selective gain.}. Also, users are not aware of each
other's codebooks implying multiuser detection is not possible, i.e., users treat each other as noise.
Multiuser interference is known to be the main factor limiting the
achievable rates in such networks particularly in the high Signal-to-Noise Ratio (SNR) or interference limited regime. Therefore, all users must follow
a distributed signaling scheme such that the destructive effect of
 interference on each user is minimized, while the resources are
fairly shared among users.

Most of distributed schemes reported in the literature rely on
either \textit{game-theoretic} approaches or \textit{cognitive
radios}. Cognitive radios \cite{2,haykin} have the ability to sense the
unoccupied portion of the available spectrum and use this
information in resource allocation. Although such smart radios avoid the use of a central controller,
they require sophisticated detection techniques for sensing the
spectrum holes and dynamic frequency assignment which add to the
overall system complexity \cite{8,9,10}. 

Distributed strategies based on game theoretic arguments have already attracted a great deal of attention. In \cite{1}, the authors introduce a non-cooperative game theoretic framework to investigate the spectral efficiency issue when several users compete over an unlicensed band with no central controller. Reference \cite{12} offers a brief overview of game theoretic dynamic spectrum sharing. Although these schemes enable us to understand the dynamics of distributed resource allocation, they usually suffer from complexity in software and convergence issues as they rely on iterative algorithms. 

Spread spectrum communications is a natural setup to share the same bandwidth by several users.  This area has attracted tremendous attention by different authors during the past  decades in the context of centralized uplink/downlink multiuser systems. Appealing characteristics of spread spectrum systems have motivated researchers to utilize these schemes in networks without a certain infrastructure, i.e., packet radio or ad-hoc networks\cite{15}. In direct sequence spread spectrum systems, the signal of each user is spread using a pseudo-random noise (PN) code. The challenging point is that in a network without a central controller, if two users use the same spreading code, they will not be capable of recovering the data at the receiver side due to the high amount of interference.  Distributed code assignment techniques are developed in \cite{38,39}. In \cite{38}, using a greedy approximation algorithm and invoking graph theory, a distributed code assignment protocol is suggested. Another category of research is devoted to devise distributed schemes in the reverse link (uplink) of cellular systems. Distributed power assignments algorithms are proposed in \cite{41,42}. Reference \cite{43} proposes a distributed scheduling method called the token-bucket on-off scenario utilized by autonomous mobile stations where its impact on the overall throughput of the reverse link is investigated. Furthermore, decentralized rate assignments in a multi-sector code division multiple access wireless network are discussed in \cite{44}.

Being a standard technique in spread spectrum communications and due
to its interference avoidance nature, Frequency Hopping is the
simplest spectrum sharing method to use in decentralized networks.
As different users typically have no prior information about the
codebooks of the other users, the most efficient method is avoiding
interference by choosing unused channels. As mentioned earlier,
searching the spectrum to find spectrum holes is not an easy task
due to the dynamic spectrum usage. As such, FH is a realization of a
transmission scheme without sensing, while avoiding the collisions
as much as possible.  Frequency hopping is one of the standard
signaling schemes  adopted in ad-hoc networks. In short
range scenarios, bluetooth systems \cite{19,20,21} are the most
popular examples of a wireless personal area network or WPAN. Using
FH over the unlicensed ISM band, a bluetooth system provides robust
communication to unpredictable sources of interference. A
modification of Frequency Hopping called Dynamic Frequency Hopping
(DFH), selects the hopping pattern based on interference
measurements in order to avoid dominant interferers. The performance
of a DFH scheme when applied to a cellular system is assessed in
\cite{22,23,24}.

Distributed rate assignment strategies are recently adopted in the context of  medium access control. It is well-known \cite{53} that the capacity region $\mathfrak{R}$ of  a multiple access channel with $n$ users is a polytope with a $n!$ corner points. Let each corner point of $\mathfrak{R}$ be an $n$-tuple whose elements are among the numbers $R_{1},\cdots,R_{L-1}$ and $R_{L}$. With no cooperation among the users, the authors in \cite{HO} propose that each user selects a codebook of rate $R_{l}$ with probability $p_{l}\in(0,1)$ for $1\leq l\leq L$. Assuming the receiver is aware of the rate selection of all users, the average sum rate of the network is $\bar{R}=\sum_{l_{1},\cdots,l_{n}\in\{1,\cdots,L\}}p_{l_{1}}\cdots p_{l_{n}}(R_{l_{1}}+\cdots+R_{l_{n}})\mathbb{1}_{(R_{l_{1}},\cdots,R_{l_{n}})\in\mathfrak{R}}$ where $\mathbb{1}_{(R_{l_{1}},\cdots,R_{l_{n}})\in\mathfrak{R}}$ is $1$ if $(R_{l_{1}},\cdots,R_{l_{n}})\in\mathfrak{R}$ and $0$ otherwise.  Finally, the numbers $p_{1},\cdots, p_{L-1}$ and $p_{L}$ are derived to maximize $\bar{R}$. Major differences of this scenario with a decentralized wireless network are

\textit{1)} The capacity region of a multiuser interference channel is unknown. 

\textit{2)}  In case transmitters have different choices to select the transmission rate, a certain receiver is not guaranteed to be aware of the transmission rate of interferers. 

\textit{3)} Any user is already unaware of the gains of channels connecting the interferers' transmitters to its receiver.  Also, any user is never capable of finding the amount of interference it imposes on other users. 

It is well-known that in the low SNR regime continuous transmission of $\mathrm{i.i.d.}$ Gaussian signals is optimal. However, as SNR increases, this scheme turns out to be quite inefficient. For instance, the achievable rate of each user eventually saturates, i.e., the achieved Sum Multiplexing Gain\footnote{The Sum Multiplexing Gain represents the scaling of the sum rate in terms of $\log\mathrm{SNR}$ as SNR tends to infinity.} (SMG) is equal to zero. Using the results in \cite{kami-1}, it is easy to see that by using a masking strategy where each user quits transmitting its Gaussian signals independently from transmission to transmission, a nonzero SMG of $\left(1-\frac{1}{n}\right)^{n-1}$ is attained in a decentralized network of $n$ users. This is an interesting result in the sense that if the number of active users tends to infinity, the achieved SMG settles on $\frac{1}{e}>0$. 

In the present paper, we answer the following questions: 

\textit{Question 1-} Is it possible to achieve an SMG larger than $\frac{1}{e}$ as the number of users becomes large?

We propose a distributed signaling scheme where each user spread its Gaussian signal along a spreading code consisting of $\mathrm{i.i.d.}$ elements selected according a globally known Probability Mass Function (PMF) over a finite alphabet $\mathscr{A}$. Thereafter, the resulting sequence is punctured independently from symbol to symbol with a certain probability representing the masking operation. For example, assuming $\mathscr{A}=\{-1,1\}$, let the generated spreading code have length $10$ and be given by
\begin{equation}
\label{ }
\big(1,1,-1,-1,-1,1,-1,1,1,-1\big).
\end{equation}  
Also, an $\mathrm{i.i.d.}$ sequence of $1$'s (representing $\mathsf{TRANSMIT}$) and $0$'s (representing $\mathsf{MASK}$) with length $10$ is generated as
\begin{equation}
\label{ }
\big(0,1,1,0,0,1,1,1,0,0\big).
\end{equation}
Finally, denoting the Gaussian signal to be transmitted by $\boldsymbol{x}$, the sequence
\begin{equation}
\label{ }
\big(0,\boldsymbol{x},-\boldsymbol{x},0,0,\boldsymbol{x},-\boldsymbol{x},\boldsymbol{x},0,0\big).
\end{equation}
is transmitted in $10$ consecutive transmission slots called a transmission frame. This process is repeated independently from transmission frame to transmission frame. We notice that since different users are not aware of each other's signals and the spreading/masking sequences, the noise plus interference vector at the receive side of any user is a mixed Gaussian random vector. We assume the knowledge of interference Probability Density Function (PDF) at the receiver side of each user. We are able to see that using the proposed randomized spreading scheme, the number of active users and the gains of channels conveying the interferers' signals can be easily found by inspecting the interference PDF and solving a set of linear equation.

Assuming all users are \emph{frame-synchronous}, we derive achievable rates for the users in three steps:

\textit{Step 1-} Using Singular Value Decomposition (SVD) of the signal space at the receiver side any user, the interference vector is mapped in the signal space and the complement space\footnote{In any Euclidean space $\mathscr{E}$ and a subspace $\mathscr{U}$ of $\mathscr{E}$,  the complement space $\mathscr{U}^{\perp}$ of $\mathscr{U}$ is the set of elements in $\mathscr{E}$ that are perpendicular to any element in $\mathscr{U}$. } of the signal space.

\textit{Step 2-} A conditional version of entropy power inequality is used to derive a lower bound on the mutual information between the input and output of each user along any transmission frame. The conditioning is made over the contents of the interference vector mapped in the complement space of the signal space.

\textit{Step 3-} The resulting lower bound in the previous step highly depends on the differential entropy of mixed Gaussian random vectors. Since there is no closed formula on the differential entropy of a mixed Gaussian vector, a key Lemma is used to find computable bounds on this differential entropy. This leads us to the final formulation of the achievable rate. 

In a decentralized network of $n$ users, we are able to show that by regulating the length of the transmission frame and the probabilistic structure of the spreading/masking sequences, the resulting lower bound scales like $\mathrm{SMG}(n)\log\mathrm{SNR}$ where $\lim_{n\to\infty}\mathsf{SMG}(n)=1$. This is exactly the SMG of a centralized orthogonal resource allocation scheme where multiuser interference is completely avoided.   

Our focus is not particularly on the high SNR regime. In fact, the length of the transmission frame and the probabilistic  parameters of the spreading/masking codes are sensitive to the choice of SNR. Our proposed achievable rate for any user in general depends on the gains of the channels conveying the interference. As mentioned earlier, each user is capable of finding the channel gains, however, if each user attempts to maximize its achievable rate over the length of the transmission frame and other code parameters, different users come up with different choices which results in inconsistency. To circumvent this difficulty, assuming the channel gains are realizations of $\mathrm{i.i.d.}$ continuous random variables, each user selects the code parameters such that the average of achievable rate per user over different realizations of the channel gains is maximized. This leads to a consistent and distributed method to design the best randomization algorithm in constructing the spreading/masking sequences. 

An interesting observation is that even in the simplest scenario where the underlying alphabet is $\{-1,1\}$ and no masking is applied\footnote{This reminds us of direct sequence spread spectrum communications.}, the elements of the spreading codes are not equiprobable over $\{-1,1\}$. For example, our simulation results show that in a network of $n=4$ users at $\mathrm{SNR}=60\mathrm{dB}$, the elements of the spreading code must be selected to be $1$ with a probability of $0.01$ and $-1$ with a probability of $0.91$ or vice versa.

\textit{Question 2-} What is the highest achievable rate under the masking protocol? Can one do better than masking?

One may raise the question if masking the transmitted signals independently from transmission slot to transmission slot is by itself sufficient, i.e., by selecting the PDF of the transmitted signals properly (probably non-Gaussian), there is no need for spreading. Using an extremal inequality of Liu-Viswanath \cite{LV}, we are able to show that transmission of Gaussian signals along with spreading and masking yields higher achievable rates that the largest achievable rate with masking alone. 

The rest of the paper is organized as follows. Section II offers the system model. In this section, we introduce the randomized spreading coding and discuss how all user can consistently design their spreading/masking sequences. Section III presents the development of achievable rates based on the three steps mentioned earlier. System design is brought in section IV where we offer several design examples. Finally, section V prove the supremacy of blending spreading and masking over masking alone. Conclusion remarks are given in section VI.    



\textbf{Notation-} Throughout the paper, we denote random quantities in bold case such as $\boldsymbol{x}$ and $\vec{\boldsymbol{y}}$. A realization of $\boldsymbol{x}$ is denoted by $x$. A circularly symmetric complex Gaussian random vector $\vec{\boldsymbol{x}}$ of length $m$ with zero mean and  covariance matrix $C$ is denoted by $\mathcal{CN}(0,C)$. A Bernoulli random variable $\boldsymbol{x}\in\{0,1\}$ with $\Pr\{\boldsymbol{x}=1\}=a\in[0,1]$ is denoted by $\mathrm{Ber}(a)$. For a sequence $(a_{l})_{l=1}^{m}\triangleq(a_{1},\cdots,a_{m})$ and a set $\Xi=\{\xi_{1},\cdots,\xi_{m'}\}\subset\{1,\cdots,m\}$ where $\xi_{1}<\cdots<\xi_{m'}$, we define $(a_{l})_{l\in\Xi}\triangleq (a_{\xi_{1}},\cdots,a_{\xi_{m'}})$. We use $\mathrm{E} \{.\}$ for the expectation operator, $\mathrm{Pr}\{\mathcal{E}\}$ for the probability of an event $\mathcal{E}$, $\mathbb{1}_{\mathcal{E}}$  for the indicator function of an event $\mathcal{E}$ and $p_{\boldsymbol{x}}(.)$  for the PDF of a random variable $\boldsymbol{x}$. Also, $\mathrm{I} (\boldsymbol{x};\boldsymbol{y})$ denotes the mutual information between random variables $\boldsymbol{x}$ and $\boldsymbol{y}$, $\mathrm{h} (\boldsymbol{x})$ the differential entropy of a continuous random variable $\boldsymbol{x}$, $\mathrm{H}(\boldsymbol{x})$ the entropy of a discrete random variable $\boldsymbol{x}$, and the binary entropy function is denoted by $\mathscr{H}(x)\triangleq-x\log x-(1-x)\log(1-x)$ for $x\in[0,1]$. For any $x\in[0,1]$, $\bar{x}$ denotes $1-x$. The Dirac delta function is denoted by $\delta(.)$. For integers $m,n\in\mathbb{N}$, a $m\times n$ matrix in which all elements are $0$ or $1$ is shown by $0_{m\times n}$ or $1_{m\times n}$ respectively. For sets $A$ and $B$, the set $A\backslash B$ denotes a set with elements in $A$ and not in $B$. The cardinality of a set $A$ is denoted by $|A|$. For any two vectors of the same size $\vec{x}$ and $\vec{y}$, the vector $\vec{x}\odot\vec{y}$ is the element-wise product of $\vec{x}$ and $\vec{y}$. For two function $f(\gamma)$ and $g(\gamma)$ of a variable $\gamma>0$, we write $f\sim g$ if $\lim_{\gamma\to\infty}\frac{f}{\log\gamma}=\lim_{\gamma\to\infty}\frac{g}{\log\gamma}$ and $f\lesssim g$ if $\lim_{\gamma\to\infty}\frac{f}{\log\gamma}\leq \lim_{\gamma\to\infty}\frac{g}{\log\gamma}$. The notation $f\gtrsim g$ is defined similarly.

\section{System Model} 
 We consider a decentralized communication network of $n$ users\footnote{Users consists of a separate transmitter-receiver pairs.}. The static and non frequency-selective gain of the channel from the $i^{th}$ transmitter to the $j^{th}$ receiver is shown by $h_{i,j}$ which is in general a complex number. In a decentralized network, there is no communication or cooperation among different users. Due to the fact that the network has no fixed infrastructure and there is no central controller to manage the network resources among users, resource allocation and rate assignment must be performed locally at every transmitter. A main feature of such networks is that the $i^{th}$ user is not already informed about the channel gains $(h_{j,i})_{j=1}^{n}$ concerning the links connecting different transmitters to the $i^{th}$ receiver. In fact, every receiver has only access to the interference PDF and  the knowledge about the number of active users and the channel gains $(h_{j,i})_{j=1}^{n}$ can only be inferred through analyzing this PDF. Also, different users are not aware of each other's codebooks. As such, no multiuser detection is possible and users treat the interference as noise.    
 \subsection{Randomized Signature Codes}

 In this part, we introduce a distributed signaling strategy using randomized spreading/masking. For positive integers $T$ and $K$, the codebook of the $i^{th}$ user consists of $2^{TR_{i}}$ codewords where a typical codeword $(\boldsymbol{x}_{i,t})_{t=1}^{T}$ is a sequence of $\mathrm{i.i.d.}$ circularly symmetric complex Gaussian random variables with zero mean and variance $\gamma$.  The $i^{th}$ user transmits $(\boldsymbol{x}_{i,t})_{t=1}^{T}$ in $T$ \emph{transmission frames} where each transmission frame consists of $K$ \emph{transmission slots}. In a typical transmission frame, one of the signals in the codeword $(\boldsymbol{x}_{i,t})_{t=1}^{T}$ is transmitted. To transmit $\boldsymbol{x}_{i,t}$, the $i^{th}$ user randomly constructs two independent sequences called the spreading and the masking codes. The spreading code is  a $K\times 1$ vector $\vec{\boldsymbol{\mathfrak{s}}}_{i,t}$ over an alphabet $\mathscr{A}\subset\mathbb{Z}\backslash\{0\}$ where the elements of $\vec{\boldsymbol{\mathfrak{s}}}_{i,t}$ are $\mathrm{i.i.d.}$ with a globally known PMF $(\mathsf{p}_{a})_{a\in\mathscr{A}}$. The masking code is a $K\times 1$ vector $\vec{\boldsymbol{\mathfrak{m}}}_{i,t}$ whose elements are independent $\mathrm{Ber}(\varepsilon)$ random variables for some $\varepsilon\in(0,1]$. Thereafter, the $i^{th}$ user transmits $\boldsymbol{x}_{i,t}\vec{\boldsymbol{\mathfrak{s}}}_{i,t}\odot\vec{\boldsymbol{\mathfrak{m}}}_{i,t}$ in the $t^{th}$ transmission frame. The vector 
 \begin{equation}\vec{\boldsymbol{s}}_{i,t}\triangleq\vec{\boldsymbol{\mathfrak{s}}}_{i,t}\odot\vec{\boldsymbol{\mathfrak{m}}}_{i,t}
 \end{equation}
  is called the randomized signature code of the $i^{th}$ user in the $t^{th}$ transmission frame. We remark that the spreading and masking codes of the $i^{th}$ user over different transmission frames are constructed independently. The alphabet $\mathscr{A}$ has the property that for any $a\in\mathscr{A}$, we have $-a\in\mathscr{A}$. The received vector at the receiver side of the $i^{th}$ user in a typical transmission frame is given by\footnote{We  omit the frame index for notation simplicity.}
\begin{equation}
\label{e1}
\vec{\boldsymbol{y}}_{i}=\beta h_{i,i}\boldsymbol{x}_{i}\vec{\boldsymbol{s}}_{i}+\sum_{j\neq i}\beta h_{j,i}\boldsymbol{x}_{j}\vec{\boldsymbol{s}}_{j}+\vec{\boldsymbol{z}}_{i}
\end{equation}
 where $\vec{\boldsymbol{z}}_{i}$ is a $\mathcal{CN}(0_{K\times 1},I_{K})$ random vector representing the ambient noise at the $i^{th}$ receiver. Also, $\beta$ is a normalization factor ensuring the average transmission power per symbol of the $i^{th}$ user is $\gamma$, i.e.,\begin{equation}
\label{e11}
\beta^{2}\mathrm{E}\left\{\|\vec{\boldsymbol{s}}_{i}\|_{2}^{2}\right\}=1.
\end{equation}
In (\ref{e1}), we have made the assumption that all active users in the network are frame-synchronous meaning their transmission frames start and end at similar time instants. This is not necessarily a valid assumption in a decentralized network, however, this makes the presentation of the subject much easier. It is clear that the transmitted signals of each user along its transmission frames are correlated while signals transmitted in different transmission frames are independent. Hence, we assume any new active user is capable of detecting the correlated segments along the interference plus noise process, and therefore, synchronizing itself with former active users in the network. However, in case different users are not frame-synchronous and users are not aware of the asynchrony pattern, the communication channel of any user is not ergodic anymore and one must perform outage analysis. 

Using joint typicality at the receiver side of the $i^{th}$ user, any data rate $R_{i}\leq \mathsf{C}_{i}$ is achievable where
\begin{equation}
\mathsf{C}_{i}\triangleq \frac{\mathrm{I}(\boldsymbol{x}_{i},\vec{\boldsymbol{s}}_{i};\vec{\boldsymbol{y}}_{i})}{K}.
\end{equation} 
 The term $\mathrm{I}(\boldsymbol{x}_{i},\boldsymbol{s}_{i};\vec{\boldsymbol{y}}_{i})$ indicates that the $i^{th}$ user is also embedding information in the sequence of $\mathrm{i.i.d.}$ signature codes. In fact, one can assume the codeword of any user consists of two sequences, namely, the sequence of Gaussian signals and the sequence of randomized signature codes. Due to the fact that the signature code of any user is not known to other users and on the other hand, the signature codes are independently changing over different transmission frames, the noise plus interference at the receiver side of any user has a mixed Gaussian PDF. This makes $\mathrm{I}(\boldsymbol{x}_{i},\vec{\boldsymbol{s}}_{i};\vec{\boldsymbol{y}}_{i})$ have no closed expression. Therefore, we need to obtain a tight lower bound on this quantity whose computation only needs data that can be inferred from the noise plus interference PDF at the receiver side of the $i^{th}$ user and be fed back to its associated transmitter in order to regulate the transmission rate. Throughout the paper, the interference term at the receiver side of the $i^{th}$ user is denoted by $\vec{\boldsymbol{w}}_{i}$, i.e., $\vec{\boldsymbol{w}}_{i}=\sum_{j\neq i}\beta h_{j,i}\boldsymbol{x}_{j}\vec{\boldsymbol{s}}_{j}$. One can state $\vec{\boldsymbol{w}}_{i}$ as
\begin{equation}
\label{ }
\vec{\boldsymbol{w}}_{i}=\boldsymbol{S}_{i}\Xi_{i}\vec{\boldsymbol{X}}_{i}
\end{equation} 
where
\begin{equation}
\label{ }
\boldsymbol{S}_{i}\triangleq\begin{pmatrix}
    \vec{\boldsymbol{s}}_{1}  & \cdots&\vec{\boldsymbol{s}}_{i-1}&\vec{\boldsymbol{s}}_{i+1}&\cdots&\vec{\boldsymbol{s}}_{n}   
\end{pmatrix},
\end{equation}
\begin{equation}
\label{ }
\Xi_{i}\triangleq \mathrm{diag}(h_{1,i},\cdots,h_{i-1,i},h_{i+1,i},\cdots,h_{n,i})
\end{equation}
and
\begin{equation}
\label{ }
\vec{\boldsymbol{X}}_{i}=\begin{pmatrix}
    \boldsymbol{x}_{1}  & \cdots&\boldsymbol{x}_{i-1}&\boldsymbol{x}_{i+1}&\cdots&\boldsymbol{x}_{n}   
\end{pmatrix}^{T}.
\end{equation}

\subsection{Considerations on the Channel Gains and the Number of Active Users  }
 In general, we assume that the $i^{th}$ receiver is aware of $h_{i,i}$ which can be done through a training sequence sent by the $i^{th}$ transmitter. Assuming the channel gains are realizations of $\mathrm{i.i.d.}$ random variables with a continuous PDF, then the number of Gaussian components in the mixed Gaussian PDF of the interference in any transmission slot at the receiver side of the $i^{th}$ user is $\left(\frac{|\mathscr{A}|}{2}\right)^{n-1}$ if masking is not performed and $\left(\frac{|\mathscr{A}|}{2}+1\right)^{n-1}$ if masking and spreading are both applied. These levels consist of $\sum_{j\neq i}a_{j}^{2}|h_{j,i}|^{2}\gamma$ where $a_{j}\in\mathscr{A}$. As such, as far as $|\mathscr{A}|\geq 3$, the number of active users can be obtained by finding the number of interference power levels. However, if $\mathscr{A}=\{-a,a\}$ for some $a\in\mathbb{N}$ and masking is not performed, the interference PDF in any transmission slot is Gaussian (the interference vector on any transmission frame is still mixed Gaussian) with power $a^{2}\gamma\sum_{j\neq i}|h_{j,i}|^{2}$. Therefore, the number of active users can not be derived by investigating the interference PDF in one transmission slot. In this case, it can be verified that the joint PDF of any two transmission slots in a transmission frame is a mixed Gaussian PDF with $2^{n-1}$ Gaussian components. This yields a method to find $n$ in case $\mathscr{A}$ has only two elements. 
 
  By symmetry, characterization of $\mathsf{C}_{i}$ demands the knowledge of an arbitrary reordering of the sequence $(h_{j,i})_{j\neq i}$. In this paper, we derive a lower bound $\mathsf{C}_{i}^{(\mathrm{lb})}$ on  $\mathsf{C}_{i}$ which is only a function of the magnitude of the channel gains. Therefore, we need to obtain an arbitrary  reordering of $(|h_{j,i}|)_{j\neq i}$. Let $(\mathsf{h}^{(i)}_{1},\mathsf{h}^{(i)}_{2},\cdots,\mathsf{h}^{(i)}_{n-1})$ be a reordering of $(h_{j,i})_{j\neq i}$ based on magnitude, i.e., $|\mathsf{h}^{(i)}_{1}|<|\mathsf{h}^{(i)}_{2}|<\cdots<|\mathsf{h}^{(i)}_{n-1}|$. We consider the following cases: 
  
  \textit{Case 1-} If $|\mathscr{A}|\geq 4$, let $a$ and $b$ be the two largest elements in $\mathscr{A}$ such that $a>b$. Denoting the $n-1$ largest interference plus noise power levels on each transmission slot by $\pi_{1}<\cdots<\pi_{n-1}$, we have $\beta^{2}\gamma a^{2}\sum_{j=1}^{n-1}|\mathsf{h}_{j}^{(i)}|^{2}+1=\pi_{n-1}$ and $\beta^{2}\gamma a^{2}\sum_{\substack{j=1\\j\neq l}}^{n-1}|\mathsf{h}_{j}^{(i)}|^{2}+\beta^{2}\gamma b^{2}|\mathsf{h}_{l}^{(i)}|^{2}+1=\pi_{n-1-l}$ for $1\leq l\leq n-2$. These $n-1$ linear equations yield $(|\mathsf{h}_{j}^{(i)}|)_{j=1}^{n-1}$.
  
  \textit{Case 2-} Let masking be the only ingredient in constructing the signatures, i.e., spreading is not applied. Denoting the $n-1$ largest interference plus noise power levels on each transmission slot by $\pi_{1}<\cdots<\pi_{n-1}$, we have $\beta^{2}\gamma \sum_{j=1}^{n-1}|\mathsf{h}_{j}^{(i)}|^{2}+1=\pi_{n-1}$ and $\beta^{2}\gamma \sum_{\substack{j=1\\j\neq l}}^{n-1}|\mathsf{h}_{j}^{(i)}|^{2}+1=\pi_{n-1-l}$ for $1\leq l\leq n-2$. These $n-1$ linear equations yield $(|\mathsf{h}_{j}^{(i)}|)_{j=1}^{n-1}$.
  
  \textit{Case 3-} Let $\mathscr{A}=\{-a,a\}$ for some $a\in\mathbb{R}^{+}$ and masking is performed on top of spreading. Then, we can apply the same procedure in case 2. 
  
  \textit{Case 4-} Let $\mathscr{A}=\{-a,a\}$ for some $a\in\mathbb{R}^{+}$ and masking is not applied. The joint PDF of the interference plus noise on any two transmission slots inside a transmission frame is a bivariate mixed Gaussian PDF in which the Gaussian components have covariance matrices of the form 
  \begin{equation}
\label{ }
\begin{pmatrix}
  \beta^{2}\gamma a^{2}\sum_{j=1}^{n-1}|\mathsf{h}_{j}^{(i)}|^{2}+1    & \beta^{2}\gamma a^{2}\sum_{j=1}^{n-1}c_{j}|\mathsf{h}_{j}^{(i)}|^{2}   \\
   \beta^{2}\gamma a^{2}\sum_{j=1}^{n-1}c_{j}|\mathsf{h}_{j}^{(i)}|^{2}   &    \beta^{2}\gamma a^{2}\sum_{j=1}^{n-1}|\mathsf{h}_{j}^{(i)}|^{2}+1\end{pmatrix}
\end{equation}  
  where $c_{j}\in\{-1,1\}$ for $1\leq j\leq n-1$. The $n-2$ largest elements among the off-diagonal elements of these matrices correspond to $\beta^{2}\gamma a^{2}\sum_{\substack{j=1\\j\neq l}}^{n-1}|\mathsf{h}_{j}^{(i)}|^{2}-\beta^{2}\gamma a^{2}|\mathsf{h}_{l}^{(i)}|^{2}$ for $1\leq l\leq n-2$. These elements together with the diagonal element $  \beta^{2}\gamma a^{2}\sum_{j=1}^{n-1}|\mathsf{h}_{j}^{(i)}|^{2}+1$ yield  $(|\mathsf{h}_{j}^{(i)}|)_{j=1}^{n-1}$.
  
  Therefore, we have shown that the $i^{th}$ user can find $n$ and a reordering of the sequence $(h_{j,i})_{j\neq i}$. 
    
\subsection{ A Global Tool To Design The Randomized Signature Codes }
An important issue in a decentralized network is to propose a globally known utility function to be optimized by all user without any cooperation. As mentioned earlier, the receivers can infer the number of active users in the network and the channel gains by inspecting the interference PDF. We consider a scenario where this information is fed back to the transmitters. As mentioned earlier, there is no closed formulation on $\mathsf{C}_{i}$. However, we are able to develop a lower bound $\mathsf{C}_{i}^{(\mathrm{lb})}$ for $\mathsf{C}_{i}$ which is tight enough to guarantee \begin{equation}\lim_{\gamma\to\infty}\frac{\mathsf{C}_{i}^{(\mathrm{lb})}}{\log\gamma}=\lim_{\gamma\to\infty}\frac{\mathsf{C}_{i}}{\log\gamma}.\end{equation} 
In general, $\mathsf{C}_{i}^{(\mathrm{lb})}$ depends on $\vec{h}_{i}\triangleq(h_{j,i})_{j=1}^{n}$. As such, we denote it explicitly by $\mathsf{C}_{i}^{(\mathrm{lb})}(\vec{h}_{i})$. Assuming $(h_{j,i})_{j=1}^{n}$ are realizations of independent $\mathcal{CN}(0,1)$ random variables $(\boldsymbol{h}_{j,i})_{j=1}^{n}$, we propose that the $i^{th}$ user selects $K$, $(\mathsf{p}_{{a}})_{a\in\mathscr{A}}$ and $\varepsilon$ based on 
\begin{equation}
\label{rule}
(\hat{K},(\hat{\mathsf{p}}_{a})_{a\in\mathscr{A}},\hat{\varepsilon})=\arg\sup_{K,(\mathsf{p}_{a})_{a\in\mathscr{A}},\varepsilon}\mathrm{E}\left\{\mathsf{C}_{i}^{(\mathrm{lb})}(\vec{\boldsymbol{h}}_{i})\right\}.
\end{equation}
 After selecting $K$ and $(\mathsf{p}_{a})_{a\in\mathscr{A}}$ using (\ref{rule}), the $i^{th}$ user regulates its actual transmission rate at $R_{i}=\mathsf{C}_{i}^{\mathrm{(lb)}}(\vec{h}_{i})$ using  the realization of $\vec{\boldsymbol{h}}_{i}=\vec{h}_{i}$.

 \section {A Lower Bound $\mathrm{I}(\boldsymbol{x}_{i},\vec{\boldsymbol{s}}_{i};\vec{\boldsymbol{y}}_{i})$ }
 One can write $\mathrm{I}(\boldsymbol{x}_{i},\vec{\boldsymbol{s}}_{i};\vec{\boldsymbol{y}}_{i})$ as
   \begin{eqnarray}
\label{b1}
\mathrm{I}(\boldsymbol{x}_{i},\vec{\boldsymbol{s}}_{i};\vec{\boldsymbol{y}}_{i})=\mathrm{I}(\vec{\boldsymbol{s}}_{i};\vec{\boldsymbol{y}}_{i})+\mathrm{I}(\boldsymbol{x}_{i};\vec{\boldsymbol{y}}_{i}|\vec{\boldsymbol{s}}_{i})\geq\mathrm{I}(\boldsymbol{x}_{i};\vec{\boldsymbol{y}}_{i}|\vec{\boldsymbol{s}}_{i}).\end{eqnarray}  
The term $\mathrm{I}(\boldsymbol{x}_{i};\vec{\boldsymbol{y}}_{i}|\vec{\boldsymbol{s}}_{i})$ is the achievable rate of the $i^{th}$ user as if this user knew the randomized signature code $\vec{\boldsymbol{s}}_{i}$ already, i.e., the achievable rate of the $i^{th}$ user can be in general larger than the case where the signature matrices are already revealed to the receiver side. The extra term $\mathrm{I}(\vec{\boldsymbol{s}}_{i};\vec{\boldsymbol{y}}_{i})$ is bounded from above by $\mathrm{H}(\vec{\boldsymbol{s}}_{i})$ which is not a function of SNR. Therefore, 
\begin{equation}
\label{e14}
\lim_{\gamma\to\infty}\frac{\mathrm{I}(\boldsymbol{x}_{i},\vec{\boldsymbol{s}}_{i};\vec{\boldsymbol{y}}_{i})}{\log\gamma}=\lim_{\gamma\to\infty}\frac{\mathrm{I}(\boldsymbol{x}_{i};\vec{\boldsymbol{y}}_{i}|\vec{\boldsymbol{s}}_{i})}{\log\gamma}.\end{equation} 
 As such,  we ignore the term\footnote{It can be verified that $\mathrm{I}(\vec{\boldsymbol{s}}_{i};\vec{\boldsymbol{y}}_{i})=\sum_{\vec{s}\in \mathrm{supp}(\vec{\boldsymbol{s}}_{i})}\Pr\{\vec{\boldsymbol{s}}_{i}=\vec{s}\} \mathrm{D}\left(p_{\vec{\boldsymbol{y}}_{i}|\vec{\boldsymbol{s}}_{i}}(.|\vec{s})\|p_{\vec{\boldsymbol{y}}_{i}}(.)\right)$. This enables us to compute $\mathrm{I}(\vec{\boldsymbol{s}}_{i};\vec{\boldsymbol{y}}_{i})$ directly.} $\mathrm{I}(\vec{\boldsymbol{s}}_{i};\vec{\boldsymbol{y}}_{i})$ and focus on developing a tight lower bound on $\mathrm{I}(\boldsymbol{x}_{i};\vec{\boldsymbol{y}}_{i}|\vec{\boldsymbol{s}}_{i})$.

 To develop a lower bound on $\mathrm{I}(\boldsymbol{x}_{i};\vec{\boldsymbol{y}}_{i}|\vec{\boldsymbol{s}}_{i})$, our major tools are linear processing of the channel output based on Singular Value Decomposition of the signature code $\vec{\boldsymbol{s}}_{i}$, a conditional version of Entropy Power Inequality and a key upper bound on the differential entropy of a mixed Gaussian random vector. 
We have
\begin{equation}
\label{ }
\mathrm{I}(\boldsymbol{x}_{i};\vec{\boldsymbol{y}}_{i}|\vec{\boldsymbol{s}}_{i})=\sum_{\vec{s}\in\mathrm{supp}(\vec{\boldsymbol{s}}_{i})\backslash\{0_{K\times 1}\}}\Pr\{\vec{\boldsymbol{s}}_{i}=\vec{s}\}\mathrm{I}(\boldsymbol{x}_{i};\vec{\boldsymbol{y}}_{i}|\vec{\boldsymbol{s}}_{i}=\vec{s})\end{equation}

        
In the following, we find a lower bound on $\mathrm{I}(\boldsymbol{x}_{i};\vec{\boldsymbol{y}}_{i}|\vec{\boldsymbol{s}}_{i}=\vec{s})$ for any $\vec{s}\in\mathrm{supp}(\vec{\boldsymbol{s}}_{i})\backslash\{0_{K\times 1}\}$.  

\textbf{Step 1-}  
The matrix $\vec{s}\vec{s}^{\dagger}$ has two eigenvalues, namely zero and $\|\vec{s}\|_{2}^{2}$. The eigenvector corresponding to $\|\vec{s}\|_{2}^{2}$ is $\vec{s}$ and the the eigenvectors corresponding to zero are $K-1$ orthonormal vectors denoted by $\vec{g}_{1},\cdots, \vec{g}_{K-2}$ and $ \vec{g}_{K-1}$  which together with the columns of $\frac{\vec{s}}{\|\vec{s}\|_{2}}$ make an orthonormal basis for $\mathbb{R}^{K}$. Let us define
\begin{equation}
\label{ }
G_{i}(\vec{s})\triangleq\begin{pmatrix}
 \vec{g}_{i,1}&\cdots&\vec{g}_{i,K-1}
\end{pmatrix},
\end{equation}
\begin{equation}
\label{ }
U_{i}(\vec{s})\triangleq\begin{pmatrix}
    \frac{\vec{s}}{\|\vec{s}\|_{2}}  &G_{i}(\vec{s})\end{pmatrix}
\end{equation}
and
\begin{equation}
\label{ }
\vec{d}\triangleq\begin{pmatrix}
    \|\vec{s}\|_{2}   \\
      \vec{0}_{(K-1)\times 1}
\end{pmatrix}.
\end{equation}
Writing the SVD of $\vec{s}$, 
\begin{equation}
\label{ }
\vec{s}=U_{i}(\vec{s})\vec{d}.
\end{equation}
The $i^{th}$ receiver constructs the vector $U_{i}^{\dagger}(\vec{s})\vec{\boldsymbol{y}}_{i}\Big|_{\vec{\boldsymbol{s}}_{i}=\vec{s}}$ upon reception of $\vec{\boldsymbol{y}}_{i}$. We have
\begin{eqnarray}
U_{i}^{\dagger}(\vec{s})\vec{\boldsymbol{y}}_{i}\Big|_{\vec{\boldsymbol{s}}_{i}=\vec{s}}=\beta h_{i,i}\boldsymbol{x}_{i}\vec{d}+U_{i}^{\dagger}(\vec{s})\left(\vec{\boldsymbol{w}}_{i}+\vec{\boldsymbol{z}}_{i}\right).\end{eqnarray}  
We define
\begin{eqnarray}
\label{ }
\boldsymbol{\varphi}_{i}&\triangleq&\left[U_{i}^{\dagger}(\vec{s})\left(\vec{\boldsymbol{w}}_{i}+\vec{\boldsymbol{z}}_{i}\right)\right]_{1}\notag\\
&=&\frac{\beta\vec{s}^{\dagger}\left(\vec{\boldsymbol{w}}_{i}+\vec{\boldsymbol{z}}_{i}\right)}{\|\vec{s}\|_{2}}\end{eqnarray}
\begin{equation}
\label{ }
\boldsymbol{\omega}_{i}\triangleq\left[U_{i}^{\dagger}(\vec{s})\vec{\boldsymbol{y}}_{i}\right]_{1}=\beta h_{i,i}\|\vec{s}\|_{2}\boldsymbol{x}_{i}+\boldsymbol{\varphi}_{i}\end{equation}
and
\begin{eqnarray}
\vec{\boldsymbol{\vartheta}}_{i}&\triangleq&\left[U_{i}^{\dagger}(\vec{s})\vec{\boldsymbol{y}}_{i}\right]_{2}^{K}\notag\\&=&\left[U_{i}^{\dagger}(\vec{s})\left(\vec{\boldsymbol{w}}_{i}+\vec{\boldsymbol{z}}_{i}\right)\right]_{2}^{K}\notag\\
&\stackrel{}{=}&\left[U_{i}^{\dagger}(\vec{s})\left(\vec{\boldsymbol{w}}_{i}+\vec{\boldsymbol{z}}_{i}\right)\right]_{2}^{K}\notag\\&=&\left[\begin{pmatrix}
        \frac{\vec{s}^{\dagger}}{\|\vec{s}\|_{2}}    \\
      G_{i}^{\dagger}(\vec{s})
\end{pmatrix}\left(\vec{\boldsymbol{w}}_{i}+\vec{\boldsymbol{z}}_{i}\right)\right]_{2}^{K}\notag\\
&=&G_{i}^{\dagger}(\vec{s})\left(\vec{\boldsymbol{w}}_{i}+\vec{\boldsymbol{z}}_{i}\right).\notag\\\end{eqnarray}
We have the following  thread of equalities,
\begin{eqnarray}
\label{gh1}
\mathrm{I}(\boldsymbol{x}_{i};\vec{\boldsymbol{y}}_{i}|\vec{\boldsymbol{s}}_{i}=\vec{s})&=&\mathrm{I}(\boldsymbol{x}_{i};U_{i}^{\dagger}(\vec{s})\vec{\boldsymbol{y}}_{i}|\vec{\boldsymbol{s}}_{i}=\vec{s})\notag\\
&=&\mathrm{I}(\boldsymbol{x}_{i};\boldsymbol{\omega}_{i},\vec{\boldsymbol{\vartheta}}_{i})\notag\\
&=&\mathrm{I}(\boldsymbol{x}_{i};\vec{\boldsymbol{\vartheta}}_{i})+\mathrm{I}(\boldsymbol{x}_{i};\boldsymbol{\omega}_{i}|\vec{\boldsymbol{\vartheta}}_{i})\notag\\
&\stackrel{(a)}{=}&\mathrm{I}(\boldsymbol{x}_{i};\boldsymbol{\omega}_{i}|\vec{\boldsymbol{\vartheta}}_{i})\end{eqnarray}
where $(a)$ is by the fact that $\boldsymbol{x}_{i}$ and $\vec{\boldsymbol{\vartheta}}_{i}$ are independent, i.e., $\mathrm{I}(\boldsymbol{x}_{i};\vec{\boldsymbol{\vartheta}}_{i})=0$.

\textbf{Step 2-} In this part, we use the following Lemma without proof. 
\begin{lem}
Let $\vec{\mathbf{\Theta}}_{1}$ and $\vec{\mathbf{\Theta}}_{2}$ be $t\times 1$ complex random vectors and $\mathbf{\Theta}_{3}$ be any random quantity (scalar or vector) with densities. Also, assume that the conditional densities $p_{\vec{\mathbf{\Theta}}_{1}|\mathbf{\Theta}_{3}}(.|.)$ and $p_{\vec{\mathbf{\Theta}}_{2}|\mathbf{\Theta}_{3}}(.|.)$ exist. If $\vec{\mathbf{\Theta}}_{1}$ and $\vec{\mathbf{\Theta}}_{2}$ are conditionally independent given $\mathbf{\Theta}_{3}$, then
\begin{equation}
\label{ }
2^{\frac{1}{t}\mathrm{h}(\vec{\mathbf{\Theta}}_{1}+\vec{\mathbf{\Theta}}_{2}|\mathbf{\Theta}_{3})}\geq 2^{\frac{1}{t}\mathrm{h}(\vec{\mathbf{\Theta}}_{1}|\mathbf{\Theta}_{3})}+2^{\frac{1}{t}\mathrm{h}(\vec{\mathbf{\Theta}}_{2}|\mathbf{\Theta}_{3})}.\end{equation}
\end{lem}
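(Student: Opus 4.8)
The inequality is precisely the entropy power inequality (EPI) in conditional form, written for complex vectors of length $t$, i.e.\ of real dimension $2t$; this is why the exponent reads $\frac{1}{t}=\frac{2}{2t}$ rather than the usual $\frac{2}{n}$. The tempting first move---apply the ordinary EPI for each frozen value $\mathbf{\Theta}_{3}=\theta_{3}$ (where conditional independence makes $\vec{\mathbf{\Theta}}_{1}$ and $\vec{\mathbf{\Theta}}_{2}$ genuinely independent) and then average over $\mathbf{\Theta}_{3}$---does not close, because $x\mapsto 2^{x/t}$ is convex and Jensen's inequality sends $\mathrm{E}\{2^{\mathrm{h}(\cdot\mid\theta_{3})/t}\}\geq 2^{\mathrm{h}(\cdot\mid\mathbf{\Theta}_{3})/t}$ in the wrong direction. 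The plan is therefore to route through an equivalent, \emph{additive} reformulation of the EPI in which the conditioning variable can be averaged out harmlessly.

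First I would record the classical Shannon--Stam EPI for independent real vectors $X,Y$ of dimension $2t$, namely $2^{\mathrm{h}(X+Y)/t}\geq 2^{\mathrm{h}(X)/t}+2^{\mathrm{h}(Y)/t}$. Substituting $X\mapsto\sqrt{\lambda}\,X$ and $Y\mapsto\sqrt{1-\lambda}\,Y$ for $\lambda\in[0,1]$, and using the scaling identity $\mathrm{h}(cX)=\mathrm{h}(X)+2t\log c$, turns the right-hand side into $\lambda 2^{\mathrm{h}(X)/t}+(1-\lambda)2^{\mathrm{h}(Y)/t}$; taking the logarithm and invoking the concavity of the logarithm then yields the linearized form
\[
\mathrm{h}\!\left(\sqrt{\lambda}\,X+\sqrt{1-\lambda}\,Y\right)\ \geq\ \lambda\,\mathrm{h}(X)+(1-\lambda)\,\mathrm{h}(Y),\qquad \lambda\in[0,1].
\]
The decisive feature is that this statement is additive in the differential entropies.

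I would then apply the linearized inequality conditionally on each value $\mathbf{\Theta}_{3}=\theta_{3}$---legitimate because conditional independence makes $\vec{\mathbf{\Theta}}_{1},\vec{\mathbf{\Theta}}_{2}$ independent under this conditioning---and take the expectation over $\mathbf{\Theta}_{3}$. Since both sides are linear in the conditional entropies, the average passes straight through and produces
\[
\mathrm{h}\!\left(\sqrt{\lambda}\,\vec{\mathbf{\Theta}}_{1}+\sqrt{1-\lambda}\,\vec{\mathbf{\Theta}}_{2}\mid\mathbf{\Theta}_{3}\right)\ \geq\ \lambda\,\mathrm{h}(\vec{\mathbf{\Theta}}_{1}\mid\mathbf{\Theta}_{3})+(1-\lambda)\,\mathrm{h}(\vec{\mathbf{\Theta}}_{2}\mid\mathbf{\Theta}_{3}).
\]
To recover the target I would reverse the reformulation: replace $\vec{\mathbf{\Theta}}_{1},\vec{\mathbf{\Theta}}_{2}$ by the conditionally independent vectors $\vec{\mathbf{\Theta}}_{1}/\sqrt{\lambda}$ and $\vec{\mathbf{\Theta}}_{2}/\sqrt{1-\lambda}$ (so that the scaled sum collapses to $\vec{\mathbf{\Theta}}_{1}+\vec{\mathbf{\Theta}}_{2}$), apply the scaling identity once more, and divide by $t$. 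Writing $u=2^{\mathrm{h}(\vec{\mathbf{\Theta}}_{1}\mid\mathbf{\Theta}_{3})/t}$ and $v=2^{\mathrm{h}(\vec{\mathbf{\Theta}}_{2}\mid\mathbf{\Theta}_{3})/t}$, this leaves
\[
\tfrac{1}{t}\,\mathrm{h}(\vec{\mathbf{\Theta}}_{1}+\vec{\mathbf{\Theta}}_{2}\mid\mathbf{\Theta}_{3})\ \geq\ \lambda\log\tfrac{u}{\lambda}+(1-\lambda)\log\tfrac{v}{1-\lambda},
\]
and maximizing the right-hand side over $\lambda$ (a legitimate constant, since $u,v$ are numbers) at $\lambda^{\ast}=u/(u+v)$ gives exactly $\log(u+v)$, i.e.\ the claimed $2^{\mathrm{h}(\vec{\mathbf{\Theta}}_{1}+\vec{\mathbf{\Theta}}_{2}\mid\mathbf{\Theta}_{3})/t}\geq u+v$.

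The main obstacle is conceptual rather than computational: recognizing the Jensen obstruction in the naive averaging argument and realizing that the additive form of the EPI is the right intermediate object, since only there does the conditioning average out cleanly while the final constant choice of $\lambda$ remains independent of $\theta_{3}$. The residual technical points---verifying that conditional independence of the complex vectors transfers to their $2t$ real coordinates, tracking the scaling constant $2t\log c$ correctly in the complex setting, and the elementary log-sum optimization over $\lambda$---are routine. An alternative would be a direct Fisher-information/de~Bruijn-identity proof carried out conditionally, but the reformulation route above is shorter and treats the ordinary EPI as a black box.
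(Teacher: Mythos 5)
Your proof is correct, but there is no paper proof to compare it against: the lemma is introduced in the paper with the words ``we use the following Lemma without proof,'' i.e., the authors simply invoke the conditional entropy power inequality as a known result. Your argument is the standard (and essentially the cleanest) way to actually establish it. You correctly identify the Jensen obstruction that defeats the naive strategy of applying the unconditional EPI pointwise in $\theta_{3}$ and then averaging; you pass instead to Lieb's linearized form $\mathrm{h}\left(\sqrt{\lambda}X+\sqrt{1-\lambda}Y\right)\geq\lambda\,\mathrm{h}(X)+(1-\lambda)\,\mathrm{h}(Y)$, which, being additive in the differential entropies, survives the expectation over $\mathbf{\Theta}_{3}$; and you then recover the exponential form by optimizing over the scalar $\lambda$, with maximizer $\lambda^{\ast}=u/(u+v)$ built from the already-averaged conditional entropies $u=2^{\mathrm{h}(\vec{\mathbf{\Theta}}_{1}\mid\mathbf{\Theta}_{3})/t}$ and $v=2^{\mathrm{h}(\vec{\mathbf{\Theta}}_{2}\mid\mathbf{\Theta}_{3})/t}$. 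That last observation---that $\lambda^{\ast}$ is a constant independent of the realization $\theta_{3}$, so the scaling substitution $\vec{\mathbf{\Theta}}_{1}/\sqrt{\lambda}$, $\vec{\mathbf{\Theta}}_{2}/\sqrt{1-\lambda}$ is legitimate under the conditioning---is exactly why the argument closes, and you flag it explicitly. Your dimensional bookkeeping (a length-$t$ complex vector as a $2t$-dimensional real vector, hence the exponent $1/t=2/(2t)$ and the scaling constant $2t\log c$) is also right; the only routine points left implicit are the finiteness of the conditional entropies, which guarantees $u,v\in(0,\infty)$ and hence $\lambda^{\ast}\in(0,1)$. In short, where the paper treats this lemma as a black box, your proposal supplies a complete, self-contained proof of it.
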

We have
\begin{eqnarray}
\label{gh2}
\mathrm{I}(\boldsymbol{x}_{i};\boldsymbol{\omega}_{i}|\vec{\boldsymbol{\vartheta}}_{i})&=&\mathrm{h}(\boldsymbol{\omega}_{i}|\vec{\boldsymbol{\vartheta}}_{i})-\mathrm{h}(\boldsymbol{\omega}_{i}|\boldsymbol{x}_{i},\vec{\boldsymbol{\vartheta}}_{i})\notag\\
&=&\mathrm{h}(\boldsymbol{\omega}_{i}|\vec{\boldsymbol{\vartheta}}_{i})-\mathrm{h}\left(\beta h_{i,i}\|\vec{s}\|_{2}\boldsymbol{x}_{i}+\boldsymbol{\varphi}_{i}\big|\boldsymbol{x}_{i},\vec{\boldsymbol{\vartheta}}_{i}\right)\notag\\
&\stackrel{}{=}&\mathrm{h}(\boldsymbol{\omega}_{i}|\vec{\boldsymbol{\vartheta}}_{i})-\mathrm{h}(\boldsymbol{\varphi}_{i}|\vec{\boldsymbol{\vartheta}}_{i}).\end{eqnarray}
On the other hand, we know that $\boldsymbol{\omega}_{i}=\beta h_{i,i}\|\vec{s}\|_{2}\boldsymbol{x}_{i}+\boldsymbol{\varphi}_{i}$. Defining $\mathbf{\Theta}_{1}\triangleq\beta h_{i,i}\|\vec{s}\|_{2}\boldsymbol{x}_{i}$ and $\mathbf{\Theta}_{2}\triangleq\boldsymbol{\varphi}_{i}$, it is clear that $\mathbf{\Theta}_{1}$ and $\mathbf{\Theta}_{2}$ are conditionally independent given the collection of random variables $\mathbf{\Theta}_{3}\triangleq\vec{\boldsymbol{\vartheta}}_{i}$. As the conditional densities $p_{\mathbf{\Theta}_{1}|\mathbf{\Theta}_{3}}(.|.)$ and $p_{\mathbf{\Theta}_{2}|\mathbf{\Theta}_{3}}(.|.)$ exist, by Lemma 1, 
\begin{eqnarray}
\label{poe}
2^{\mathrm{h}\left(\boldsymbol{\omega}_{i}|\vec{\boldsymbol{\vartheta}}_{i}\right)}&\geq& 2^{\mathrm{h}\left(\beta h_{i,i}\|\vec{s}\|_{2}\boldsymbol{x}_{i}|\vec{\boldsymbol{\vartheta}}_{i}\right)}+2^{\mathrm{h}\left(\boldsymbol{\varphi}_{i}|\vec{\boldsymbol{\vartheta}}_{i}\right)}\notag\\
&\stackrel{(a)}{=}& 2^{\mathrm{h}\big(\beta h_{i,i}\|\vec{s}\|_{2}\boldsymbol{x}_{i}\big)}+2^{\mathrm{h}\left(\boldsymbol{\varphi}_{i}|\vec{\boldsymbol{\vartheta}}_{i}\right)}\end{eqnarray}
where $(a)$ is by the fact that the collection $\boldsymbol{x}_{i}$ is independent of $\vec{\boldsymbol{\vartheta}}_{i}$.
Dividing both sides of (\ref{poe}) by $2^{\frac{2}{v}\mathrm{h}(\vec{\boldsymbol{\varphi}}_{i}|\vec{\boldsymbol{\vartheta}}_{i})}$,
\begin{eqnarray}
\label{gh3}
&&\mathrm{h}\left(\boldsymbol{\omega}_{i}|\vec{\boldsymbol{\vartheta}}_{i}\right)-\mathrm{h}\left(\boldsymbol{\varphi}_{i}|\vec{\boldsymbol{\vartheta}}_{i}\right)\notag\\
&\geq& \log\left(2^{\left(\mathrm{h}\big(\beta h_{i,i}\|\vec{s}\|_{2}\boldsymbol{x}_{i}\big)-\mathrm{h}\left(\boldsymbol{\varphi}_{i}|\vec{\boldsymbol{\vartheta}}_{i}\right)\right)}+1\right).\notag\\\end{eqnarray}
By (\ref{gh1}), (\ref{gh2}) and (\ref{gh3}),
\begin{eqnarray}
\label{goosht}
\mathrm{I}(\boldsymbol{x}_{i};\vec{\boldsymbol{y}}_{i}|\vec{\boldsymbol{s}}_{i}=\vec{s})&\geq& \log\left(2^{\left(\mathrm{h}\big(\beta h_{i,i}\|\vec{s}\|_{2}\boldsymbol{x}_{i}\big)-\mathrm{h}\left(\boldsymbol{\varphi}_{i}|\vec{\boldsymbol{\vartheta}}_{i}\right)\right)}+1\right).\notag\\ \end{eqnarray}
\textbf{Step 3-}  
  We start by stating the following Lemma.
   \begin{lem}
   Let $\vec{\mathbf{\Theta}}$ be a $t\times 1$ mixed Gaussian random vector with the PDF
\begin{equation}
p_{\vec{\mathbf{\Theta}}}(\vec{\Theta})=\sum_{l=1}^{L}\frac{q_{l}}{\pi^{t}\det \Omega_{l}}\exp-\left(\vec{\Theta}^{T}\Omega_{l}^{-1}\vec{\Theta}\right)
\end{equation}
where $q_{l}\geq 0$ for $1\leq l\leq L$ and $\sum_{l=1}^{L}q_{l}=1$.
Then, 
   \begin{equation}
   \label{polm}
\sum_{l=1}^{L}q_{l}\log\big((\pi e)^{t}\det \Omega_{l}\big)\leq \mathrm{h}(\vec{\mathbf{\Theta}})\leq \sum_{l=1}^{L}q_{l}\log\big((\pi e)^{t}\det \Omega_{l}\big)+\mathrm{H}((q_{l})_{l=1}^{L})
\end{equation}
\end{lem}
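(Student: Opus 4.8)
The plan is to realize the mixed Gaussian vector as the marginal of a joint distribution carrying an auxiliary discrete label, and then to read off both bounds from elementary information-theoretic identities rather than attempting to manipulate the mixture density directly. Concretely, I would introduce a discrete random variable $\boldsymbol{J}$ taking values in $\{1,\dots,L\}$ with $\Pr\{\boldsymbol{J}=l\}=q_{l}$, and declare that conditioned on $\boldsymbol{J}=l$ the vector $\vec{\mathbf{\Theta}}$ is $\mathcal{CN}(0,\Omega_{l})$. Marginalizing over $\boldsymbol{J}$ then reproduces exactly the stated density $p_{\vec{\mathbf{\Theta}}}$, so $\vec{\mathbf{\Theta}}$ and $\boldsymbol{J}$ together form a legitimate mixed continuous--discrete pair and the quantities below are all well defined.

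The first step is to compute the conditional differential entropy. Since $\vec{\mathbf{\Theta}}$ given $\{\boldsymbol{J}=l\}$ is a circularly symmetric complex Gaussian of dimension $t$ with covariance $\Omega_{l}$, its differential entropy equals $\log\big((\pi e)^{t}\det\Omega_{l}\big)$. Averaging over $\boldsymbol{J}$ gives
\begin{equation}
\mathrm{h}(\vec{\mathbf{\Theta}}\,|\,\boldsymbol{J})=\sum_{l=1}^{L}q_{l}\log\big((\pi e)^{t}\det\Omega_{l}\big),
\end{equation}
which is precisely the common central quantity appearing in both sides of (\ref{polm}).

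For the lower bound I would invoke the fact that conditioning cannot increase differential entropy, so $\mathrm{h}(\vec{\mathbf{\Theta}})\geq \mathrm{h}(\vec{\mathbf{\Theta}}\,|\,\boldsymbol{J})$, which is the left inequality. For the upper bound, the chain rule yields $\mathrm{h}(\vec{\mathbf{\Theta}})=\mathrm{h}(\vec{\mathbf{\Theta}}\,|\,\boldsymbol{J})+\mathrm{I}(\vec{\mathbf{\Theta}};\boldsymbol{J})$; writing the mutual information from the discrete side gives $\mathrm{I}(\vec{\mathbf{\Theta}};\boldsymbol{J})=\mathrm{H}(\boldsymbol{J})-\mathrm{H}(\boldsymbol{J}\,|\,\vec{\mathbf{\Theta}})\leq \mathrm{H}(\boldsymbol{J})=\mathrm{H}\big((q_{l})_{l=1}^{L}\big)$, since $\mathrm{H}(\boldsymbol{J}\,|\,\vec{\mathbf{\Theta}})\geq 0$ for a discrete $\boldsymbol{J}$. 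Combining the two displays produces the right inequality.

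The only point requiring genuine care is the legitimacy and symmetry of the mixed continuous--discrete mutual information $\mathrm{I}(\vec{\mathbf{\Theta}};\boldsymbol{J})$, so that its two decompositions --- one peeling off a differential entropy, the other peeling off a discrete entropy --- agree. The crucial sign input is the non-negativity of the discrete conditional entropy $\mathrm{H}(\boldsymbol{J}\,|\,\vec{\mathbf{\Theta}})$, a property that has no differential-entropy analogue and is exactly what forces the upper bound; the asymmetry between discrete and continuous entropy is therefore the conceptual heart of the argument, while everything else is routine. Finiteness of $\mathrm{h}(\vec{\mathbf{\Theta}})$ then follows automatically from its being trapped between the two finite bounds once each $\Omega_{l}$ is nonsingular.
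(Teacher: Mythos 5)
Your proof is correct and follows essentially the same route as the paper: both introduce the mixture label as an auxiliary discrete random variable (the paper's random covariance matrix $\mathbf{\Omega}$ plays exactly the role of your $\boldsymbol{J}$), obtain the lower bound from the fact that conditioning cannot increase differential entropy, and obtain the upper bound from the chain-rule decomposition together with the non-negativity of the discrete conditional entropy of the label given $\vec{\mathbf{\Theta}}$. The only cosmetic difference is that the paper writes the upper bound as $\mathrm{h}(\vec{\mathbf{\Theta}})\leq\mathrm{h}(\vec{\mathbf{\Theta}},\mathbf{\Omega})=\mathrm{h}(\vec{\mathbf{\Theta}}|\mathbf{\Omega})+\mathrm{H}(\mathbf{\Omega})$ rather than via $\mathrm{I}(\vec{\mathbf{\Theta}};\boldsymbol{J})\leq\mathrm{H}(\boldsymbol{J})$, which is the same inequality rearranged.
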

     \begin{proof}
   Let us define the random matrix $\mathbf{\Omega}\in\{\Omega_{l}: 1\leq l\leq L\}$ such that $\Pr\{\mathbf{\Omega}=\Omega_{l}\}=q_{l}$ and let $\vec{\mathbf{\Upsilon}}$ be a zero mean Gaussian vector with covariance matrix $I_{t}$. Then, one can easily see that $\vec{\mathbf{\Theta}}=\sqrt{\mathbf{\Omega}}\vec{\mathbf{\Upsilon}}$ in which $\sqrt{\mathbf{\Omega}}$ is the conventional square root of a positive semi-definite matrix. Using the inequalities
   \begin{eqnarray}
\label{ }
\mathrm{h}(\vec{\mathbf{\Theta}}|\mathbf{\Omega})\leq \mathrm{h}(\vec{\mathbf{\Theta}})&\leq&\mathrm{h}(\vec{\mathbf{\Theta}},\mathbf{\Omega})\notag\\
&=&\mathrm{h}(\vec{\mathbf{\Theta}}|\mathbf{\Omega})+\mathrm{H}(\mathbf{\Omega})\notag\\
&=&\mathrm{h}(\vec{\mathbf{\Theta}}|\mathbf{\Omega})+\mathrm{H}((q_{l})_{l=1}^{L})
\end{eqnarray}  
and noting that $\mathrm{h}(\vec{\mathbf{\Theta}}|\mathbf{\Omega})=\sum_{l=1}^{L}q_{l}\log\left((\pi e)^{t}\det\Omega_{l}\right)$, the result is immediate.
       \end{proof}
    
      
The vector $\vec{\boldsymbol{w}}_{i}$ has a mixed Gaussian distribution where the covariance matrices of its separate Gaussian components correspond to different realizations of the matrix $\beta^{2}\gamma\sum_{j\neq i}|h_{j,i}|^{2}\boldsymbol{s}_{j}\boldsymbol{s}_{j}^{\dagger}$. This together with Lemma 2 yields
\begin{equation}
\label{bw3}
\mathrm{h}(\vec{\boldsymbol{w}}_{i}+\vec{\boldsymbol{z}}_{i})\leq\mathrm{h}(\vec{\boldsymbol{w}}_{i}+\vec{\boldsymbol{z}}_{i}|\boldsymbol{S}_{i})+\mathrm{H}\left(\sum_{j\neq i}|h_{j,i}|^{2}\boldsymbol{s}_{j}\boldsymbol{s}_{j}^{\dagger}\right)
\end{equation}
where we have used the fact that $\mathrm{H}\left(\beta^{2}\gamma\sum_{j\neq i}|h_{j,i}|^{2}\boldsymbol{s}_{j}\boldsymbol{s}_{j}^{\dagger}\right)=\mathrm{H}\left(\sum_{j\neq i}|h_{j,i}|^{2}\boldsymbol{s}_{j}\boldsymbol{s}_{j}^{\dagger}\right)$.
One has\begin{eqnarray}
\label{pork}
\mathrm{h}(\boldsymbol{\varphi}_{i}|\vec{\boldsymbol{\vartheta}}_{i})&=&\mathrm{h}(\boldsymbol{\varphi}_{i},\vec{\boldsymbol{\vartheta}}_{i})-\mathrm{h}(\vec{\boldsymbol{\vartheta}}_{i})\notag\\
&=&\mathrm{h}\left(U_{i}^{\dagger}(\vec{s})\left(\vec{\boldsymbol{w}}_{i}+\vec{\boldsymbol{z}}_{i}\right)\right)-\mathrm{h}\left(G_{i}^{\dagger}(\vec{s})\left(\vec{\boldsymbol{w}}_{i}+\vec{\boldsymbol{z}}_{i}\right)\right)\notag\\
&\stackrel{(a)}{=}&\mathrm{h}\left(\vec{\boldsymbol{w}}_{i}+\vec{\boldsymbol{z}}_{i}\right)-\mathrm{h}\left(G_{i}^{\dagger}(\vec{s})\left(\vec{\boldsymbol{w}}_{i}+\vec{\boldsymbol{z}}_{i}\right)\right)\notag\\
&\stackrel{(b)}{\leq}&\mathrm{h}\left(\vec{\boldsymbol{w}}_{i}+\vec{\boldsymbol{z}}_{i}|\boldsymbol{S}_{i}\right)+\mathrm{H}\left(\sum_{j\neq i}|h_{j,i}|^{2}\vec{\boldsymbol{s}}_{j}\vec{\boldsymbol{s}}_{j}^{\dagger}\right)\notag\\
&&-\mathrm{h}\left(G_{i}^{\dagger}(\vec{s})\left(\vec{\boldsymbol{w}}_{i}+\vec{\boldsymbol{z}}_{i}\right)\right)\notag\\
&\stackrel{(c)}{\leq}&\mathrm{h}\left(\vec{\boldsymbol{w}}_{i}+\vec{\boldsymbol{z}}_{i}|\boldsymbol{S}_{i}\right)+\mathrm{H}\left(\sum_{j\neq i}|h_{j,i}|^{2}\vec{\boldsymbol{s}}_{j}\vec{\boldsymbol{s}}_{j}^{\dagger}\right)\notag\\
&&-\mathrm{h}\left(G_{i}^{\dagger}(\vec{s})\left(\vec{\boldsymbol{w}}_{i}+\vec{\boldsymbol{z}}_{i}\right)|\boldsymbol{S}_{i}\right)\end{eqnarray}
where $(a)$ follows by the fact that the matrix $U_{i}(\vec{s})$ is unitary, i.e., $\log|\det(U_{i}(\vec{s}))|=0$, $(b)$ is by  (\ref{bw3}) and $(c)$ is a direct consequence of Lemma 2.  Having  $\boldsymbol{S}_{i}$, the vector $\vec{\boldsymbol{w}}_{i}+\vec{\boldsymbol{z}}_{i}$ is a complex Gaussian vector. Hence, 
   \begin{eqnarray}
 \label{lj1}
\mathrm{h}\left(\vec{\boldsymbol{w}}_{i}+\vec{\boldsymbol{z}}_{i}|\boldsymbol{S}_{i}\right)= K\log(\pi e)+\sum_{S\in\mathrm{supp}(\boldsymbol{S}_{i})}\Pr\{\boldsymbol{S}_{i}=S\}\log\det\left(I_{K}+\beta^{2}\gamma S\Xi_{i}\Xi_{i}^{\dagger}S^{\dagger}\right).\notag\\ \end{eqnarray}
       

  By the same token, 
    \begin{eqnarray}
  \label{lj2}
    &&\mathrm{h}\left(G_{i}^{\dagger}(\vec{s})\left(\vec{\boldsymbol{w}}_{i}+\vec{\boldsymbol{z}}_{i}\right)|\boldsymbol{S}_{i}\right)=(K-1)\log(\pi e)\notag\\&&+\sum_{\substack{S\in\mathrm{supp}(\boldsymbol{S}_{i})}}\Pr\{\boldsymbol{S}_{i}=S\}\log\det\left(I_{K-1}+\beta^{2}\gamma G_{i}^{\dagger}(\vec{s})S\Xi_{i}\Xi_{i}^{\dagger}S^{\dagger}G_{i}^{\dagger}(\vec{s})\right)\notag\\ \end{eqnarray}  
    Using (\ref{lj1}) and (\ref{lj2}) in (\ref{pork}), 
    \begin{eqnarray}
&&\mathrm{h}(\boldsymbol{\varphi}_{i}|\vec{\boldsymbol{\vartheta}}_{i})\leq \log(\pi e)+\mathrm{H}\left(\sum_{j\neq i}|h_{j,i}|^{2}\vec{\boldsymbol{s}}_{j}\vec{\boldsymbol{s}}_{j}^{\dagger}\right)\notag\\
&&+\sum_{S\in\mathrm{supp}(\boldsymbol{S}_{i})}\Pr\{\boldsymbol{S}_{i}=S\}\log\det\left(I_{K}+\beta^{2}\gamma S\Xi_{i}\Xi_{i}^{\dagger}S^{\dagger}\right)\notag\\
&&-\sum_{\substack{S\in\mathrm{supp}(\boldsymbol{S}_{i})}}\Pr\{\boldsymbol{S}_{i}=S\}\log\det\left(I_{K-1}+\beta^{2}\gamma G_{i}^{\dagger}(\vec{s})S\Xi_{i}\Xi_{i}^{\dagger}S^{\dagger}G_{i}^{\dagger}(\vec{s})\right).\end{eqnarray}
Moreover, $\mathrm{h}\big(\beta h_{i,i}\|\vec{s}\|_{2}\boldsymbol{x}_{i}\big)=\log\left(\pi e\beta^{2}|h_{i,i}|^{2}\|\vec{s}\|_{2}^{2}\gamma\right)$. Hence, $\mathrm{h}\big(\beta h_{i,i}\|\vec{s}\|_{2}\boldsymbol{x}_{i}\big)-\mathrm{h}\left(\boldsymbol{\varphi}_{i}|\vec{\boldsymbol{\vartheta}}_{i}\right)$ appearing in (\ref{goosht}) can be bounded from below as
\begin{eqnarray}
\label{bone}
&&\mathrm{h}\big(\beta h_{i,i}\|\vec{s}\|_{2}\boldsymbol{x}_{i}\big)-\mathrm{h}\left(\boldsymbol{\varphi}_{i}|\vec{\boldsymbol{\vartheta}}_{i}\right)\geq\log\left(\beta^{2}|h_{i,i}|^{2}\|\vec{s}\|_{2}^{2}\gamma\right)-\mathrm{H}\left(\sum_{j\neq i}|h_{j,i}|^{2}\vec{\boldsymbol{s}}_{j}\vec{\boldsymbol{s}}_{j}^{\dagger}\right)\notag\\
&&-\sum_{S\in\mathrm{supp}(\boldsymbol{S}_{i})}\Pr\{\boldsymbol{S}_{i}=S\}\log\det\left(I_{K}+\beta^{2}\gamma S\Xi_{i}\Xi_{i}^{\dagger}S^{\dagger}\right)\notag\\
&&+\sum_{\substack{S\in\mathrm{supp}(\boldsymbol{S}_{i})}}\Pr\{\boldsymbol{S}_{i}=S\}\log\det\left(I_{K-1}+\beta^{2}\gamma G_{i}^{\dagger}(\vec{s})S\Xi_{i}\Xi_{i}^{\dagger}S^{\dagger}G_{i}^{\dagger}(\vec{s})\right).\end{eqnarray}
 Substituting (\ref{bone}) in (\ref{goosht}), 
\begin{eqnarray}
\mathrm{I}(\boldsymbol{x}_{i};\vec{\boldsymbol{y}}_{i}|\vec{\boldsymbol{s}}_{i}=\vec{s})\geq\log\left(2^{-\mathrm{H}\left(\sum_{j\neq i}|h_{j,i}|^{2}\vec{\boldsymbol{s}}_{j}\vec{\boldsymbol{s}}_{j}^{\dagger}\right)}\varrho_{i}(\gamma;\vec{s})+1\right)\end{eqnarray}
where
\begin{eqnarray}
\label{booh}
\varrho_{i}(\gamma;\vec{s})\triangleq \frac{|h_{i,i}|^{2}\|\vec{s}\|_{2}^{2}\gamma}{\mathrm{E}\{\|\vec{\boldsymbol{s}}_{i}\|_{2}^{2}\}}\prod_{\substack{S\in\mathrm{supp}(\boldsymbol{S}_{i})}}\left(\frac{\det\left(I_{K-1}+\beta^{2}\gamma G_{i}^{\dagger}(\vec{s})S\Xi_{i}\Xi_{i}^{\dagger}S^{\dagger}G_{i}^{\dagger}(\vec{s})\right)}{\det\left(I_{K}+\beta^{2}\gamma S\Xi_{i}\Xi_{i}^{\dagger}S^{\dagger}\right)}\right)^{\Pr\{\boldsymbol{S}_{i}=S\}}.\end{eqnarray}
Finally, we get the following lower bound on $\frac{\mathrm{I}(\boldsymbol{x}_{i};\vec{\boldsymbol{y}}_{i}|\vec{\boldsymbol{s}}_{i})}{K}$ denoted by $\mathsf{C}_{i}^{(\mathrm{lb})}(\vec{h}_{i})$, i.e., 
\begin{equation}
\label{fuv}
\mathsf{C}_{i}^{(\mathrm{lb})}(\vec{h}_{i})\triangleq\frac{1}{K}\sum_{\vec{s}\in\mathrm{supp}(\vec{\boldsymbol{s}}_{i})\backslash\{0_{K\times 1}\}}\Pr\{\vec{\boldsymbol{s}}_{i}=\vec{s}\}\log\left(2^{-\mathrm{H}\left(\sum_{j\neq i}|h_{j,i}|^{2}\vec{\boldsymbol{s}}_{j}\vec{\boldsymbol{s}}_{j}^{\dagger}\right)}\varrho_{i}(\gamma;\vec{s})+1\right).\end{equation} 
 An important observation is that if the $i^{th}$ user sets its transmission rate at $R_{i}=\mathsf{C}_{i}^{(\mathrm{lb})}(\vec{h}_{i})$, then
 \begin{equation}\lim_{\gamma\to\infty}\frac{R_{i}}{\log\gamma}=\frac{\Pr\{\vec{\boldsymbol{s}}_{i}\notin\mathrm{csp}(\boldsymbol{S}_{i})\}}{K}.\end{equation} To prove this, we need some preliminary results in linear analysis. 

\textit{Definition 2}- Let $\mathscr{E}$ be an Euclidean space over $\mathbb{R}$ and $\mathscr{U}$ be a subspace of $\mathscr{E}$. We define
\begin{equation}
\label{ }
\mathscr{U}^{\perp}\triangleq\{v\in \mathscr{E}: v\perp u, \forall u\in\mathscr{U}\}.
\end{equation}
\begin{lem}
Let $\mathscr{E}$ be an Euclidean space over $\mathbb{R}$. If $\mathscr{U}$ is a subspace of $\mathscr{E}$, then for each $v\in\mathscr{E}$, there are unique elements $v_{1}\in\mathscr{U}$ and $v_{2}\in\mathscr{U}^{\perp}$ such that $v=v_{1}+v_{2}$.  
\end{lem}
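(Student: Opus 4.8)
The plan is to prove existence and uniqueness separately, building the decomposition explicitly from an orthonormal basis of $\mathscr{U}$. The setting is a finite-dimensional inner product space over $\mathbb{R}$, so the key enabling fact is that $\mathscr{U}$ admits an orthonormal basis, obtained by applying Gram-Schmidt to any basis of $\mathscr{U}$.

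For existence, I would fix an orthonormal basis $e_{1},\ldots,e_{k}$ of $\mathscr{U}$, define the orthogonal projection $v_{1}\triangleq\sum_{j=1}^{k}\langle v,e_{j}\rangle e_{j}$, and set $v_{2}\triangleq v-v_{1}$. Then $v_{1}\in\mathscr{U}$ by construction, and I would verify $v_{2}\in\mathscr{U}^{\perp}$ by computing $\langle v_{2},e_{l}\rangle=\langle v,e_{l}\rangle-\langle v,e_{l}\rangle=0$ for each $l$, using orthonormality of the $e_{j}$. Since every $u\in\mathscr{U}$ is a linear combination of the $e_{l}$, bilinearity of the inner product upgrades this to $\langle v_{2},u\rangle=0$ for all $u\in\mathscr{U}$, which is precisely the statement $v_{2}\in\mathscr{U}^{\perp}$ in the sense of Definition 2, so $v=v_{1}+v_{2}$ is the desired decomposition.

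For uniqueness, I would suppose two decompositions $v=v_{1}+v_{2}=v_{1}'+v_{2}'$ with $v_{1},v_{1}'\in\mathscr{U}$ and $v_{2},v_{2}'\in\mathscr{U}^{\perp}$, and rearrange to $v_{1}-v_{1}'=v_{2}'-v_{2}$. The left side lies in $\mathscr{U}$ and the right side in $\mathscr{U}^{\perp}$, so their common value $w$ lies in $\mathscr{U}\cap\mathscr{U}^{\perp}$. The crux is then that $\mathscr{U}\cap\mathscr{U}^{\perp}=\{0\}$: any such $w$ is orthogonal to itself, so $\langle w,w\rangle=0$, and positive-definiteness of the inner product forces $w=0$, yielding $v_{1}=v_{1}'$ and $v_{2}=v_{2}'$.

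There is no substantial obstacle here; the only point deserving attention is the passage from orthogonality against the basis vectors to orthogonality against all of $\mathscr{U}$, which is immediate by linearity. The finite-dimensionality that guarantees an orthonormal basis makes the projection construction fully explicit, so both halves reduce to routine verifications.
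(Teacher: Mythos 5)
Your proof is correct and complete: the orthonormal-basis projection construction for existence and the observation that $\mathscr{U}\cap\mathscr{U}^{\perp}=\{0\}$ (via positive-definiteness of the inner product) for uniqueness are exactly what is needed, and the passage from orthogonality against basis vectors to orthogonality against all of $\mathscr{U}$ is handled properly. For comparison, the paper itself states this lemma \emph{without} proof, treating it as a standard preliminary fact of linear analysis (the orthogonal decomposition theorem, covered by the cited text of Bollab\'as), so there is no in-paper argument to measure yours against; what you have written is the standard textbook proof that such a reference would supply.
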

\textit{Definition 3}- In the setup of Lemma 4, $v_{1}$ is called the projection of $v$ in $\mathscr{U}$ and is denoted by $\mathrm{proj}(v;\mathscr{U})$. By the same token, $v_{2}=\mathrm{proj}(v;\mathscr{U}^{\perp})$.

\textit{Definition 4}- Let $\mathscr{E}$ be an Euclidean space over $\mathbb{R}$ and $\mathscr{U}_{1}$ and $\mathscr{U}_{2}$ be subspaces of $\mathscr{E}$. We define
\begin{equation}
\label{ }
\mathrm{proj}(\mathscr{U}_{1};\mathscr{U}_{2})\triangleq \mathrm{span}\{\mathrm{proj}(v;\mathscr{U}_{2}) : v\in\mathscr{U}_{1}\}.
\end{equation}  
 \begin{lem}
Let $\mathscr{E}$ be an Euclidean vector space over $\mathbb{R}$ and $\mathscr{U}_{1}$ and $\mathscr{U}_{2}$ be subspaces of $\mathscr{E}$. Then,
\begin{equation}
\label{ }
\mathrm{dim}(\mathscr{U}_{1}\cup\mathscr{U}_{2})=\mathrm{dim}(\mathscr{U}_{1})+\mathrm{dim}(\mathrm{proj}(\mathscr{U}_{2};\mathscr{U}_{1}^{\perp})).
\end{equation}
\end{lem}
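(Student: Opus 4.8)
The plan is to read $\dim(\mathscr{U}_{1}\cup\mathscr{U}_{2})$ as the dimension of the smallest subspace containing $\mathscr{U}_{1}\cup\mathscr{U}_{2}$, namely the sum $\mathscr{U}_{1}+\mathscr{U}_{2}$; this is the only sensible interpretation, since the set-theoretic union of two subspaces is generally not a subspace. The whole claim then reduces to establishing the orthogonal direct-sum decomposition
\begin{equation}
\mathscr{U}_{1}+\mathscr{U}_{2}=\mathscr{U}_{1}\oplus\mathrm{proj}(\mathscr{U}_{2};\mathscr{U}_{1}^{\perp}),
\end{equation}
after which the desired identity follows at once, because the dimension of a direct sum is the sum of the dimensions.

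First I would check that the sum on the right is genuinely direct. By Definition 4, $\mathrm{proj}(\mathscr{U}_{2};\mathscr{U}_{1}^{\perp})$ is spanned by vectors of the form $\mathrm{proj}(v;\mathscr{U}_{1}^{\perp})$ with $v\in\mathscr{U}_{2}$, each of which lies in $\mathscr{U}_{1}^{\perp}$; hence $\mathrm{proj}(\mathscr{U}_{2};\mathscr{U}_{1}^{\perp})\subseteq\mathscr{U}_{1}^{\perp}$. Since a vector orthogonal to itself is zero, $\mathscr{U}_{1}\cap\mathscr{U}_{1}^{\perp}=\{0\}$, and therefore $\mathscr{U}_{1}\cap\mathrm{proj}(\mathscr{U}_{2};\mathscr{U}_{1}^{\perp})=\{0\}$, so the sum is direct (indeed orthogonal).

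Next I would prove the set equality by double inclusion, using the orthogonal splitting of Lemma 4: every $v\in\mathscr{E}$ decomposes uniquely as $v=\mathrm{proj}(v;\mathscr{U}_{1})+\mathrm{proj}(v;\mathscr{U}_{1}^{\perp})$ with summands in $\mathscr{U}_{1}$ and $\mathscr{U}_{1}^{\perp}$ respectively. For $\mathscr{U}_{1}+\mathscr{U}_{2}\subseteq\mathscr{U}_{1}\oplus\mathrm{proj}(\mathscr{U}_{2};\mathscr{U}_{1}^{\perp})$, I take an arbitrary $v\in\mathscr{U}_{2}$ and apply this splitting: the first summand lies in $\mathscr{U}_{1}$ and the second is a generator of $\mathrm{proj}(\mathscr{U}_{2};\mathscr{U}_{1}^{\perp})$, so $v$ belongs to the right-hand side, and trivially so does all of $\mathscr{U}_{1}$. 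For the reverse inclusion, $\mathscr{U}_{1}\subseteq\mathscr{U}_{1}+\mathscr{U}_{2}$ is immediate, while for each $v\in\mathscr{U}_{2}$ we have $\mathrm{proj}(v;\mathscr{U}_{1}^{\perp})=v-\mathrm{proj}(v;\mathscr{U}_{1})\in\mathscr{U}_{2}+\mathscr{U}_{1}$; as $\mathscr{U}_{1}+\mathscr{U}_{2}$ is a subspace, it contains the span of all such generators, which is exactly $\mathrm{proj}(\mathscr{U}_{2};\mathscr{U}_{1}^{\perp})$. Taking dimensions of the verified direct sum closes the argument.

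The only point needing care — and the nearest thing to an obstacle — is the bookkeeping that $\mathrm{proj}(\mathscr{U}_{2};\mathscr{U}_{1}^{\perp})$ is defined as a \emph{span} rather than a plain image, so one must confirm that containment of each generator passes to their span; this is automatic since both sides are subspaces. Equivalently, the whole lemma can be phrased through the rank--nullity theorem applied to the orthogonal projection $P_{\mathscr{U}_{1}^{\perp}}$ restricted to $\mathscr{U}_{2}$: its image is $\mathrm{proj}(\mathscr{U}_{2};\mathscr{U}_{1}^{\perp})$ and its kernel is precisely $\mathscr{U}_{1}\cap\mathscr{U}_{2}$, which gives $\dim\mathrm{proj}(\mathscr{U}_{2};\mathscr{U}_{1}^{\perp})=\dim\mathscr{U}_{2}-\dim(\mathscr{U}_{1}\cap\mathscr{U}_{2})$; combined with the standard identity $\dim(\mathscr{U}_{1}+\mathscr{U}_{2})=\dim\mathscr{U}_{1}+\dim\mathscr{U}_{2}-\dim(\mathscr{U}_{1}\cap\mathscr{U}_{2})$, this yields the stated formula.
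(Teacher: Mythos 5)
Your proof is correct, but there is nothing in the paper to measure it against: this lemma is one of the ``preliminary results in linear analysis'' (stated alongside the orthogonal-decomposition lemma and the rank lemma) and the paper gives no proof of it at all, the implicit source being the cited linear-analysis text \cite{Bol}. So your argument fills a gap rather than retraces one. Two points in its favour. First, your reading of $\mathrm{dim}(\mathscr{U}_{1}\cup\mathscr{U}_{2})$ as $\mathrm{dim}(\mathscr{U}_{1}+\mathscr{U}_{2})$ is the intended one: it is exactly how the lemma is used in (\ref{kom}) and (\ref{bzz6}), where $\mathrm{rank}([\boldsymbol{S}_i|\vec{\boldsymbol{s}}_i])$ is identified with the dimension of the subspace generated by $\mathrm{span}(\vec{\boldsymbol{s}}_i)\cup\mathrm{csp}(\boldsymbol{S}_{i})$. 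Second, both of your routes are sound and complete: the orthogonal direct-sum decomposition $\mathscr{U}_{1}+\mathscr{U}_{2}=\mathscr{U}_{1}\oplus\mathrm{proj}(\mathscr{U}_{2};\mathscr{U}_{1}^{\perp})$, with directness coming from $\mathrm{proj}(\mathscr{U}_{2};\mathscr{U}_{1}^{\perp})\subseteq\mathscr{U}_{1}^{\perp}$ and equality from double inclusion via the splitting $v=\mathrm{proj}(v;\mathscr{U}_{1})+\mathrm{proj}(v;\mathscr{U}_{1}^{\perp})$ of the paper's decomposition lemma; and the rank--nullity rephrasing, whose key observation --- that the span in Definition 4 is redundant because the image of the subspace $\mathscr{U}_{2}$ under the linear projection onto $\mathscr{U}_{1}^{\perp}$ is already a subspace, and the kernel of that restricted map is $\mathscr{U}_{1}\cap\mathscr{U}_{2}$ --- you state explicitly and correctly, so that the Grassmann identity $\mathrm{dim}(\mathscr{U}_{1}+\mathscr{U}_{2})=\mathrm{dim}(\mathscr{U}_{1})+\mathrm{dim}(\mathscr{U}_{2})-\mathrm{dim}(\mathscr{U}_{1}\cap\mathscr{U}_{2})$ closes the argument. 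The one caveat worth recording is that your proof (like the unproved decomposition lemma it leans on) requires $\mathscr{U}_{1}$ to be finite-dimensional or closed so that $\mathrm{proj}(\,\cdot\,;\mathscr{U}_{1})$ exists; since the paper only ever applies the lemma to column spaces inside $\mathbb{R}^{K}$ or $\mathbb{C}^{K}$, this costs nothing.
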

\begin{lem}
Let $X$ be a $p\times q$ matrix such that $\mathrm{rank}(X)=q$. Then, for any $q\times r$ matrix $Y$, we have $\mathrm{rank}(XY)=\mathrm{rank}(Y)$.
\end{lem}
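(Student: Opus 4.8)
The plan is to convert the rank hypothesis on $X$ into an injectivity statement and then compare null spaces. Since $X$ is $p\times q$ with $\mathrm{rank}(X)=q$, its $q$ columns are linearly independent, so the linear map $\vec{v}\mapsto X\vec{v}$ has trivial kernel on $\mathbb{R}^{q}$; equivalently, $X\vec{v}=\vec{0}$ forces $\vec{v}=\vec{0}$. This is the only place the hypothesis enters, and recording it cleanly is what makes the rest routine.

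First I would establish the identity $\mathrm{null}(XY)=\mathrm{null}(Y)$, regarding both $XY$ (a $p\times r$ matrix) and $Y$ (a $q\times r$ matrix) as acting on $\mathbb{R}^{r}$. The inclusion $\mathrm{null}(Y)\subseteq\mathrm{null}(XY)$ is immediate, because $Y\vec{u}=\vec{0}$ gives $XY\vec{u}=\vec{0}$. For the reverse inclusion, take any $\vec{u}$ with $XY\vec{u}=\vec{0}$, rewrite this as $X(Y\vec{u})=\vec{0}$, and apply the injectivity of $X$ to the vector $Y\vec{u}\in\mathbb{R}^{q}$ to conclude $Y\vec{u}=\vec{0}$, i.e.\ $\vec{u}\in\mathrm{null}(Y)$.

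Finally I would invoke the rank--nullity theorem for the two matrices, both of which have $r$ columns: $\mathrm{rank}(XY)+\dim\mathrm{null}(XY)=r$ and $\mathrm{rank}(Y)+\dim\mathrm{null}(Y)=r$. Since the two null spaces coincide, their dimensions agree, and subtracting the two identities yields $\mathrm{rank}(XY)=\mathrm{rank}(Y)$, as claimed.

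This argument has essentially no hard step; the only point requiring care is the correct reading of $\mathrm{rank}(X)=q$ as full \emph{column} rank (hence injectivity of $X$), together with keeping track of the domains $\mathbb{R}^{q}$ and $\mathbb{R}^{r}$ on which the various maps act. An equivalent route avoids null spaces entirely: the column space of $XY$ equals the image $X\bigl(\mathrm{csp}(Y)\bigr)$ of the column space of $Y$ under $X$, and an injective linear map preserves the dimension of any subspace, so $\mathrm{rank}(XY)=\dim\mathrm{csp}(Y)=\mathrm{rank}(Y)$. Either version is short, so I would present the null-space form as the cleanest and note the column-space form as the intuitive reason the statement is true.
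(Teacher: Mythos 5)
Your proof is correct. The paper itself states this lemma without proof---it is listed among the ``preliminary results in linear analysis'' (alongside the projection lemmas, with a pointer to a standard linear-analysis reference) and is then invoked at the step $\mathrm{rank}(\boldsymbol{G}_{i}(\vec{\boldsymbol{s}}_{i})\boldsymbol{G}_{i}^{\dagger}(\vec{\boldsymbol{s}}_{i})\boldsymbol{S}_{i})=\mathrm{rank}(\boldsymbol{G}_{i}^{\dagger}(\vec{\boldsymbol{s}}_{i})\boldsymbol{S}_{i})$---so there is no in-paper argument to compare yours against. Your null-space argument (full column rank of $X$ gives injectivity, hence $\mathrm{null}(XY)=\mathrm{null}(Y)$, then rank--nullity) is the standard and complete justification, and it carries over verbatim to complex matrices with $\mathbb{C}^{q}$, $\mathbb{C}^{r}$ in place of $\mathbb{R}^{q}$, $\mathbb{R}^{r}$, which is the setting in which the paper actually applies the lemma (the matrices there are built from signature codes and conjugate transposes). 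The column-space formulation you mention at the end is exactly the intuition behind the paper's usage, since the lemma is applied to strip off a left factor with orthonormal (hence independent) columns without changing rank.
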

\begin{proposition}
Regulating its transmission rate at $\mathsf{C}_{i}^{(\mathrm{lb})}(\vec{h}_{i})$, the $i^{th}$ user achieves an SNR scaling of \begin{equation}
\label{}
\lim_{\gamma\to\infty}\frac{\mathsf{C}_{i}^{(\mathrm{lb})}(\vec{h}_{i})}{\log\gamma}=\frac{\Pr\{\vec{\boldsymbol{s}}_{i}\notin\mathrm{csp}(\boldsymbol{S}_{i})\}}{K}.\end{equation}
\end{proposition}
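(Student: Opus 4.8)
The plan is to pin down the exponent of $\gamma$ hidden in $\varrho_i(\gamma;\vec{s})$ and then read off the limit term by term. First I would observe that in (\ref{fuv}) the factor $2^{-\mathrm{H}(\sum_{j\neq i}|h_{j,i}|^{2}\vec{\boldsymbol{s}}_{j}\vec{\boldsymbol{s}}_{j}^{\dagger})}$ is a positive constant in $\gamma$ and the additive $1$ is harmless: if $\varrho_i(\gamma;\vec{s})\sim\gamma^{\alpha(\vec{s})}$ with $\alpha(\vec{s})\geq 0$, then for $\alpha(\vec{s})>0$ the argument of the logarithm diverges like $\gamma^{\alpha(\vec{s})}$, while for $\alpha(\vec{s})=0$ it converges to a finite constant exceeding $1$; in either case $\frac{1}{\log\gamma}\log(2^{-\mathrm{H}(\cdots)}\varrho_i(\gamma;\vec{s})+1)\to\alpha(\vec{s})$. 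Hence the whole proposition reduces to establishing $\varrho_i(\gamma;\vec{s})\sim\gamma^{\Pr\{\vec{s}\notin\mathrm{csp}(\boldsymbol{S}_i)\}}$ for each fixed $\vec{s}$, after which summing over $\vec{s}$ closes the argument.

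The workhorse is the elementary asymptotic $\det(I+\gamma M)\sim\gamma^{\mathrm{rank}(M)}$, valid for any fixed positive semidefinite $M$ (each zero eigenvalue contributes a factor tending to $1$, each positive eigenvalue a factor growing linearly). Since $S$ and $\Xi_i$ do not depend on $\gamma$, applying this to the two determinants in (\ref{booh}) gives $\frac{\det(I_{K-1}+\beta^{2}\gamma G_i^{\dagger}(\vec{s})S\Xi_i\Xi_i^{\dagger}S^{\dagger}G_i(\vec{s}))}{\det(I_K+\beta^{2}\gamma S\Xi_i\Xi_i^{\dagger}S^{\dagger})}\sim\gamma^{r_{\mathrm{num}}(S)-r_{\mathrm{den}}(S)}$, where $r_{\mathrm{den}}(S)=\mathrm{rank}(S\Xi_i\Xi_i^{\dagger}S^{\dagger})$ and $r_{\mathrm{num}}(S)=\mathrm{rank}(G_i^{\dagger}(\vec{s})S\Xi_i\Xi_i^{\dagger}S^{\dagger}G_i(\vec{s}))$. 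Because the gains are realizations of continuous random variables, $\Xi_i$ is almost surely invertible, so using $\mathrm{rank}(AA^{\dagger})=\mathrm{rank}(A)$ together with Lemma 6 to strip off the invertible factor $\Xi_i$ yields $r_{\mathrm{den}}(S)=\mathrm{rank}(S)=\dim\mathrm{csp}(S)$ and $r_{\mathrm{num}}(S)=\mathrm{rank}(G_i^{\dagger}(\vec{s})S)$. As the columns of $G_i(\vec{s})$ form an orthonormal basis of $\vec{s}^{\perp}$, the columns of $G_i^{\dagger}(\vec{s})S$ are the coordinates of the projections of the columns of $S$ onto $\vec{s}^{\perp}$, so $r_{\mathrm{num}}(S)=\dim\mathrm{proj}(\mathrm{csp}(S);\vec{s}^{\perp})$.

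The geometric heart, and the step I expect to be the main obstacle, is turning these two dimensions into the indicator $\mathbb{1}_{\vec{s}\in\mathrm{csp}(S)}$. Here I would invoke Lemma 5 twice on the subspaces $\mathscr{U}_1=\mathrm{span}\{\vec{s}\}$ and $\mathscr{U}_2=\mathrm{csp}(S)$. Reading it with $\mathscr{U}_1$ in the leading role gives $\dim(\mathscr{U}_1\cup\mathscr{U}_2)=1+\dim\mathrm{proj}(\mathrm{csp}(S);\vec{s}^{\perp})=1+r_{\mathrm{num}}(S)$, while reading it with $\mathscr{U}_2$ leading gives $\dim(\mathscr{U}_1\cup\mathscr{U}_2)=\dim\mathrm{csp}(S)+\dim\mathrm{proj}(\mathrm{span}\{\vec{s}\};\mathrm{csp}(S)^{\perp})=r_{\mathrm{den}}(S)+\mathbb{1}_{\vec{s}\notin\mathrm{csp}(S)}$, the last term being $1$ or $0$ exactly according to whether the projection of $\vec{s}$ onto $\mathrm{csp}(S)^{\perp}$ is nonzero. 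Equating the two expressions gives $r_{\mathrm{num}}(S)-r_{\mathrm{den}}(S)=-\mathbb{1}_{\vec{s}\in\mathrm{csp}(S)}$, so the $S$-th factor scales like $\gamma^{-\mathbb{1}_{\vec{s}\in\mathrm{csp}(S)}}$.

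It remains to assemble. Raising each factor to $\Pr\{\boldsymbol{S}_i=S\}$ and multiplying over $S\in\mathrm{supp}(\boldsymbol{S}_i)$ converts the product in (\ref{booh}) into $\gamma^{-\sum_{S}\Pr\{\boldsymbol{S}_i=S\}\mathbb{1}_{\vec{s}\in\mathrm{csp}(S)}}=\gamma^{-\Pr\{\vec{s}\in\mathrm{csp}(\boldsymbol{S}_i)\}}$, and multiplying by the leading linear factor $\frac{|h_{i,i}|^{2}\|\vec{s}\|_{2}^{2}\gamma}{\mathrm{E}\{\|\vec{\boldsymbol{s}}_i\|_{2}^{2}\}}$ gives $\varrho_i(\gamma;\vec{s})\sim\gamma^{1-\Pr\{\vec{s}\in\mathrm{csp}(\boldsymbol{S}_i)\}}=\gamma^{\Pr\{\vec{s}\notin\mathrm{csp}(\boldsymbol{S}_i)\}}$, i.e. $\alpha(\vec{s})=\Pr\{\vec{s}\notin\mathrm{csp}(\boldsymbol{S}_i)\}$. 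By the reduction in the first paragraph each summand of (\ref{fuv}) then satisfies $\frac{1}{\log\gamma}\log(2^{-\mathrm{H}(\cdots)}\varrho_i(\gamma;\vec{s})+1)\to\Pr\{\vec{s}\notin\mathrm{csp}(\boldsymbol{S}_i)\}$, and since this limit is bounded it passes through the finite weighted sum. Finally, independence of $\vec{\boldsymbol{s}}_i$ and $\boldsymbol{S}_i$ gives $\sum_{\vec{s}}\Pr\{\vec{\boldsymbol{s}}_i=\vec{s}\}\Pr\{\vec{s}\notin\mathrm{csp}(\boldsymbol{S}_i)\}=\Pr\{\vec{\boldsymbol{s}}_i\notin\mathrm{csp}(\boldsymbol{S}_i)\}$, which after dividing by $K$ yields the stated limit.
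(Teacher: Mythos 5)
Your proof is correct and takes essentially the same route as the paper's: the determinant--rank scaling $\log\det(I+\gamma M)=\mathrm{rank}(M)\log\gamma+O(1)$, the reading of $\mathrm{rank}(G_i^{\dagger}(\vec{s})S)$ as $\dim\mathrm{proj}(\mathrm{csp}(S);(\mathrm{span}(\vec{s}))^{\perp})$, and the subspace dimension formula are exactly the paper's ingredients, and your two applications of that formula amount to the paper's almost-sure identity $\mathbb{1}_{\{\vec{s}\neq 0_{K\times 1}\}}+\mathrm{rank}(G_i^{\dagger}(\vec{s})S)=\mathrm{rank}([S|\vec{s}])$ followed by taking expectations. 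The only slips are citational, not mathematical: what you call Lemma 5 (the dimension formula) is the paper's Lemma 4, and what you call Lemma 6 (rank of a product) is the paper's Lemma 5.
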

\begin{proof}
Using the fact that for any matrix $X$, $\mathrm{rank}(XX^{\dagger})=\mathrm{rank}(X)$, it is easy to see that for any $\vec{s}\in\mathrm{supp}(\vec{\boldsymbol{s}}_{i})$ and $S\in\mathrm{supp}(\boldsymbol{S}_{i})$, we have $\log\det\left(I_{K-1}+\beta^{2}\gamma G_{i}^{\dagger}(\vec{s})S\Xi_{i}\Xi_{i}^{\dagger}S^{\dagger}G_{i}^{\dagger}(\vec{s})\right)$ scales like $\mathrm{rank}(G_{i}^{\dagger}S)\log\gamma$ and $\log\det\left(I_{K}+\beta^{2}\gamma S\Xi_{i}\Xi_{i}^{\dagger}S^{\dagger}\right)$ scales like $\mathrm{rank}(S)\log\gamma$. This yields
\begin{eqnarray}
\label{bghm}
\lim_{\gamma\to\infty}\frac{\mathsf{C}_{i}^{(\mathrm{lb})}(\vec{h}_{i})}{\log\gamma}&=&\sum_{\vec{s}\in\mathrm{supp}(\vec{\boldsymbol{s}}_{i})\backslash\{0_{K\times 1}\}}\Pr\{\vec{\boldsymbol{s}}_{i}=\vec{s}\}
\notag\\&&+\sum_{\substack{S\in\mathrm{supp}(\boldsymbol{S}_{i})\\\vec{s}\in\mathrm{supp}(\vec{\boldsymbol{s}}_{i})\backslash\{0_{K\times 1}\}}}\Pr\{\boldsymbol{S}_{i}=S\}\Pr\{\vec{\boldsymbol{s}}_{i}=\vec{s}\}\mathrm{rank}(G_{i}^{\dagger}(\vec{s})S)\notag\\
&&-\sum_{\substack{S\in\mathrm{supp}(\boldsymbol{S}_{i})\\\vec{s}\in\mathrm{supp}(\vec{\boldsymbol{s}}_{i})\backslash\{0_{K\times 1}\}}}\Pr\{\boldsymbol{S}_{i}=S\}\Pr\{\vec{\boldsymbol{s}}_{i}=\vec{s}\}\mathrm{rank}(S)\notag\\
&=&\Pr\{\vec{\boldsymbol{s}}_{i}\neq0_{K\times 1}\}\notag\\&&+\sum_{\substack{S\in\mathrm{supp}(\boldsymbol{S}_{i})\\\vec{s}\in\mathrm{supp}(\vec{\boldsymbol{s}}_{i})}}\Pr\{\boldsymbol{S}_{i}=S\}\Pr\{\vec{\boldsymbol{s}}_{i}=\vec{s}\}\mathrm{rank}(G_{i}^{\dagger}(\vec{s})S)\notag\\
&&-\sum_{S\in\mathrm{supp}(\boldsymbol{S}_{i})}\Pr\{\boldsymbol{S}_{i}=S\}\Pr\{\vec{\boldsymbol{s}}_{i}=0_{K\times 1}\}\mathrm{rank}(G_{i}^{\dagger}(0_{K\times 1})S)\notag\\&&-\mathrm{E}\{\mathrm{rank}(\boldsymbol{S}_{i})\}\Pr\{\vec{\boldsymbol{s}}_{i}\neq 0_{K\times 1}\}\notag\\
&\stackrel{(a)}{=}&\Pr\{\vec{\boldsymbol{s}}_{i}\neq0_{K\times 1}\}\notag\\&&+\mathrm{E}\left\{\mathrm{rank}(\boldsymbol{G}_{i}^{\dagger}(\vec{\boldsymbol{s}}_{i})\boldsymbol{S}_{i})\right\}\notag\\
&&-\mathrm{E}\{\mathrm{rank}(\boldsymbol{S}_{i})\}\Pr\{\vec{\boldsymbol{s}}_{i}= 0_{K\times 1}\}\notag\\&&-\mathrm{E}\{\mathrm{rank}(\boldsymbol{S}_{i})\}\Pr\{\vec{\boldsymbol{s}}_{i}\neq 0_{K\times 1}\}\notag\\&=&\Pr\{\vec{\boldsymbol{s}}_{i}\neq0_{K\times 1}\}+\mathrm{E}\left\{\mathrm{rank}(\boldsymbol{G}_{i}^{\dagger}(\vec{\boldsymbol{s}}_{i})\boldsymbol{S}_{i})-\mathrm{rank}(\boldsymbol{S}_{i})\right\}\end{eqnarray}
where $(a)$ is by the fact that $G_{i}(0_{K\times 1})=I_{K}$.
 We  show that
\begin{equation}  
\label{ }
\mathbb{1}_{\{\vec{\boldsymbol{s}}_{i}\neq0_{K\times 1}\}}+\mathrm{rank}(\boldsymbol{G}_{i}^{\dagger}(\vec{\boldsymbol{s}}_{i})\boldsymbol{S}_{i})=\mathrm{rank}\left([\boldsymbol{S}_i|\vec{\boldsymbol{s}}_i]\right)\end{equation}
holds almost surely.  

Let us write
\begin{equation}
\label{kom}
\mathrm{rank}([\boldsymbol{S}_i|\vec{\boldsymbol{s}}_i])=\mathrm{dim}(\mathrm{span}(\vec{\boldsymbol{s}}_i)\cup\mathrm{csp}(\boldsymbol{S}_{i})).
\end{equation}
Using this in Lemma 4, 
\begin{eqnarray}
\label{bzz6}
\mathrm{rank}([\boldsymbol{S}_i|\vec{\boldsymbol{s}}_i])&=&\mathrm{dim}(\mathrm{span}(\vec{\boldsymbol{s}}_i))+\mathrm{dim}(\mathrm{proj}(\mathrm{csp}(\boldsymbol{S}_{i});(\mathrm{span}(\vec{\boldsymbol{s}}_i))^{\perp}))\notag\\
&=&\mathbb{1}_{\{\vec{\boldsymbol{s}}_{i}\neq0_{K\times 1}\}}+\mathrm{dim}(\mathrm{proj}(\mathrm{csp}(\boldsymbol{S}_{i});(\mathrm{span}(\vec{\boldsymbol{s}}_i))^{\perp})).\end{eqnarray}
On the other hand, by the definition of $\boldsymbol{G}_{i}(\vec{\boldsymbol{s}}_{i})$, 
\begin{equation}
\label{bzz4}(\mathrm{span}(\vec{\boldsymbol{s}}_i))^{\perp}=\mathrm{csp}(\boldsymbol{G}_{i}(\vec{\boldsymbol{s}}_{i})).\end{equation} It is easily seen that for any $1\leq k\leq K-1$, the $k^{th}$ column of the matrix $\boldsymbol{G}_{i}^{\dagger}(\vec{\boldsymbol{s}}_{i})\boldsymbol{s}_{i}$ yields the proper linear combination of the columns of $\boldsymbol{G}_{i}(\vec{\boldsymbol{s}}_{i})$ which constructs the projection of the $k^{th}$ column of $\boldsymbol{s}_{i}$ into the space $\mathrm{csp}(\boldsymbol{G}_{i}(\vec{\boldsymbol{s}}_{i}))$, i.e., 
\begin{equation}
\label{bzz3}
\boldsymbol{G}_{i}(\vec{\boldsymbol{s}}_{i})[\boldsymbol{G}_{i}^{\dagger}(\vec{\boldsymbol{s}}_{i})\boldsymbol{S}_{i}]_{k}=\mathrm{proj}\left([\boldsymbol{S}_{i}]_{k};\mathrm{csp}(\boldsymbol{G}_{i}(\vec{\boldsymbol{s}}_{i}))\right).
\end{equation}
Therefore, 
\begin{eqnarray}
\label{bzz2}
\mathrm{span}\left(\Big\{\boldsymbol{G}_{i}(\vec{\boldsymbol{s}}_{i})[\boldsymbol{G}_{i}^{\dagger}(\vec{\boldsymbol{s}}_{i})\boldsymbol{s}_{i}]_{k}\Big\}_{k=1}^{K-1}\right)&=&\mathrm{proj}\left(\mathrm{span}\left(\Big\{[\boldsymbol{S}_{i}]_{k}\Big\}_{k=1}^{K-1}\right);\mathrm{csp}(\boldsymbol{G}_{i}(\vec{\boldsymbol{s}}_{i}))\right)\notag\\
&=&\mathrm{proj}(\mathrm{csp}(\boldsymbol{S}_{i});\mathrm{csp}(\boldsymbol{G}_{i}(\vec{\boldsymbol{s}}_{i}))).\end{eqnarray}
However, \begin{equation}
\label{bzz1}
\mathrm{span}\left(\Big\{\boldsymbol{G}_{i}(\vec{\boldsymbol{s}}_{i})[\boldsymbol{G}_{i}^{\dagger}(\vec{\boldsymbol{s}}_{i})\boldsymbol{S}_{i}]_{k}\Big\}_{k=1}^{K}\right)=\mathrm{csp}(\boldsymbol{G}_{i}(\vec{\boldsymbol{s}}_{i})\boldsymbol{G}_{i}^{\dagger}(\vec{\boldsymbol{s}}_{i})\boldsymbol{S}_{i}).\end{equation}
By (\ref{bzz1}) and (\ref{bzz2}), 
\begin{equation}
\label{bzz5}
\mathrm{proj}(\mathrm{csp}(\boldsymbol{S}_{i});\mathrm{csp}(\boldsymbol{G}_{i}(\vec{\boldsymbol{s}}_{i})))=\mathrm{csp}(\boldsymbol{G}_{i}(\vec{\boldsymbol{s}}_{i})\boldsymbol{G}_{i}^{\dagger}(\vec{\boldsymbol{s}}_{i})\boldsymbol{S}_{i}).\end{equation}
Using (\ref{bzz5}) and (\ref{bzz4}) in (\ref{bzz6}),
\begin{eqnarray}
\mathrm{rank}([\boldsymbol{S}_{i}|\vec{\boldsymbol{s}}_{i}])&=&\mathbb{1}_{\{\vec{\boldsymbol{s}}_{i}\neq0_{K\times 1}\}}+\mathrm{dim}(\mathrm{csp}(\boldsymbol{G}_{i}(\vec{\boldsymbol{s}}_{i})\boldsymbol{G}_{i}^{\dagger}(\vec{\boldsymbol{s}}_{i})\boldsymbol{S}_{i}))\notag\\
&=&\mathbb{1}_{\{\vec{\boldsymbol{s}}_{i}\neq0_{K\times 1}\}}+\mathrm{rank}(\boldsymbol{G}_{i}(\vec{\boldsymbol{s}}_{i})\boldsymbol{G}_{i}^{\dagger}(\vec{\boldsymbol{s}}_{i})\boldsymbol{S}_{i})\notag\\
&\stackrel{(a)}{=}&\mathbb{1}_{\{\vec{\boldsymbol{s}}_{i}\neq0_{K\times 1}\}}+\mathrm{rank}(\boldsymbol{G}_{i}^{\dagger}(\vec{\boldsymbol{s}}_{i})\boldsymbol{S}_{i})\end{eqnarray}
where $(a)$ follows by Lemma 5 as $\boldsymbol{G}_{i}(\vec{\boldsymbol{s}}_{i})$ has independent columns.
Taking expectation from both sides, 
\begin{equation}
\label{ }
\mathrm{E}\left\{\mathrm{rank}([\boldsymbol{S}_{i}|\vec{\boldsymbol{s}}_{i}])\right\}=\Pr\{\vec{\boldsymbol{s}}_{i}\neq0_{K\times 1}\}+\mathrm{E}\left\{\mathrm{rank}(\boldsymbol{G}_{i}^{\dagger}(\vec{\boldsymbol{s}}_{i})\boldsymbol{S}_{i})\right\}.
\end{equation}
 Using this in (\ref{bghm}), 
\begin{eqnarray}
\lim_{\gamma\to\infty}\frac{\mathsf{C}_{i}^{(\mathrm{lb})}}{\log\gamma}&=&\mathrm{E}\left\{\mathrm{rank}([\boldsymbol{S}_{i}|\vec{\boldsymbol{s}}_{i}])-\mathrm{rank}(\boldsymbol{S}_{i})\right\}\notag\\
&=&\Pr\left\{\vec{\boldsymbol{s}}_{i}\notin\mathrm{csp}(\boldsymbol{S}_{i})\right\}.
\end{eqnarray}
This completes the proof.
 \end{proof}
Finally, the following Proposition proves that $\mathsf{C}_{i}$ and $\mathsf{C}_{i}^{(\mathrm{lb})}$ have the SNR scaling.
\begin{proposition}
$\mathsf{C}_{i}^{(\mathrm{lb})}$ and $\mathsf{C}_{i}$ have the same SNR scaling. 
\end{proposition}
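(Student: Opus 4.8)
The plan is to squeeze $\mathsf{C}_i$ between two matching scalings. Since $\mathsf{C}_i^{(\mathrm{lb})}(\vec h_i)$ is by construction a lower bound on $\mathsf{C}_i$, the inequality $\mathsf{C}_i\geq\mathsf{C}_i^{(\mathrm{lb})}(\vec h_i)$ together with Proposition 1 immediately gives $\lim_{\gamma\to\infty}\frac{\mathsf{C}_i}{\log\gamma}\geq\frac{\Pr\{\vec{\boldsymbol{s}}_i\notin\mathrm{csp}(\boldsymbol{S}_i)\}}{K}$, i.e. $\mathsf{C}_i\gtrsim\mathsf{C}_i^{(\mathrm{lb})}$. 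The real content is therefore the reverse scaling $\mathsf{C}_i\lesssim\mathsf{C}_i^{(\mathrm{lb})}$. By (\ref{e14}) the scaling of $K\mathsf{C}_i=\mathrm{I}(\boldsymbol{x}_i,\vec{\boldsymbol{s}}_i;\vec{\boldsymbol{y}}_i)$ equals that of $\mathrm{I}(\boldsymbol{x}_i;\vec{\boldsymbol{y}}_i|\vec{\boldsymbol{s}}_i)$, so it suffices to produce an upper bound on the SNR scaling of the latter whose value is $\Pr\{\vec{\boldsymbol{s}}_i\notin\mathrm{csp}(\boldsymbol{S}_i)\}$.

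First I would introduce a genie that reveals the interferers' signature matrix $\boldsymbol{S}_i$ to the $i^{\text{th}}$ receiver. Because $\boldsymbol{x}_i$ is independent of every signature, $\mathrm{I}(\boldsymbol{x}_i;\boldsymbol{S}_i|\vec{\boldsymbol{s}}_i)=0$, and the chain rule yields $\mathrm{I}(\boldsymbol{x}_i;\vec{\boldsymbol{y}}_i|\vec{\boldsymbol{s}}_i)\leq\mathrm{I}(\boldsymbol{x}_i;\vec{\boldsymbol{y}}_i,\boldsymbol{S}_i|\vec{\boldsymbol{s}}_i)=\mathrm{I}(\boldsymbol{x}_i;\vec{\boldsymbol{y}}_i|\boldsymbol{S}_i,\vec{\boldsymbol{s}}_i)$. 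Conditioned on $\boldsymbol{S}_i=S$ and $\vec{\boldsymbol{s}}_i=\vec s$, the interference-plus-noise $\vec{\boldsymbol{w}}_i+\vec{\boldsymbol{z}}_i$ is a zero-mean complex Gaussian vector with covariance $\Sigma_S\triangleq I_K+\beta^2\gamma S\Xi_i\Xi_i^{\dagger}S^{\dagger}$, exactly the matrix appearing in (\ref{lj1}), and the channel collapses to the single-input colored-noise Gaussian channel $\vec{\boldsymbol{y}}_i=\beta h_{i,i}\vec s\,\boldsymbol{x}_i+(\vec{\boldsymbol{w}}_i+\vec{\boldsymbol{z}}_i)$. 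The matrix-determinant lemma then evaluates its mutual information in closed form as $\mathrm{I}(\boldsymbol{x}_i;\vec{\boldsymbol{y}}_i|\boldsymbol{S}_i=S,\vec{\boldsymbol{s}}_i=\vec s)=\log\big(1+\beta^2|h_{i,i}|^2\gamma\,\vec s^{\dagger}\Sigma_S^{-1}\vec s\big)$.

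The crux, and the step I expect to be the main obstacle, is the SNR scaling of the quadratic form $\vec s^{\dagger}\Sigma_S^{-1}\vec s$. Diagonalizing $M\triangleq S\Xi_i\Xi_i^{\dagger}S^{\dagger}$ with eigenpairs $(\lambda_k,\vec u_k)$ gives
\[
\vec{s}^{\dagger}\Sigma_S^{-1}\vec{s}=\sum_{k:\lambda_k>0}\frac{|\vec{u}_k^{\dagger}\vec{s}|^2}{1+\beta^2\gamma\lambda_k}+\big\|\mathrm{proj}\big(\vec{s};\mathrm{csp}(S)^{\perp}\big)\big\|_2^2,
\]
where I use that the channel gains are realizations of continuous random variables, so $\Xi_i$ is invertible almost surely and hence $\mathrm{csp}(M)=\mathrm{csp}(S\Xi_i)=\mathrm{csp}(S)$. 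The first sum is $O(1/\gamma)$, while the second term is a strictly positive constant precisely when $\vec s\notin\mathrm{csp}(S)$ and is zero otherwise. Consequently $\log\big(1+\beta^2|h_{i,i}|^2\gamma\,\vec s^{\dagger}\Sigma_S^{-1}\vec s\big)$ scales like $\mathbb{1}_{\{\vec s\notin\mathrm{csp}(S)\}}\log\gamma$, so its SNR scaling equals $\mathbb{1}_{\{\vec s\notin\mathrm{csp}(S)\}}$.

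Finally I would average over $\vec{\boldsymbol{s}}_i$ and $\boldsymbol{S}_i$, whose supports are finite, so the limit in $\gamma$ passes inside the sum:
\[
\lim_{\gamma\to\infty}\frac{\mathrm{I}(\boldsymbol{x}_i;\vec{\boldsymbol{y}}_i|\vec{\boldsymbol{s}}_i)}{\log\gamma}\leq\sum_{\vec{s},S}\Pr\{\vec{\boldsymbol{s}}_i=\vec{s}\}\Pr\{\boldsymbol{S}_i=S\}\,\mathbb{1}_{\{\vec{s}\notin\mathrm{csp}(S)\}}=\Pr\{\vec{\boldsymbol{s}}_i\notin\mathrm{csp}(\boldsymbol{S}_i)\}.
\]
Dividing by $K$ gives $\lim_{\gamma\to\infty}\frac{\mathsf{C}_i}{\log\gamma}\leq\frac{\Pr\{\vec{\boldsymbol{s}}_i\notin\mathrm{csp}(\boldsymbol{S}_i)\}}{K}$, which is exactly the scaling of $\mathsf{C}_i^{(\mathrm{lb})}$ established in Proposition 1. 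Combined with the trivial reverse direction above, this shows $\mathsf{C}_i$ and $\mathsf{C}_i^{(\mathrm{lb})}$ share the same SNR scaling. The only routine bookkeeping left is justifying the interchange of $\lim_{\gamma\to\infty}$ with the finite sums over $\mathrm{supp}(\vec{\boldsymbol{s}}_i)$ and $\mathrm{supp}(\boldsymbol{S}_i)$, which is immediate since each summand is of the form $\log(1+c\gamma)$.
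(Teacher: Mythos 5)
Your proposal is correct, and its skeleton is the same as the paper's Appendix A argument: the lower-bound direction is inherited from Proposition 1, the reduction via (\ref{e14}) to bounding $\mathrm{I}(\boldsymbol{x}_{i};\vec{\boldsymbol{y}}_{i}|\vec{\boldsymbol{s}}_{i})$ is identical, and your genie revealing $\boldsymbol{S}_{i}$ is exactly the paper's ``informed user'' device, after which both proofs note that conditioned on $(\vec{s},S)$ the channel is additive Gaussian. Where you genuinely differ is in how the high-SNR scaling of the conditional mutual information is extracted. The paper keeps it as a difference of two log-determinants, $\log\det\left(I_{K}+\beta^{2}\gamma|h_{i,i}|^{2}\vec{s}\vec{s}^{\dagger}+\beta^{2}\gamma S\Xi_{i}\Xi_{i}^{\dagger}S^{\dagger}\right)-\log\det\left(I_{K}+\beta^{2}\gamma S\Xi_{i}\Xi_{i}^{\dagger}S^{\dagger}\right)$, argues that these scale like $\mathrm{rank}([\vec{s}\,\,S])\log\gamma$ and $\mathrm{rank}(S)\log\gamma$ respectively, and then must convert $\mathrm{E}\left\{\mathrm{rank}([\vec{\boldsymbol{s}}_{i}\,\,\boldsymbol{S}_{i}])-\mathrm{rank}(\boldsymbol{S}_{i})\right\}$ into $\Pr\{\vec{\boldsymbol{s}}_{i}\notin\mathrm{csp}(\boldsymbol{S}_{i})\}$ by importing the rank/projection identity established inside the proof of Proposition 1 via the Euclidean-projection lemmas. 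You instead collapse the conditional mutual information to the scalar expression $\log\left(1+\beta^{2}|h_{i,i}|^{2}\gamma\,\vec{s}^{\dagger}\Sigma_{S}^{-1}\vec{s}\right)$, where $\Sigma_{S}=I_{K}+\beta^{2}\gamma S\Xi_{i}\Xi_{i}^{\dagger}S^{\dagger}$, and read the scaling directly off the eigendecomposition: the quadratic form stays bounded away from zero iff $\vec{s}$ has a nonzero projection onto $\mathrm{csp}(S)^{\perp}$, which yields the indicator $\mathbb{1}_{\{\vec{s}\notin\mathrm{csp}(S)\}}$ per realization with no appeal to the rank identity. This buys a more self-contained proof (the geometry is re-derived analytically rather than borrowed from Proposition 1), at the price of invoking almost-sure invertibility of $\Xi_{i}$ to identify $\mathrm{csp}(S\Xi_{i}\Xi_{i}^{\dagger}S^{\dagger})$ with $\mathrm{csp}(S)$ --- a caveat the paper's rank bookkeeping also needs implicitly, since $\mathrm{rank}(S\Xi_{i})=\mathrm{rank}(S)$ requires the cross gains to be nonzero. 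Both computations are equivalent in content, and your interchange of the limit with the finite sums is unproblematic.
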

\begin{proof}
See Appendix A. 
\end{proof}

 An important consequence of Proposition 1 is the following observation. Since, all the users utilize the same algorithm to construct their randomized signature codes, the achievable  SMG is
 \begin{equation}
\label{canada}
\mathsf{SMG}(n)=\frac{n\Pr\left\{\vec{\boldsymbol{s}}_{1}\notin\mathrm{csp}\left([\vec{\boldsymbol{s}}_{2}|\vec{\boldsymbol{s}}_{3}|\cdots|\vec{\boldsymbol{s}}_{n-1}|\vec{\boldsymbol{s}}_{n}]\right)\right\}}{K}.
\end{equation}
Computing $\Pr\left\{\vec{\boldsymbol{s}}_{1}\notin\mathrm{csp}\left([\vec{\boldsymbol{s}}_{2}|\vec{\boldsymbol{s}}_{3}|\cdots|\vec{\boldsymbol{s}}_{n-1}|\vec{\boldsymbol{s}}_{n}]\right)\right\}$ can be quite a tedious task specially for $n\geq 3$.  Let the underlying alphabet to construct the spreading codes be $\{-1,1\}$. Here, we examine two particular RSCs by computing the achieved $\mathsf{SMG}(n)$ through simulations for the cases where masking is applied or ignored. In each case, we assume the elements of any randomized spreading code are selected independently and uniformly over $\{-1,1\}$, i.e., $\mathsf{p}_{1}=\mathsf{p}_{-1}=\frac{1}{2}$. In case masking is applied, we set $\varepsilon=\frac{1}{2}$. Taking $K=n$, the results are sketched in fig. \ref{f2}. It is seen that 

\textit{1-} By increasing $n$, the achieved $\mathsf{SMG}(n)$ approaches unity in both cases.   This is the SMG of a frequency division scenario where interference is completely avoided. 

\textit{2-} Masking improves the SMG.

 \begin{figure}[h!b!t]
  \centering
  \includegraphics[scale=.7] {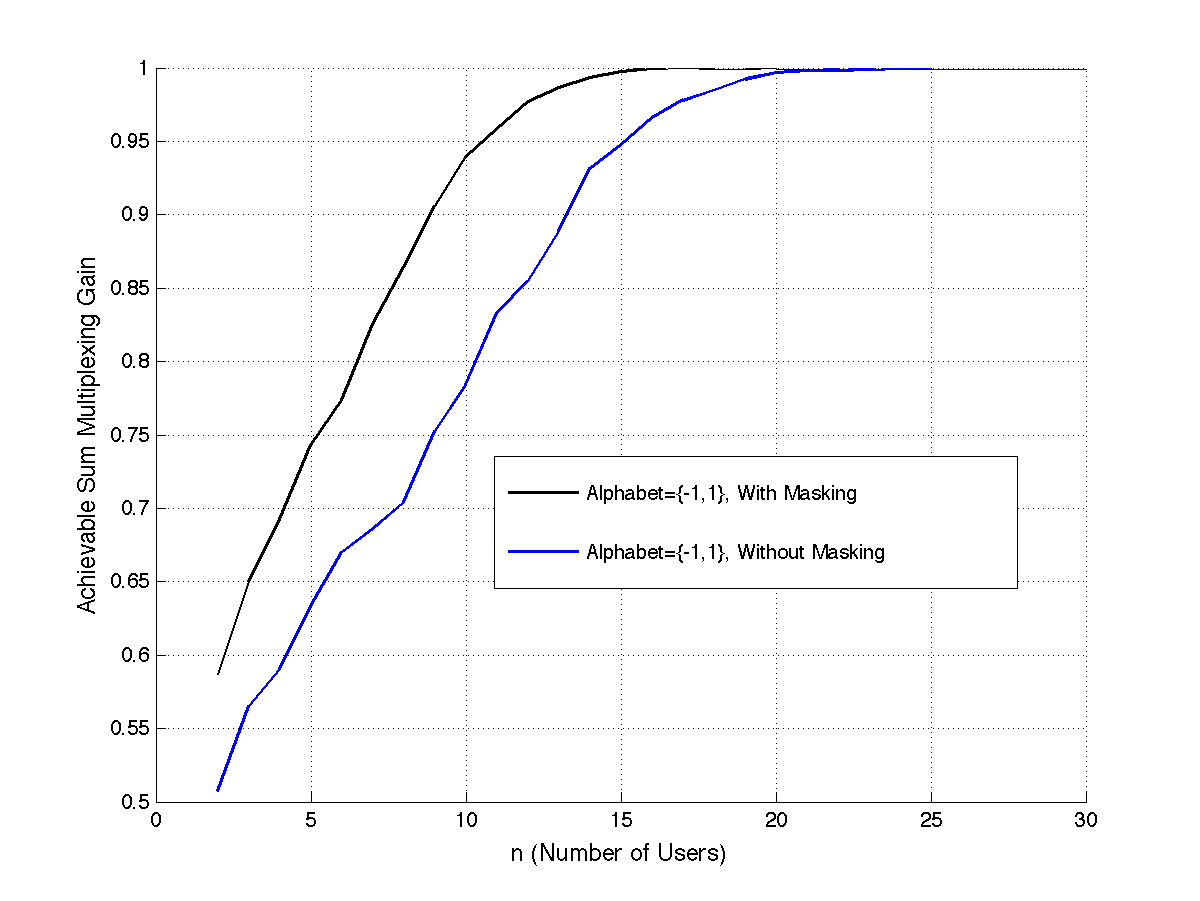}
  \caption{Comparison of the achieved SMG with/without masking where users construct their spreading codes on $\{-1,1\}$ using a uniform PMF $\mathsf{p}_{1}=\mathsf{p}_{-1}=\frac{1}{2}$. It is assumed that $K=n$. In case masking is applied, we have $\varepsilon=\frac{1}{2}$.}
  \label{f2}
 \end{figure} 
 
 \textit{Example 1-} Let us consider an RSC scheme where $K=1$, i.e., no spreading is applied. In this case, for each $1\leq i\leq n$, the vector $\vec{\boldsymbol{s}}_{i}=\vec{\boldsymbol{s}}_{i}=\boldsymbol{s}_{i}\in\{0,1\}$ is simply a $\mathrm{Ber}(\varepsilon)$ random variable for some $\varepsilon\in(0,1]$. Hence, 
 \begin{eqnarray}
\Pr\left\{\vec{\boldsymbol{s}}_{1}\notin\mathrm{csp}\left([\vec{\boldsymbol{s}}_{2}|\vec{\boldsymbol{s}}_{3}|\cdots|\vec{\boldsymbol{s}}_{n-1}|\vec{\boldsymbol{s}}_{n}]\right)\right\}&=&\Pr\left\{\boldsymbol{s}_{1}\notin\mathrm{span}\big(\left\{\boldsymbol{s}_{2},\boldsymbol{s}_{3},\cdots,\boldsymbol{s}_{n-1},\boldsymbol{s}_{n}\right\}\big)\right\}\notag\\
&=&\bar{\varepsilon}\Pr\left\{0\notin\mathrm{span}\big(\left\{\boldsymbol{s}_{2},\boldsymbol{s}_{3},\cdots,\boldsymbol{s}_{n-1},\boldsymbol{s}_{n}\right\}\big)\right\}\notag\\
&&+\varepsilon\Pr\left\{1\notin\mathrm{span}\big(\left\{\boldsymbol{s}_{2},\boldsymbol{s}_{3},\cdots,\boldsymbol{s}_{n-1},\boldsymbol{s}_{n}\right\}\big)\right\}\notag\\
&\stackrel{(a)}{=}&\varepsilon\Pr\left\{1\notin\mathrm{span}\big(\left\{\boldsymbol{s}_{2},\boldsymbol{s}_{3},\cdots,\boldsymbol{s}_{n-1},\boldsymbol{s}_{n}\right\}\big)\right\}\notag\\
&\stackrel{(b)}{=}&\varepsilon\Pr\left\{\boldsymbol{s}_{2}=\boldsymbol{s}_{3}=\cdots=\boldsymbol{s}_{n-1}=\boldsymbol{s}_{n}=0\right\}\notag\\
&=&\varepsilon(1-\varepsilon)^{n-1}
\end{eqnarray} 
 where $(a)$ is by the fact that $\Pr\left\{0\notin\mathrm{span}\big(\left\{\boldsymbol{s}_{2},\boldsymbol{s}_{3},\cdots,\boldsymbol{s}_{n-1},\boldsymbol{s}_{n}\right\}\big)\right\}=0$ and $(b)$ is by the fact that $1\notin\mathrm{span}\big(\left\{\boldsymbol{s}_{2},\boldsymbol{s}_{3},\cdots,\boldsymbol{s}_{n-1},\boldsymbol{s}_{n}\right\}\big)$ whenever $\boldsymbol{s}_{i}=0$ for $2\leq i\leq n$. Maximizing $\varepsilon(1-\varepsilon)^{n-1}$ over $\varepsilon$, a Sum Multiplexing Gain of $\left(1-\frac{1}{n}\right)^{n-1}$ is achieved. Increasing $n$, the achieved $\mathsf{SMG}(n)$ drops to $\frac{1}{e}<1$. Comparing this to the results in fig. \ref{f2}, spreading the signals ($K=n$ compared to $K=1$) can highly improve the Sum Multiplexing Gain in the network. $\square$
 
 \textit{Example 2-} Let $n=2$ and $\mathscr{A}=\{-1,1\}$. For $i\in\{1,2\}$, elements of $\vec{\boldsymbol{s}}_{i}$ are $\mathrm{i.i.d.}$ random variables taking the values $0$, $1$ and $-1$ with probabilities $\overline{\varepsilon}$, $\varepsilon\mathsf{p}_{1}$ and $\varepsilon\mathsf{p}_{-1}$ respectively.  We have
 \begin{eqnarray}
\Pr\{\vec{\boldsymbol{s}}_{1}\notin\mathrm{span}(\left\{\vec{\boldsymbol{s}}_{2}\right\})\}&=&1-\Pr\{\vec{\boldsymbol{s}}_{1}\in\mathrm{span}(\left\{\vec{\boldsymbol{s}}_{2}\right\})\}\notag\\
&=&1-\Pr\{\vec{\boldsymbol{s}}_{1}=0_{K\times 1}\}-\Pr\{\vec{\boldsymbol{s}}_{1}\neq0_{K\times 1},\vec{\boldsymbol{s}}_{1}=\pm\vec{\boldsymbol{s}}_{2}\}\notag\\
&=&1-\overline{\varepsilon}^{K}-\Pr\{\vec{\boldsymbol{s}}_{1}\neq0_{K\times 1},\vec{\boldsymbol{s}}_{1}=\vec{\boldsymbol{s}}_{2}\}-\Pr\{\vec{\boldsymbol{s}}_{1}\neq0_{K\times 1},\vec{\boldsymbol{s}}_{1}=-\vec{\boldsymbol{s}}_{2}\}.\end{eqnarray}
However, 
\begin{eqnarray}
\Pr\{\vec{\boldsymbol{s}}_{1}\neq0_{K\times 1},\vec{\boldsymbol{s}}_{1}=\vec{\boldsymbol{s}}_{2}\}&=&\sum_{\vec{s}\in\mathrm{supp}(\vec{\boldsymbol{s}}_{1})\backslash\{0_{K\times 1}\}}\left(\Pr\{\vec{\boldsymbol{s}}_{1}=\vec{s}\}\right)^{2}\notag\\
&=&\sum_{k=0}^{K-1}\sum_{l=0}^{K-k}{K\choose k}{K-k\choose l}\overline{\varepsilon}^{2k}(\varepsilon\mathsf{p}_{1})^{2l}(\varepsilon\mathsf{p}_{-1})^{2(K-k-l)}\notag\\
&=&\left(\overline{\varepsilon}^{2}+\varepsilon^{2}(\mathsf{p}_{1}^{2}+\mathsf{p}_{-1}^{2})\right)^{K}-\overline{\varepsilon}^{2K}.
\end{eqnarray}
Similarly, 
\begin{equation}
\label{ }
\Pr\{\vec{\boldsymbol{s}}_{1}\neq0_{K\times 1},\vec{\boldsymbol{s}}_{1}=\vec{\boldsymbol{s}}_{2}\}=\left(\overline{\varepsilon}^{2}+2\varepsilon^{2}\mathsf{p}_{1}\mathsf{p}_{-1}\right)^{K}-\overline{\varepsilon}^{2K}.
\end{equation}
Therefore, 
\begin{eqnarray}
\label{ }
\mathsf{SMG}(2)&=&\frac{2\Pr\{\vec{\boldsymbol{s}}_{1}\notin\mathrm{span}(\left\{\vec{\boldsymbol{s}}_{2}\right\})\}}{K}\notag\\&=&\frac{2}{K}\left(1-\overline{\varepsilon}^{K}+2\overline{\varepsilon}^{2K}-\left(\overline{\varepsilon}^{2}+\varepsilon^{2}(\mathsf{p}_{1}^{2}+\mathsf{p}_{-1}^{2})\right)^{K}-\left(\overline{\varepsilon}^{2}+2\varepsilon^{2}\mathsf{p}_{1}\mathsf{p}_{-1}\right)^{K}\right).\end{eqnarray}
This expression is maximized at $\mathsf{p}_{1}=\mathsf{p}_{-1}=\frac{1}{2}$ uniformly for any $\varepsilon\in(0,1]$ and $K\geq 1$. Thus, 
\begin{equation}
\label{boro}
\sup_{\mathsf{p}_{1},\mathsf{p}_{-1}}\mathsf{SMG}(2)=\frac{2\left(1-\overline{\varepsilon}^{K}+2\overline{\varepsilon}^{2K}-2\left(\overline{\varepsilon}^{2}+\frac{\varepsilon^{2}}{2}\right)^{K}\right)}{K}.
\end{equation} 
This function is maximized at $K=2$ and $\varepsilon=0.756$ where an SMG  of $\sup_{\varepsilon, K, \mathsf{p}_{1},\mathsf{p}_{-1}}\mathsf{SMG}(2)=0.7091$ is achieved.  We notice that 

\textit{1-} Although one's intuition expects $\varepsilon=\frac{1}{2}$ is the best choice of the On-Off probability, the \emph{optimum} masking probability is not $\frac{1}{2}$.

\textit{2-} Compared to the Sum Multiplexing Gain of $\frac{1}{2}$ achieved in example 1 without spreading, we see that spreading in fact increases the achieved SMG. $\square$

\textit{Remark 1-} For any $m_{1}\times m_{2}$ matrix $A$ and a $m_{1}\times m_{1}$ diagonal matrix $D$ we have $\mathrm{rank}(AA^{\dagger}D)=\mathrm{rank}(A)$. Using this, for any $\vec{s}\in\mathrm{supp}(\vec{\boldsymbol{s}}_{i})\backslash\{0_{K\times 1}\}$ one can write  
\begin{eqnarray}
\label{booh}
\varrho_{i}(\gamma;\vec{s})\triangleq \frac{|h_{i,i}|^{2}\|\vec{s}\|_{2}^{2}\gamma}{\mathrm{E}\{\|\vec{\boldsymbol{s}}_{i}\|_{2}^{2}\}}\prod_{\substack{S\in\mathrm{supp}(\boldsymbol{S}_{i})}}\frac{\prod_{l=1}^{\mathrm{rank}(G_{i}^{\dagger}(\vec{s})S)}\left(1+\frac{\gamma \lambda_{G_{i}^{\dagger}(\vec{s})S\Xi_{i}}^{(l)}}{\mathrm{E}\left\{\|\vec{\boldsymbol{s}}_{i}\|_{2}^{2}\right\}}\right)^{\Pr\{\boldsymbol{S}_{i}=S\}}}{\prod_{l=1}^{\mathrm{rank(S)}}\left(1+\frac{\gamma \lambda_{S\Xi_{i}}^{(l)}}{\mathrm{E}\left\{\|\vec{\boldsymbol{s}}_{i}\|_{2}^{2}\right\}}\right)^{\Pr\{\boldsymbol{S}_{i}=S\}}}.\end{eqnarray}

where we have replaces $\beta^{2}=\frac{1}{\mathrm{E}\left\{\|\vec{\boldsymbol{s}}_{i}\|_{2}^{2}\right\}}$ and by definition, $(\lambda_{A}^{(l)})_{l=1}^{\mathrm{rank}(A)}$ are nonzero eigenvalues of the matrix $AA^{\dagger}$.
 
For sufficiently large SNR values, one can write $\mathsf{C}_{i}^{(\mathrm{lb})}(\vec{h}_{i})$ given in (\ref{fuv}) as
 \begin{eqnarray}
\label{fuvvv}
&&\mathsf{C}_{i}^{(\mathrm{lb})}(\vec{h}_{i})\approx\frac{\Pr\{\vec{\boldsymbol{s}}_{i}\notin\mathrm{csp}(\boldsymbol{S}_{i})\}}{K}\log\gamma-\frac{\Pr\{\vec{\boldsymbol{s}}_{i}\neq 0_{K\times 1}\}\mathrm{H}\left(\sum_{j\neq i}|h_{j,i}|^{2}\vec{\boldsymbol{s}}_{j}\vec{\boldsymbol{s}}_{j}^{\dagger}\right)}{K}\notag\\
&&+\frac{1}{K}\sum_{\vec{s}\in\mathrm{supp}(\vec{\boldsymbol{s}}_{i})\backslash\{0_{K\times 1}\}}\Pr\{\vec{\boldsymbol{s}}_{i}=\vec{s}\}\log\left(\frac{|h_{i,i}|^{2}\|\vec{s}\|_{2}^{2}}{\mathrm{E}\{\|\vec{\boldsymbol{s}}_{i}\|_{2}^{2}\}} \prod_{\substack{S\in\mathrm{supp}(\boldsymbol{S}_{i})}}\frac{\prod_{l=1}^{\mathrm{rank}(G_{i}^{\dagger}(\vec{s})S)}\left(\frac{ \underline{\pi}_{i}\lambda_{G_{i}^{\dagger}(\vec{s})S\Xi_{i}}^{(l)}}{\mathrm{E}\{\|\vec{\boldsymbol{s}}_{i}\|_{2}^{2}\}}\right)^{\Pr\{\boldsymbol{S}_{i}=S\}}}{ \prod_{l=1}^{\mathrm{rank}(S)}\left(\frac{\overline{\pi}_{i}\lambda_{S\Xi_{i}}^{(l)}}{\mathrm{E}\{\|\vec{\boldsymbol{s}}_{i}\|_{2}^{2}\}}\right)^{\Pr\{\boldsymbol{S}_{i}=S\}}}\right).\notag\\\end{eqnarray}

 There are three major factors playing role in the formulation of $\mathsf{C}_{i}^{(\mathrm{lb})}(\vec{h}_{i})$ in the high SNR regime, namely, the \emph{Multiplexing Gain per user},
 \begin{equation}
\label{MG}
\mathsf{MG}\triangleq\frac{\Pr\{\vec{\boldsymbol{s}}_{i}\notin\mathrm{csp}(\boldsymbol{S}_{i})\}}{K},\end{equation}
 the \emph{Interference Entropy Factor},
\begin{equation}
\label{ }
\mathsf{IEF}\triangleq \frac{\Pr\{\vec{\boldsymbol{s}}_{i}\neq0_{K\times 1}\}\mathrm{H}\left(\sum_{j\neq i}|h_{j,i}|^{2}\vec{\boldsymbol{s}}_{j}\vec{\boldsymbol{s}}_{j}^{\dagger}\right)}{K}\end{equation}
and the \emph{Channel plus Signature Factor}
\begin{equation}
\label{ }
\mathsf{CSF}_{i}\triangleq \frac{1}{K}\sum_{\vec{s}\in\mathrm{supp}(\vec{\boldsymbol{s}}_{i})\backslash\{0_{K\times 1}\}}\Pr\{\vec{\boldsymbol{s}}_{i}=\vec{s}\}\log\left(\frac{|h_{i,i}|^{2}\|\vec{s}\|_{2}^{2}}{\mathrm{E}\{\|\vec{\boldsymbol{s}}_{i}\|_{2}^{2}\}} \prod_{\substack{S\in\mathrm{supp}(\boldsymbol{S}_{i})}}\frac{\prod_{l=1}^{\mathrm{rank}(G_{i}^{\dagger}(\vec{s})S)}\left(\frac{\lambda_{G_{i}^{\dagger}(\vec{s})S\Xi_{i}}^{(l)}}{\mathrm{E}\{\|\vec{\boldsymbol{s}}_{i}\|_{2}^{2}\}}\right)^{\Pr\{\boldsymbol{S}_{i}=S\}}}{ \prod_{l=1}^{\mathrm{rank}(S)}\left(\frac{\lambda_{S\Xi_{i}}^{(l)}}{\mathrm{E}\{\|\vec{\boldsymbol{s}}_{i}\|_{2}^{2}\}}\right)^{\Pr\{\boldsymbol{S}_{i}=S\}}}\right).\end{equation}
 In fact, 
 \begin{equation}
\label{ }
\mathsf{C}_{i}^{(\mathrm{lb})}(|h_{i,i}|^{2},\underline{\pi}_{i},\overline{\pi}_{i})\approx\mathsf{MG}\log\gamma-\mathsf{IEF}+\mathsf{CSF}_{i}.
\end{equation}
In general, $\mathsf{MG}$ does not depend on the user index. Also, assuming the channel gains are realizations of $\mathrm{i.i.d.}$ continuous random variables, the entropy $\mathrm{H}\left(\sum_{j\neq i}|h_{j,i}|^{2}\vec{\boldsymbol{s}}_{j}\vec{\boldsymbol{s}}_{j}^{\dagger}\right)$ is not a function of $i\in\{1,2,\cdots,n\}$, i.e., $\mathsf{IEF}$ does not depend on the user index either. In this case, a simple argument shows that 
\begin{equation}
\label{ }
\mathrm{H}\left(\sum_{j\neq i}|h_{j,i}|^{2}\vec{\boldsymbol{s}}_{j}\vec{\boldsymbol{s}}_{j}^{\dagger}\right)=(n-1)\mathrm{H}\left(\vec{\boldsymbol{s}}_{1}\vec{\boldsymbol{s}}_{1}^{\dagger}\right).\end{equation}

The interplay between $\mathsf{MG}$, $\mathsf{IEF}$ and $\mathsf{CSF}_{i}$ determines the behavior of the achievable rate. This behavior highly depends on the randomized algorithm in constructing the Signature Codes. As we will see in the next section,  a larger $\mathsf{MG}$ is usually achieved at the cost of a larger $\mathsf{IEF}$. It is clear that a larger $\mathsf{IEF}$ reduces the rate specially in moderate ranges of SNR. However, due to the fact that $\mathsf{MG}$ has also increased,  the rate is lifted up is the high SNR regime . These opposing effects identify a tradeoff between rate in moderate SNR and high SNR regime.  $\square$

\section{System Design} 
 In this section, we assume the channel gains $(h_{i,j})_{i,j=1}^{n}$ are realizations of independent $\mathcal{CN}(0,1)$ random variables $(\boldsymbol{h}_{i,j})_{i,j=1}^{n}$ representing Rayleigh fading. In the previous section, we have developed a lower bound 
 \begin{equation}\mathsf{C}_{i}^{(\mathrm{lb})}(\vec{\boldsymbol{h}}_{i})=\frac{1}{K}\sum_{\vec{s}\in\mathrm{supp}(\vec{\boldsymbol{s}}_{i}\backslash\{0_{K\times 1}\})}\Pr\{\vec{\boldsymbol{s}}_{i}=\vec{s}\}\log\left(2^{-\mathrm{H}\left(\sum_{j\neq i}|\boldsymbol{h}_{j,i}|^{2}\vec{\boldsymbol{s}}_{j}\vec{\boldsymbol{s}}_{j}^{\dagger}\right)}\boldsymbol{\varrho}_{i}(\gamma;\vec{s})+1\right)\end{equation} where
 \begin{equation}
\label{}
\boldsymbol{\varrho}_{i}(\gamma;\vec{s})\triangleq \frac{|\boldsymbol{h}_{i,i}|^{2}\|\vec{s}\|_{2}^{2}\gamma}{\mathrm{E}\{\|\vec{\boldsymbol{s}}_{i}\|_{2}^{2}\}}\prod_{\substack{S\in\mathrm{supp}(\boldsymbol{S}_{i})}}\left(\frac{\det\left(I_{K-1}+\beta^{2}\gamma G_{i}^{\dagger}(\vec{s})S\mathbf{\Xi}_{i}\mathbf{\Xi}_{i}^{\dagger}S^{\dagger}G_{i}^{\dagger}(\vec{s})\right)}{\det\left(I_{K}+\beta^{2}\gamma S\mathbf{\Xi}_{i}\mathbf{\Xi}_{i}^{\dagger}S^{\dagger}\right)}\right)^{\Pr\{\boldsymbol{S}_{i}=S\}},\end{equation} 
and 
\begin{equation}
\label{ }
\mathbf{\Xi}_{i}=\mathrm{diag}\left(\boldsymbol{h}_{1,i},\cdots,\boldsymbol{h}_{i-1,i},\boldsymbol{h}_{i+1,i},\cdots,\boldsymbol{h}_{n,i}\right).
\end{equation}
 The global design criteria is to choose $K$, $(\mathsf{p}_{a})_{a\in\mathscr{A}}$ and $\varepsilon$ based on
\begin{equation}
\label{ }
(\hat{K},(\hat{\mathsf{p}}_{a})_{a\in\mathscr{A}}, \hat{\varepsilon})=\arg\sup_{K,(\mathsf{p}_{a})_{a\in\mathscr{A}},\varepsilon}\mathrm{E}\left\{\mathsf{C}_{i}^{(\mathrm{lb})}(\vec{\boldsymbol{h}}_{i})\right\}.\end{equation}

\textit{Example 3-}  Let us consider a network with $n=2$ users. For $i\in\{1,2\}$, we define 
\begin{equation}
\label{ }
i'=\left\{\begin{array}{cc}
    2  & i=1   \\
     1& i=2  
\end{array}\right..
\end{equation}
In this case, we have

\textit{1-} $\mathbf{\Xi}_{i}=\boldsymbol{h}_{i',i}$.

\textit{2-} Since $\boldsymbol{S}_{i}=\vec{\boldsymbol{s}}_{i'}$, for each $\vec{t}\in\mathrm{supp}(\boldsymbol{S}_{i})\backslash\{0_{K\times 1}\}$, we have $\mathrm{rank}(\boldsymbol{h}_{i',i}\vec{t})= 1$ and $\lambda_{\boldsymbol{h}_{i',i}\vec{t}}^{(1)}=|\boldsymbol{h}_{i',i}|^{2}\|\vec{t}\|_{2}^{2}$. 

\textit{3-} For each $\vec{s}\in\mathrm{supp}(\vec{\boldsymbol{s}}_{i})$ and $\vec{t}\in\mathrm{supp}(\boldsymbol{S}_{i})$, we have $\mathrm{rank}(\boldsymbol{h}_{i',i}G_{i}^{\dagger}(\vec{s})\vec{t})\leq 1$. Indeed, $G_{i}^{\dagger}(\vec{s})\vec{t}\in\mathbb{R}$ and if $G_{i}^{\dagger}(\vec{s})\vec{t}\neq 0$, then $\lambda_{\boldsymbol{h}_{i',i}G_{i}^{\dagger}(\vec{s})\vec{t}}^{(1)}=|\boldsymbol{h}_{i',i}|^{2}|G_{i}^{\dagger}(\vec{s})\vec{t}|^{2}$.

Therefore, $\boldsymbol{\varrho}_{i}(\gamma;\vec{s})$ can be written as
\begin{equation}
\label{lol}
\boldsymbol{\varrho}_{i}(\gamma;\vec{s})= \frac{|\boldsymbol{h}_{i,i}|^{2}\|\vec{s}\|_{2}^{2}\gamma}{\mathrm{E}\{\|\vec{\boldsymbol{s}}_{i}\|^{2}_{2}\}}\prod_{\substack{\vec{t}\in\mathrm{supp}(\vec{\boldsymbol{s}}_{i'})}}\left(\frac{1+\frac{ |\boldsymbol{h}_{i',i}|^{2}|G_{i}^{\dagger}(\vec{s})\vec{t}|^{2}\gamma}{\mathrm{E}\{\|\vec{\boldsymbol{s}}_{i}\|^{2}_{2}\}}}{1+\frac{|\boldsymbol{h}_{i',i}|^{2}\|\vec{t}\|_{2}^{2}\gamma}{\mathrm{E}\{\|\vec{\boldsymbol{s}}_{i}\|^{2}_{2}\}}}\right)^{\Pr\{\vec{\boldsymbol{s}}_{i'}=\vec{t}\}}.\end{equation}

 \textbf{Scheme A-} Let $K=2$ and $\mathscr{A}=\{-1,1\}$ with $\mathsf{p}_{1}=\nu$ and $\mathsf{p}_{-1}=\overline{\nu}$ for some $\nu\in(0,1]$. To simplify the expression for $\boldsymbol{\varrho}_{i}(\gamma;\vec{s})$ in (\ref{lol}), we make the following observations:

\textit{1-} If $\vec{s}$ has only one nonzero element, then 
\begin{equation}
\label{ }
|G_{i}^{\dagger}(\vec{s})\vec{t}|^{2}=\left\{\begin{array}{cc}
    0  & \textrm{$\vec{t}=0_{2\times 1}$ or $\vec{t}=\pm\vec{s}$}   \\
    1  &  \mathrm{oth.} 
\end{array}\right..
\end{equation}

\textit{2-} If $\vec{s}$ has no zero elements, then
\begin{equation}
\label{ }
|G_{i}^{\dagger}(\vec{s})\vec{t}|^{2}=\left\{\begin{array}{cc}
    0  &    \textrm{$\vec{t}=0_{2\times 1}$ or $\vec{t}=\pm\vec{s}$}  \\
    2  &  \vec{t}^{T}\vec{s}=0, \vec{t}\ne0_{2\times 1}\\
    \frac{1}{2} &  \vec{t}^{T}\vec{s}\neq0\end{array}\right..
\end{equation}
As such, it is easy to see that

\textit{1-} If $\vec{s}$ has only one nonzero element, then 
\begin{equation}
\label{ }
\boldsymbol{\varrho}_{i}(\gamma;\vec{s})=\frac{|\boldsymbol{h}_{i,i}|^{2}\gamma}{2\varepsilon\left(1+\frac{|\boldsymbol{h}_{i',i}|^{2}\gamma}{2\varepsilon}\right)^{\varepsilon\overline{\varepsilon}-\varepsilon^{2}}\left(1+\frac{|\boldsymbol{h}_{i',i}|^{2}\gamma}{\varepsilon}\right)^{\varepsilon^{2}}}.
\end{equation}
\textit{2-} If $\vec{s}\in\{(1,1)^{T},(-1,-1)^{T}\}$, then
\begin{equation}
\label{ }
\boldsymbol{\varrho}_{i}(\gamma;\vec{s})=\frac{|\boldsymbol{h}_{i,i}|^{2}\gamma\left(1+\frac{|\boldsymbol{h}_{i',i}|^{2}\gamma}{4\varepsilon}\right)^{2\varepsilon\overline{\varepsilon}}}{\varepsilon\left(1+\frac{|\boldsymbol{h}_{i',i}|^{2}\gamma}{2\varepsilon}\right)^{2\varepsilon\overline{\varepsilon}}\left(1+\frac{|\boldsymbol{h}_{i',i}|^{2}\gamma}{\varepsilon}\right)^{\varepsilon^{2}(\nu^{2}+\overline{\nu}^{2})}}.
\end{equation}
\textit{3-} If $\vec{s}\in\{(1,-1)^{T},(-1,1)^{T}\}$, then
\begin{equation}
\label{ }
\boldsymbol{\varrho}_{i}(\gamma;\vec{s})=\frac{|\boldsymbol{h}_{i,i}|^{2}\gamma\left(1+\frac{|\boldsymbol{h}_{i',i}|^{2}\gamma}{4\varepsilon}\right)^{2\varepsilon\overline{\varepsilon}}}{\varepsilon\left(1+\frac{|\boldsymbol{h}_{i',i}|^{2}\gamma}{2\varepsilon}\right)^{2\varepsilon\overline{\varepsilon}}\left(1+\frac{|\boldsymbol{h}_{i',i}|^{2}\gamma}{\varepsilon}\right)^{2\varepsilon^{2}\nu\overline{\nu}}}.
\end{equation}
Finally, it is shown in appendix A that 
\begin{equation}
\label{ }
\mathrm{H}\left(\vec{\boldsymbol{s}}_{1}\vec{\boldsymbol{s}}_{1}^{\dagger}\right)=2\mathscr{H}(\varepsilon)+\varepsilon^{2}\mathscr{H}(\nu^{2}+\overline{\nu}^{2})\end{equation}

Simulation results indicate that $\mathsf{C}_{i}^{(\mathrm{lb})}(\vec{\boldsymbol{h}}_{i})$ is maximized at $\nu=\frac{1}{2}$. Setting $\nu=\frac{1}{2}$, 
\begin{eqnarray}
\label{lol34}
\mathsf{C}_{i}^{(\mathrm{lb})}(\vec{\boldsymbol{h}}_{i})&=&\varepsilon\overline{\varepsilon}\log\left(1+\frac{2^{-2\mathscr{H}(\varepsilon)-\varepsilon^{2}}|\boldsymbol{h}_{i,i}|^{2}\gamma}{2\varepsilon\left(1+\frac{|\boldsymbol{h}_{i',i}|^{2}\gamma}{2\varepsilon}\right)^{\varepsilon\overline{\varepsilon}-\varepsilon^{2}}\left(1+\frac{|\boldsymbol{h}_{i',i}|^{2}\gamma}{\varepsilon}\right)^{\varepsilon^{2}}}\right)\notag\\
&&+\frac{\varepsilon^{2}}{2}\log\left(1+\frac{2^{-2\mathscr{H}(\varepsilon)-\varepsilon^{2}}|\boldsymbol{h}_{i,i}|^{2}\gamma\left(1+\frac{|\boldsymbol{h}_{i',i}|^{2}\gamma}{4\varepsilon}\right)^{2\varepsilon\overline{\varepsilon}}}{\varepsilon\left(1+\frac{|\boldsymbol{h}_{i',i}|^{2}\gamma}{2\varepsilon}\right)^{2\varepsilon\overline{\varepsilon}}\left(1+\frac{|\boldsymbol{h}_{i',i}|^{2}\gamma}{\varepsilon}\right)^{\frac{\varepsilon^{2}}{2}}}\right). \notag\\
\end{eqnarray}
It is also evident that
\begin{equation}
\label{ }
\mathsf{MG}_{\textrm{scheme A}}=\frac{1}{2}-\frac{\overline{\varepsilon}^{2}}{2}-(\varepsilon\overline{\varepsilon})^{2}-\frac{\varepsilon^{4}}{4}
\end{equation}
and 
\begin{equation}
\label{ }
\mathsf{IEF}_{\textrm{scheme A}}=\varepsilon\left(\mathscr{H}(\varepsilon)+\frac{\varepsilon^{2}}{2}\right).
\end{equation}
 
 \textbf{Scheme B-} Assuming no spreading is performed, let $K=1$. Noting the fact that $\mathrm{supp}(\vec{\boldsymbol{s}}_{i})=\mathrm{supp}(\vec{\boldsymbol{s}}_{i'})=\{0,1\}$ and $G_{i}(1)=0$, 
 \begin{equation}
\label{ }
\boldsymbol{\varrho}_{i}(\gamma;1)=\frac{|\boldsymbol{h}_{i,i}|\gamma}{\varepsilon\left(1+\frac{|\boldsymbol{h}_{i',i}|^{2}\gamma}{\varepsilon}\right)^{\varepsilon}}.
\end{equation}
It is easily seen that  $\mathrm{H}(\vec{\boldsymbol{s}}_{1}\vec{\boldsymbol{s}}_{1}^{\dagger})=\mathscr{H}(\varepsilon)$. Therefore, 
\begin{equation}
\label{ }
\mathsf{C}_{i}^{(\mathrm{lb})}(\vec{\boldsymbol{h}}_{i})=\varepsilon\log\left(1+\frac{2^{-\mathscr{H}(\varepsilon)}|\boldsymbol{h}_{i,i}|^{2}\gamma}{\varepsilon\left(1+\frac{|\boldsymbol{h}_{i',i}|^{2}\gamma}{\varepsilon}\right)^{\varepsilon}}\right),
\end{equation}
\begin{equation}
\label{ }
\mathsf{MG}_{\textrm{scheme B}}=\varepsilon\overline{\varepsilon}
\end{equation}
and 
\begin{equation}
\label{ }
\mathsf{IEF}_{\textrm{scheme B}}=\varepsilon\mathscr{H}(\varepsilon).
\end{equation}
Fig. \ref{f5} sketches $\sup_{\varepsilon}\mathrm{E}\left\{\mathsf{C}_{i}^{(\mathrm{lb})}(\vec{\boldsymbol{h}}_{i})\right\}$ for the schemes A and B. It is seen that there is a tradeoff between the rates at medium  and high SNR values. Fig. \ref{f8} demonstrates the best $\varepsilon$ chosen by the users. It is seen that any user in both schemes starts with $\varepsilon=1$ at $\gamma=5\mathrm{dB}$. Selecting $\varepsilon=1$ in scheme B leads to $\mathsf{IEF}_{\textrm{scheme B}}=0$. However, $\mathsf{MG}_{\textrm{scheme B}}$ is kept at zero as well. Therefore, by increasing SNR, the average achievable rate starts to saturate, and hence, users switch to $\varepsilon=0.45$ for $\gamma>20\mathrm{dB}$ to avoid saturation. In scheme A, $\varepsilon$ is set at $1$ for SNR values up to $35\mathrm{dB}$. The yields $\mathsf{IEF}_{\textrm{scheme A}}=\frac{1}{2}$ which can be considered as a reason for poor performance of scheme A in the range $\gamma<15\mathrm{dB}$ compared to scheme B. Since $\mathsf{MG}_{\textrm{scheme A}}\Big|_{\varepsilon=1}=0.25$ is larger than $\mathsf{MG}_{\textrm{scheme A}}\Big|_{\varepsilon=0.45}$, the average achievable rate per user becomes eventually larger in scheme A compared to scheme B as SNR increases.  
\begin{figure}[h!b!t]
  \centering
  \includegraphics[scale=.7] {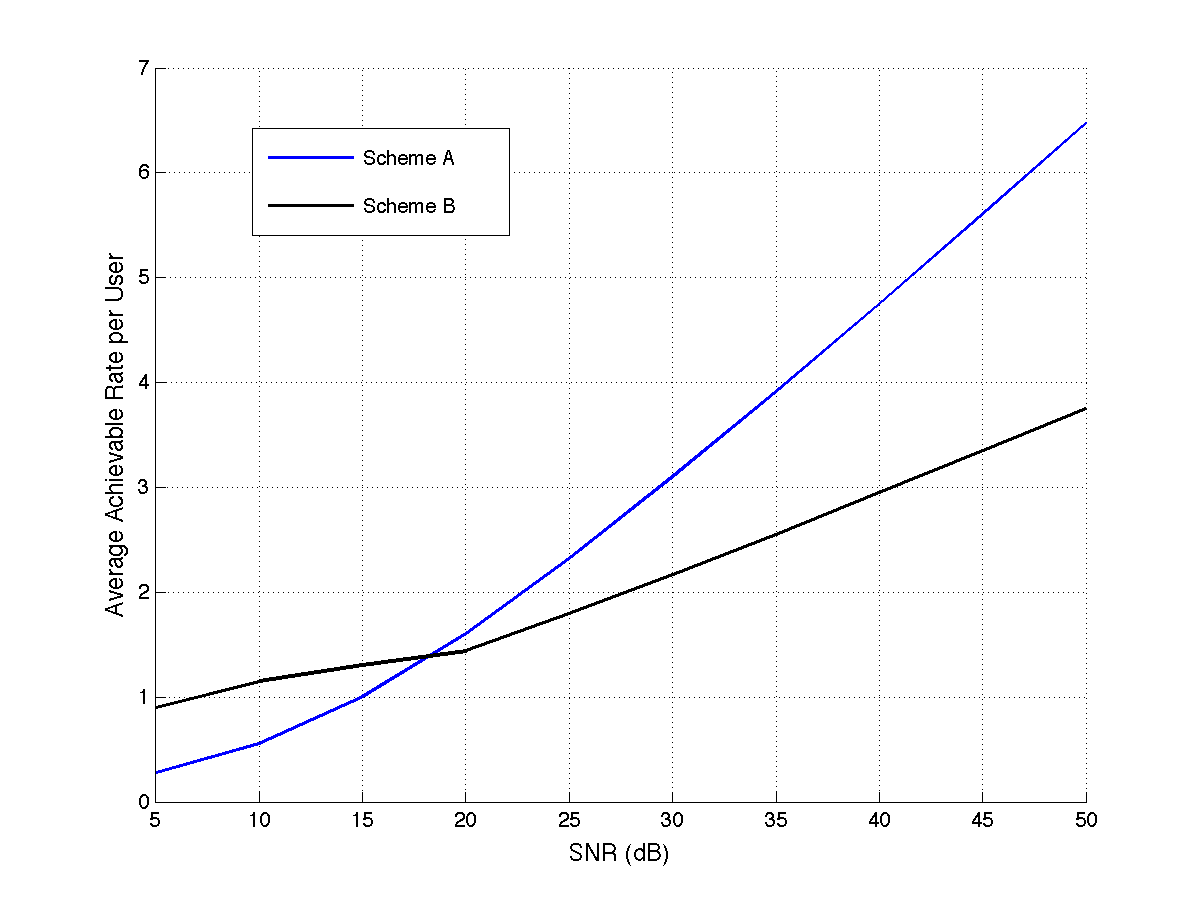}
 \caption{Comparison between $\sup_{\varepsilon}\mathrm{E}\left\{\mathsf{C}_{i}^{(\mathrm{lb})}(\vec{\boldsymbol{h}}_{i})\right\}$ in schemes A and B for different SNR values.}
  \label{f5}
 \end{figure} 
 \begin{figure}[h!b!t]
  \centering
  \includegraphics[scale=.7] {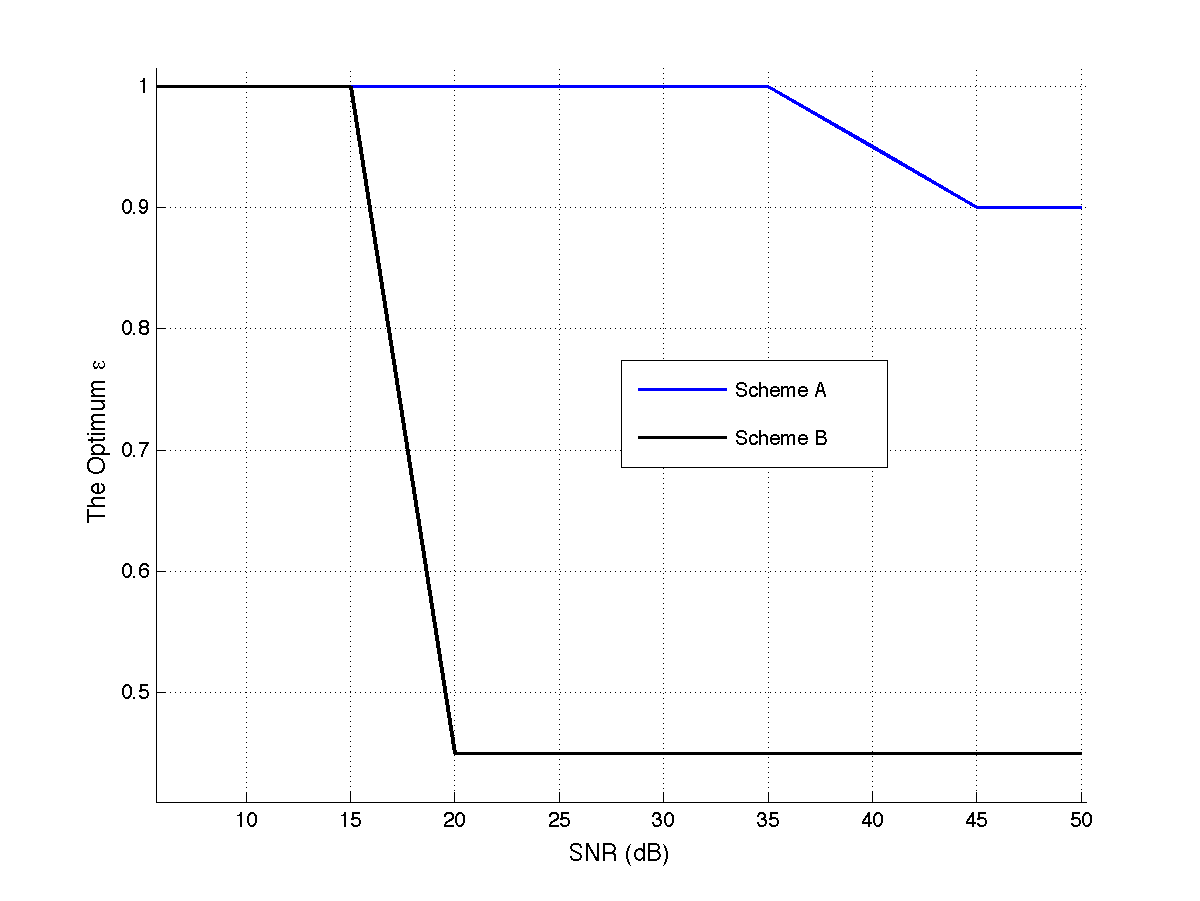}
 \caption{Comparison between $\sup_{\varepsilon}\mathrm{E}\left\{\mathsf{C}_{i}^{(\mathrm{lb})}(\vec{\boldsymbol{h}}_{i})\right\}$ in schemes A and B for different SNR values.}
  \label{f8}
 \end{figure} 
 \begin{figure}[h!b!t]
  \centering
  \includegraphics[scale=.7] {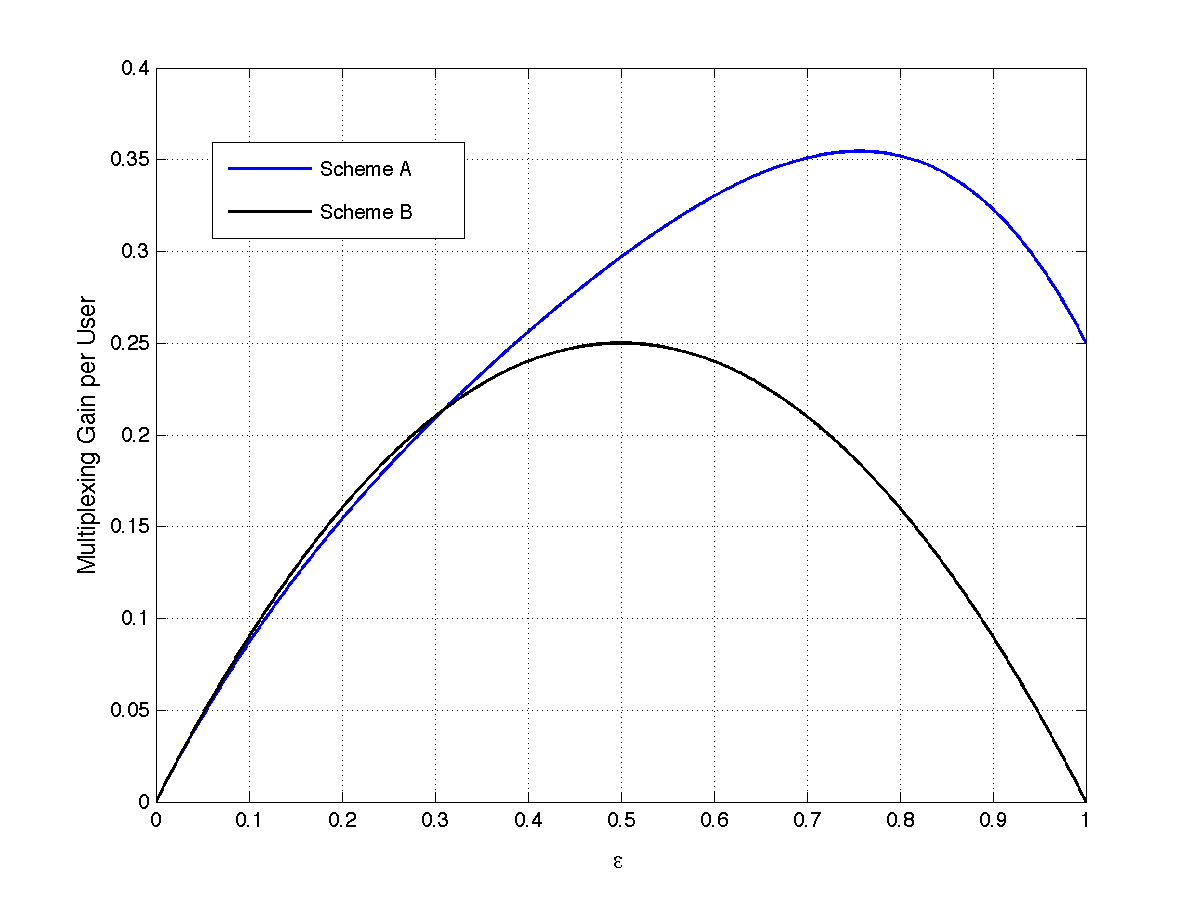}
 \caption{Comparison between $\mathsf{MG}$ in schemes A and B in terms of $\varepsilon$.}
  \label{f6}
 \end{figure}
  \begin{figure}[h!b!t]
  \centering
  \includegraphics[scale=.7] {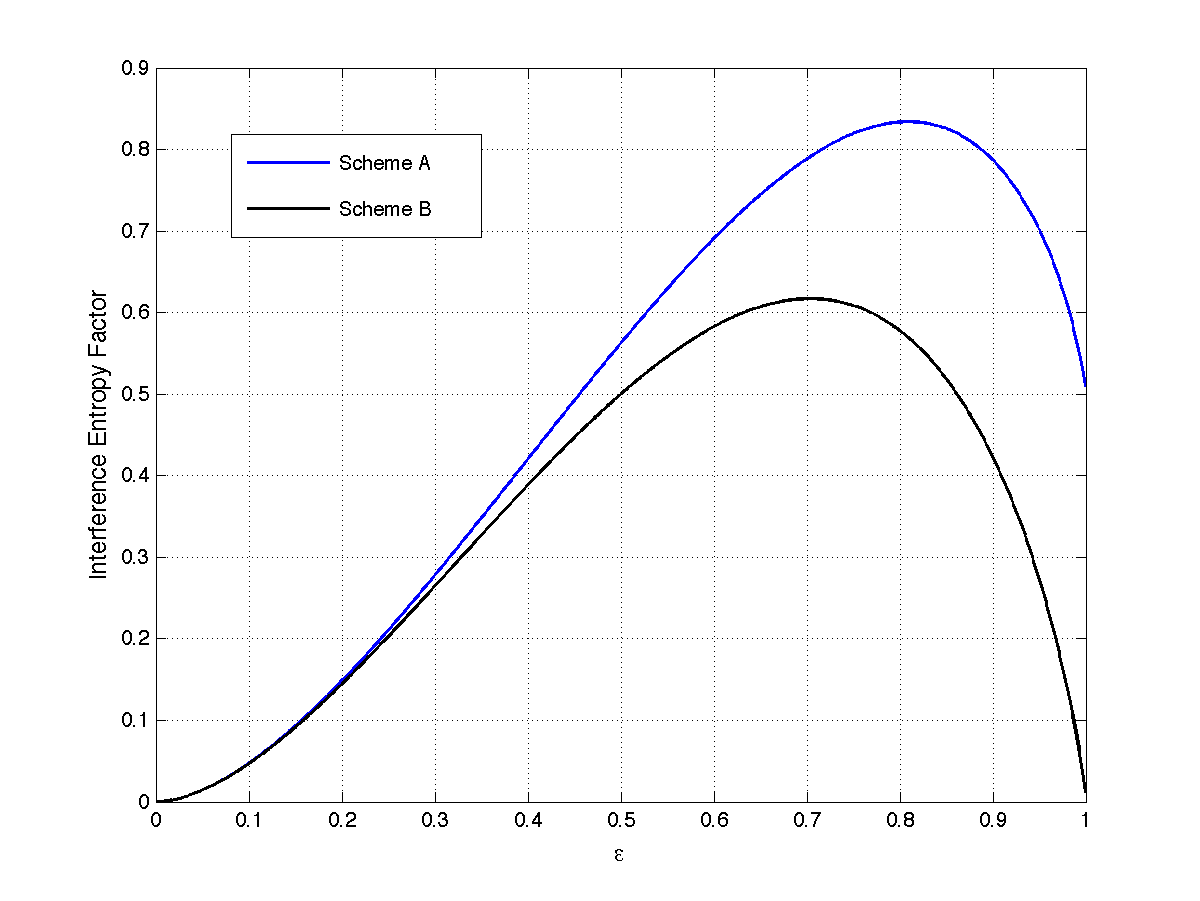}
 \caption{Comparison between $\mathsf{IEF}$ in schemes A and B in terms of $\varepsilon$.}
  \label{f7}
 \end{figure} 
 
 \textit{Example 4-} We consider a decentralized network of  $n>2$ users. We present the following scenarios:


 The signature sequence of any user consists of an spreading code over the alphabet $\{-1,1\}$ where $\mathsf{p}_{1}=\nu$ and $\mathsf{p}_{-1}=\overline{\nu}$, i.e., masking is not applied. The purpose of this example is to show that in contrast to example 3, the optimum value of $\nu$ is not necessarily $\frac{1}{2}$.

 Before proceeding, let us explain why the common intuition is to set $\nu=\frac{1}{2}$.  It is well-known that in an additive noise channel with a stationary noise process, as far as the correlation function\footnote{The correlation function of a zero-mean process $\boldsymbol{\mathsf{x}}[t]$ is the function $\mathrm{E}\{\boldsymbol{\mathsf{x}}[t]\boldsymbol{\mathsf{x}}^{\dagger}[t-\Delta t]\}$ for $\Delta t\in\mathbb{R}$. } is fixed, a stationary Gaussian noise process yields the least mutual information between the input and output. WE call this the \emph{Gaussian} bounding technique. Using this fact, one can obtain a lower bound on $\mathrm{I}(\boldsymbol{x}_{i};\vec{\boldsymbol{y}}_{i}|\vec{\boldsymbol{s}}_{i})$ as 
 \begin{equation}
\label{ }
\mathrm{I}(\boldsymbol{x}_{i};\vec{\boldsymbol{y}}_{i}|\vec{\boldsymbol{s}}_{i})\geq \log\frac{\det\mathrm{Cov}(\vec{\boldsymbol{y}}_{i})}{\det\mathrm{Cov}(\vec{\boldsymbol{w}}_{i}+\vec{\boldsymbol{z}}_{i})}.
\end{equation}
It is easy to see that 
\begin{equation}
\label{ }
\mathrm{Cov}(\vec{\boldsymbol{y}}_{i})=I_{K}+\frac{\gamma}{K}\sum_{j=1}^{n}|\boldsymbol{h}_{j,i}|^{2}\left((1-(2\nu-1)^{2})I_{K}+(2\nu-1)^{2}1_{K\times K}\right)
\end{equation} 
and 
\begin{equation}
\label{ }
\mathrm{Cov}(\vec{\boldsymbol{w}}_{i}+\vec{\boldsymbol{z}}_{i})=I_{K}+\frac{\gamma}{K}\sum_{j\neq i}|\boldsymbol{h}_{j,i}|^{2}\left((1-(2\nu-1)^{2})I_{K}+(2\nu-1)^{2}1_{K\times K}\right).
\end{equation}  
Therefore\footnote{Note that $1_{K\times K}=1_{K\times 1}1^{\mathrm{T}}_{K\times 1}$. Then, one can use the identity $\det(I_{m_{1}}+AB)=\det(I_{m_{2}}+BA)$ for any $m_{1}\times m_{2}$ and $m_{2}\times m_{1}$ matrices $A$ and $B$.}, 
\begin{equation}
\label{ }
\det\mathrm{Cov}(\vec{\boldsymbol{y}}_{i})=\left(1+\frac{(1-(2\nu-1)^{2})\gamma\sum_{j=1}^{n}|\boldsymbol{h}_{j,i}|^{2}}{K}\right)^{K}\left(1+\frac{(2\nu-1)^{2}\gamma\sum_{j=1}^{n}|\boldsymbol{h}_{j,i}|^{2}}{1+\frac{(1-(2\nu-1)^{2})\gamma\sum_{j=1}^{n}|\boldsymbol{h}_{j,i}|^{2}}{K}}\right)
\end{equation}
and 
\begin{equation}
\label{ }
\det\mathrm{Cov}(\vec{\boldsymbol{w}}_{i}+\vec{\boldsymbol{z}}_{i})=\left(1+\frac{(1-(2\nu-1)^{2})\gamma\sum_{j\neq i}|\boldsymbol{h}_{j,i}|^{2}}{K}\right)^{K}\left(1+\frac{(2\nu-1)^{2}\gamma\sum_{j\neq i}|\boldsymbol{h}_{j,i}|^{2}}{1+\frac{(1-(2\nu-1)^{2})\gamma\sum_{j\neq i}|\boldsymbol{h}_{j,i}|^{2}}{K}}\right).
\end{equation}
Finally, we come up with the following lower bound on $\frac{\mathrm{I}(\boldsymbol{x}_{i};\vec{\boldsymbol{y}}_{i}|\vec{\boldsymbol{s}}_{i})}{K}$,
\begin{eqnarray}
\frac{\mathrm{I}(\boldsymbol{x}_{i};\vec{\boldsymbol{y}}_{i}|\vec{\boldsymbol{s}}_{i})}{K}\geq\log\left(1+\frac{\frac{(1-(2\nu-1)^{2})\gamma|\boldsymbol{h}_{i,i}|^{2}}{K}}{1+\frac{(1-(2\nu-1)^{2})\gamma\sum_{j\neq i}|\boldsymbol{h}_{j,i}|^{2}}{K}}\right)+\frac{1}{K}\log\frac{1+\frac{(2\nu-1)^{2}\gamma\sum_{j=1}^{n}|\boldsymbol{h}_{j,i}|^{2}}{1+\frac{(1-(2\nu-1)^{2})\gamma\sum_{j=1}^{n}|\boldsymbol{h}_{j,i}|^{2}}{K}}}{1+\frac{(2\nu-1)^{2}\gamma\sum_{j\neq i}|\boldsymbol{h}_{j,i}|^{2}}{1+\frac{(1-(2\nu-1)^{2})\gamma\sum_{j\neq i}|\boldsymbol{h}_{j,i}|^{2}}{K}}}.
\end{eqnarray}
It is straightforward to see that this lower bound is maximized at $K=1$ and $\nu=\frac{1}{2}$ for any realization of the channel gains.  Hence, 
\begin{equation}
\label{ }
\sup_{\nu, K}\frac{\mathrm{I}(\boldsymbol{x}_{i};\vec{\boldsymbol{y}}_{i}|\vec{\boldsymbol{s}}_{i})}{K}\geq\log\left(1+\frac{\gamma |\boldsymbol{h}_{i,i}|^{2}}{1+\gamma\sum_{j\neq i}|\boldsymbol{h}_{j,i}|^{2}}\right).\end{equation}

Although, this lower bound suggests to set $K=1$ and in case $K>1$, it requires $\nu=\frac{1}{2}$, we demonstrate that  taking a $K>1$ and regulating at some $\nu\neq \frac{1}{2}$ yield achievable rates larger than the threshold 
 \begin{eqnarray}
 \label{wind}
 \tau_{n}&\triangleq& \sup_{\gamma}\mathrm{E}\left\{\log\left(1+\frac{\gamma |\boldsymbol{h}_{i,i}|^{2}}{1+\gamma\sum_{j\neq i}|\boldsymbol{h}_{j,i}|^{2}}\right)\right\}\notag\\
 &=&\frac{1}{(n-2)!}\int_{\zeta\in\mathbb{R}}\int_{\eta\in\mathbb{R}}\eta^{n-2}\log\left(1+\frac{\zeta}{\eta}\right)e^{-\zeta-\eta}d\zeta d\eta.\end{eqnarray}
 In fact, $\tau_{n}$ is the maximum average achievable rate by regulating the transmission rate of the $i^{th}$ user at $\log\left(1+\frac{\gamma |\boldsymbol{h}_{i,i}|^{2}}{1+\gamma\sum_{j\neq i}|\boldsymbol{h}_{j,i}|^{2}}\right)$. In (\ref{wind}), we have used the fact that  $|\boldsymbol{h}_{i,i}|^{2}$ is an exponential random variable with parameter $1$ and $2\sum_{j\neq i}|\boldsymbol{h}_{j,i}|^{2}$  is a $\chi^{2}_{2(n-1)}$ random variable.  
 
 Let $n=4$. In this case, $\tau_{4}=0.4809$. To compute $\mathsf{C}_{i}^{(\mathrm{lb})}(\vec{\boldsymbol{h}}_{i})$, we notice that 
 
 \textit{1-} For any $\vec{s}\in\mathrm{supp}(\vec{\boldsymbol{s}}_{i})$, $\|\vec{s}_{i}\|_{2}^{2}=K$.
 
 \textit{2-} In appendix B, it is shown that 
 \begin{equation}
\label{ }
\mathrm{H}\left(\vec{\boldsymbol{s}}_{1}\vec{\boldsymbol{s}}_{1}^{\dagger}\right)=-\sum_{k=0}^{K}{K\choose k}\left(\nu^{k+1}\overline{\nu}^{K-k}+\nu^{K-k}\overline{\nu}^{k+1}\right)\log\left(\nu^{k+1}\overline{\nu}^{K-k}+\nu^{K-k}\overline{\nu}^{k+1}\right).\end{equation}

In contrast to example 3, computing $\mathsf{C}_{i}^{(\mathrm{lb})}(\vec{\boldsymbol{h}}_{i})$ in closed form is a tedious task. As such, we calculate $\mathrm{E}\left\{\mathsf{C}_{i}^{(\mathrm{lb})}(\vec{\boldsymbol{h}}_{i})\right\}$ through simulations. Setting the SNR at $\gamma=60\mathrm{dB}$, fig. \ref{f77} sketches $\mathrm{E}\left\{\mathsf{C}_{i}^{(\mathrm{lb})}(\vec{\boldsymbol{h}}_{i})\right\}$ in terms of $\mathsf{p}_{1}=\nu$ for different values of $K$. In spite of one's intuition, the average achievable rate per user has a double-hump shape and is not maximized at $\nu=\frac{1}{2}$. It is seen that the best performance is obtained at $K=3$ and $\nu\in\{0.09,0.91\}$. $\square$
  \begin{figure}[h!b!t]
  \centering
  \includegraphics[scale=.7] {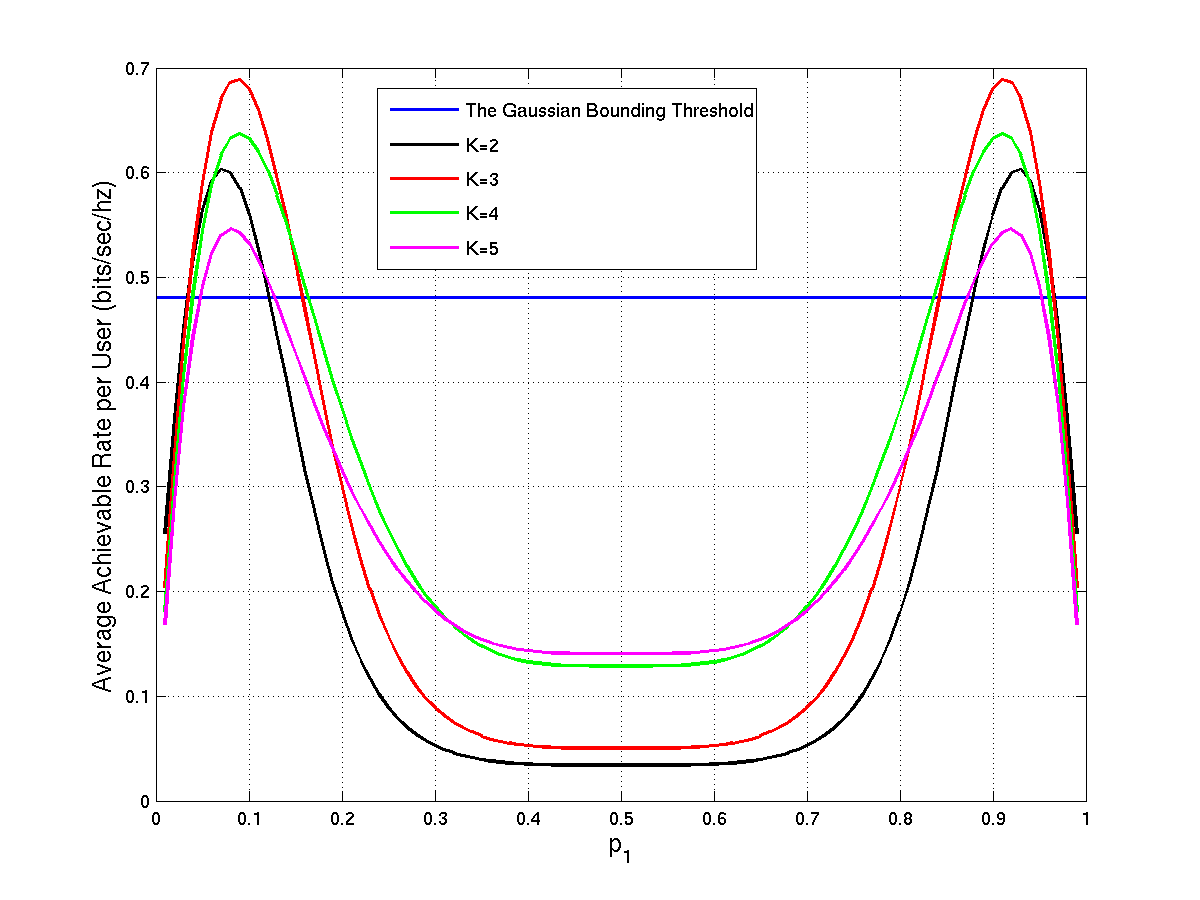}
 \caption{Comparison between $\mathsf{IEF}$ in schemes A and B in terms of $\varepsilon$.}
  \label{f77}
 \end{figure} 
 
 \textit{Remark 2-} To gain some insight on why $\mathrm{E}\left\{\mathsf{C}_{i}^{(\mathrm{lb})}(\vec{\boldsymbol{h}}_{i})\right\}$ is double-hump in example 4, one can study the multiplexing gain per user given in (\ref{MG}). Let us consider a network with $n\geq 4$ users\footnote{It can be shown that this phenomenon does not hold for $n=2,3$.} where the signature codes only consist of spreading over the alphabet $\mathscr{A}=\{-1,1\}$. In general, one can write
 \begin{equation}
\label{ }
\Big\{\vec{\boldsymbol{s}}_{i}\in\mathrm{csp}(\boldsymbol{S}_{i})\Big\}=\Big\{\vec{\boldsymbol{s}}_{i}\in\mathrm{col}(\boldsymbol{S}_{i})\cup\mathrm{col}(-\boldsymbol{S}_{i})\Big\}\bigcup\Big\{\vec{\boldsymbol{s}}_{i}\in\mathrm{csp}(\boldsymbol{S}_{i})\backslash(\mathrm{col}(\boldsymbol{S}_{i})\cup\mathrm{col}(-\boldsymbol{S}_{i}))\Big\}.
\end{equation}
Therefore,
\begin{eqnarray}
\Pr\left\{\vec{\boldsymbol{s}}_{i}\notin\mathrm{csp}(\boldsymbol{S}_{i})\right\}&=&1-\Pr\left\{\vec{\boldsymbol{s}}_{i}\in\mathrm{col}(\boldsymbol{S}_{i})\cup\mathrm{col}(-\boldsymbol{S}_{i})\right\}\notag\\&&-\Pr\left\{\vec{\boldsymbol{s}}_{i}\in\mathrm{csp}(\boldsymbol{S}_{i})\backslash(\mathrm{col}(\boldsymbol{S}_{i})\cup\mathrm{col}(-\boldsymbol{S}_{i}))\right\}\notag\\
&=&\Pr\left\{\vec{\boldsymbol{s}}_{i}\notin\mathrm{col}(\boldsymbol{S}_{i})\cup\mathrm{col}(-\boldsymbol{S}_{i})\right\}-\Pr\left\{\vec{\boldsymbol{s}}_{i}\in\mathrm{csp}(\boldsymbol{S}_{i})\backslash(\mathrm{col}(\boldsymbol{S}_{i})\cup\mathrm{col}(-\boldsymbol{S}_{i}))\right\}.\notag\\\end{eqnarray}
The term $\Pr\left\{\vec{\boldsymbol{s}}_{i}\notin\mathrm{col}(\boldsymbol{S}_{i})\cup\mathrm{col}(-\boldsymbol{S}_{i})\right\}$ can be easily calculated as 
\begin{equation}
\label{ }
\Pr\left\{\vec{\boldsymbol{s}}_{i}\notin\mathrm{col}(\boldsymbol{S}_{i})\cup\mathrm{col}(-\boldsymbol{S}_{i})\right\}=\sum_{k=0}^{K}{K\choose k}\nu^{k}\bar{\nu}^{K-k}\left(1-\nu^{k}\bar{\nu}^{K-k}-\bar{\nu}^{k}\nu^{K-k}\right)^{n-1}.
\end{equation}
On the other hand, computation of the term $\Pr\left\{\vec{\boldsymbol{s}}_{i}\in\mathrm{csp}(\boldsymbol{S}_{i})\backslash(\mathrm{col}(\boldsymbol{S}_{i})\cup\mathrm{col}(-\boldsymbol{S}_{i}))\right\}$ is not an easy task. However, the point is that both $\Pr\left\{\vec{\boldsymbol{s}}_{i}\notin\mathrm{col}(\boldsymbol{S}_{i})\cup\mathrm{col}(-\boldsymbol{S}_{i})\right\}$ and $\Pr\left\{\vec{\boldsymbol{s}}_{i}\in\mathrm{csp}(\boldsymbol{S}_{i})\backslash(\mathrm{col}(\boldsymbol{S}_{i})\cup\mathrm{col}(-\boldsymbol{S}_{i}))\right\}$ have a global maximum at $\nu=\frac{1}{2}$. Hence, there is a chance that their difference is maximized at some $\nu\neq \frac{1}{2}$. This is exactly what happens here.  As an example, fig. \ref{f77c} sketches multiplexing gain per user in terms of $\mathsf{p}_{1}$ in a network with $n=10$ users. It is assumed that the spreading code length is $K=6$. 
 \begin{figure}[h!b!t]
  \centering
  \includegraphics[scale=.7] {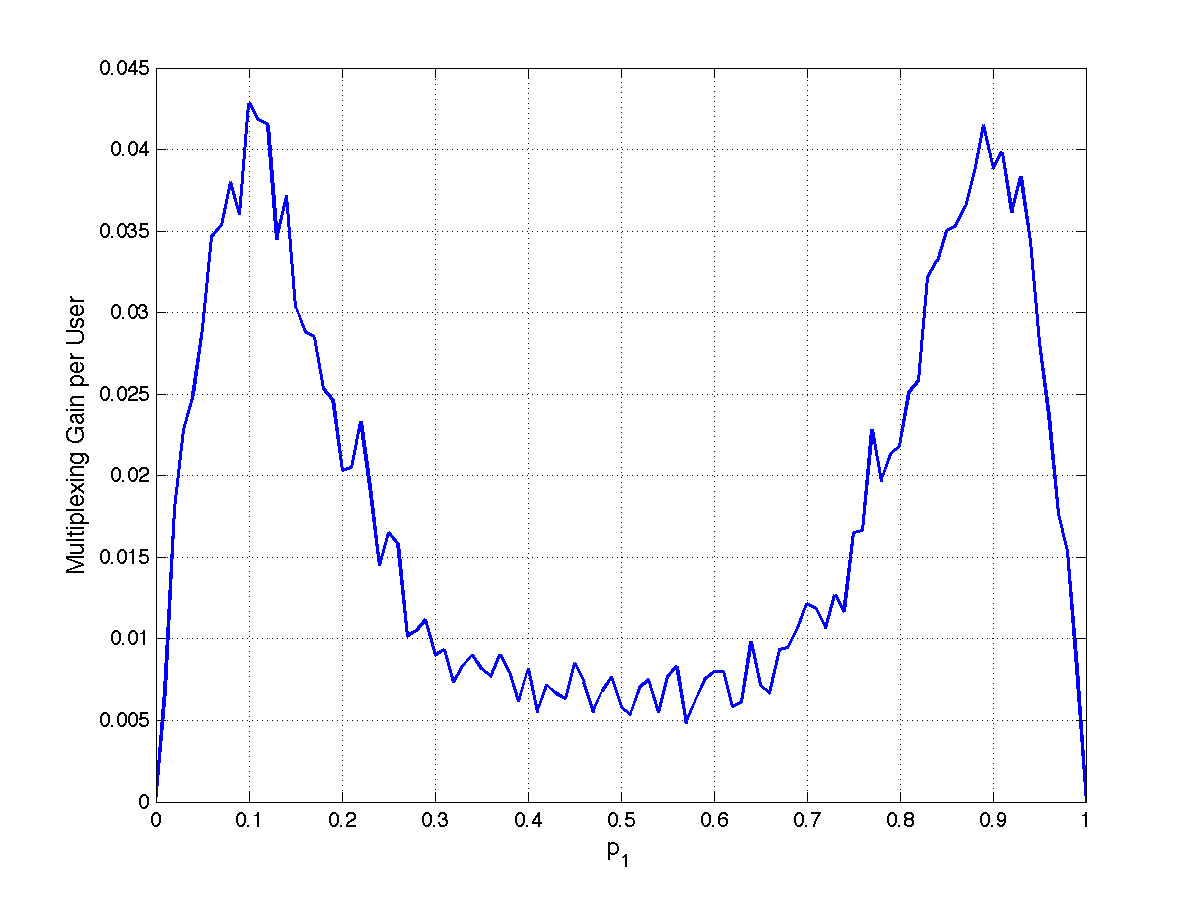}
 \caption{Sketch of $\frac{\Pr\{\vec{\boldsymbol{s}}_{i}\notin\mathrm{csp}(\boldsymbol{S}_{i})\}}{K}$ in terms of $\mathsf{p}_{1}=\nu$ in a network of $n=10$ users with $K=6$.}
  \label{f77c}
 \end{figure} 
 
 \textit{Remark 3-} The expression for the SMG given in (\ref{canada}) does not depend on the spreading/masking strategy. In fact, one can consider a more general scheme where the $i^{th}$ user randomly selects its code $\vec{\boldsymbol{s}}_{i}$ out of a globally known set of codes $\mathfrak{C}\subset \mathbb{R}^{K}\backslash\{0_{K\times 1}\}$ based on a globally known PMF. In case $n=2$, 
 \begin{eqnarray}
\label{ }
\mathsf{SMG}(2)&=&\frac{2\Pr\{\vec{\boldsymbol{s}}_{1}\notin\mathrm{span}(\vec{\boldsymbol{s}}_{2})\}}{K}\notag\\
&=&\frac{2\Pr\left\{\textrm{$\vec{\boldsymbol{s}}_{1}$ and $\vec{\boldsymbol{s}}_{2}$ are not parallel in $\mathbb{R}^{K}$}\right\}}{K}.\end{eqnarray}
Taking $K=2$, let us assume that $\mathfrak{C}$ consists of $L$ vectors in $\mathbb{R}^{2}$ no two of which are parallel with each other. Therefore, 
\begin{equation}
\label{ }
\mathsf{SMG}(2)=1-\frac{1}{L}.
\end{equation}
Since $L$ can be arbitrarily large, the SMG of a network of two users is equal to $1$. In this case, it is easy to see that 
\begin{equation}
\label{ }
\mathsf{IEF}=\log L.
\end{equation}

If $n>2$, taking a set of arbitrarily large non-parallel vectors in some space $\mathbb{R}^{K}$ is by no means a necessarily appropriate collection. Let $\mathfrak{C}=\{\vec{\mathfrak{c}}_{1},\cdots,\vec{\mathfrak{c}}_{L}\}$ consist of $L$ vectors in $\mathbb{R}^{K}$.  For each $1\leq l\leq L$ and $1\leq r\leq L-1$, we denote by $\omega_{l,r}$ the number of distinct subsets $\mathfrak{B}$ of $\mathfrak{C}$ of size $r$ such that $\vec{\mathfrak{c}}_{l}\notin\mathrm{span}(\mathfrak{B})$. We denote these subsets explicitly by $\mathfrak{B}_{l,r}(1),\cdots,\mathfrak{B}_{l,r}(\omega_{l,r})$.  Assuming all users select their codes equally likely over $\mathfrak{C}$, 
\begin{eqnarray}
\label{ }
\Pr\left\{\vec{\boldsymbol{s}}_{1}\notin\mathrm{csp}\left([\vec{\boldsymbol{s}}_{2}|\vec{\boldsymbol{s}}_{3}|\cdots|\vec{\boldsymbol{s}}_{n-1}|\vec{\boldsymbol{s}}_{n}]\right)\right\}=\frac{1}{L}\sum_{l=1}^{L}\Pr\left\{\vec{\mathfrak{c}}_{l}\notin\mathrm{csp}\left([\vec{\boldsymbol{s}}_{2}|\vec{\boldsymbol{s}}_{3}|\cdots|\vec{\boldsymbol{s}}_{n-1}|\vec{\boldsymbol{s}}_{n}]\right)\right\}.\end{eqnarray}
For each $1\leq l\leq L$, 
\begin{eqnarray}
\Pr\left\{\vec{\mathfrak{c}}_{l}\notin\mathrm{csp}\left([\vec{\boldsymbol{s}}_{2}|\vec{\boldsymbol{s}}_{3}|\cdots|\vec{\boldsymbol{s}}_{n-1}|\vec{\boldsymbol{s}}_{n}]\right)\right\}=\sum_{r=1}^{L-1}\sum_{m=1}^{\omega_{l,r}}\Pr\left\{\forall\vec{\mathfrak{b}}\in\mathfrak{B}_{l,r}(m),\exists j\geq 2: \vec{\boldsymbol{s}}_{j}=\vec{\mathfrak{b}}\right\}.
\end{eqnarray}
It is easy to see that\footnote{Assuming the $1^{st},\cdots,(r-1)^{th}$ and $r^{th}$ elements of $\mathfrak{B}_{l,r}(m)$ are chosen by $t_{1},\cdots,t_{r-1}$ and $t_{r}$ users respectively, this can happen in $\sum_{\substack{t_{1}+\cdots+t_{r}=n-1\\t_{1},\cdots,t_{r}\geq 1}}\frac{(n-1)!}{t_{1}!\cdots t_{r}!}$ different ways.} \begin{equation}\Pr\left\{\forall\vec{\mathfrak{b}}\in\mathfrak{B}_{l,r}(m),\exists j\geq 2: \vec{\boldsymbol{s}}_{j}=\vec{\mathfrak{b}}\right\}=\frac{\sum_{\substack{t_{1}+\cdots+t_{r}=n-1\\t_{1},\cdots,t_{r}\geq 1}}\frac{(n-1)!}{t_{1}!\cdots t_{r}!}}{L^{n-1}}\end{equation} for any $1\leq l\leq L$ and $1\leq m\leq \omega_{l,r}$. Therefore, 
\begin{eqnarray}
\Pr\left\{\vec{\boldsymbol{s}}_{1}\notin\mathrm{csp}\left([\vec{\boldsymbol{s}}_{2}|\vec{\boldsymbol{s}}_{3}|\cdots|\vec{\boldsymbol{s}}_{n-1}|\vec{\boldsymbol{s}}_{n}]\right)\right\}=\frac{1}{L^{n}}\sum_{l=1}^{L}\sum_{r=1}^{L-1}\omega_{l,r}\rho_{r,n}
\end{eqnarray}
where
\begin{equation}
\label{ }
\rho_{r,n}\triangleq \sum_{\substack{t_{1}+\cdots+t_{r}=n-1\\t_{1},\cdots,t_{r}\geq 1}}\frac{(n-1)!}{t_{1}!\cdots t_{r}!}.\end{equation}
Finally, the achieved SMG is 
\begin{equation}
\label{bool}
\mathsf{SMG}(n)=\frac{n\sum_{l=1}^{L}\sum_{r=1}^{L-1}\omega_{l,r}\rho_{r,n}}{KL^{n }}.\end{equation}
We remark that there is no closed formula for $\rho_{r,n}$, however, one can use the recursion 
\begin{equation}
\label{ }
r^{n-1}=\rho_{r,n}+\sum_{r'=1}^{r-1}{r\choose r'}\rho_{r-r',n}
\end{equation}
to compute this quantity. By (\ref{bool}), one can easily see that $\mathsf{SMG}(n)$ is maximized if $\omega_{l,r}$ is as large as possible for each $1\leq l\leq L$ and $1\leq r\leq L-1$. We know that $\omega_{l,r}\leq {L-1\choose r}$. This upper bound is achieved if $\mathfrak{C}$ consists of $L\leq K$ independent vectors in $\mathbb{R}^{K}$. In this case,  
\begin{equation}
\label{ }
\mathsf{SMG}(n)=\frac{n\sum_{r=1}^{L-1}{L-1\choose r}\rho_{r,n}}{KL^{n-1}}.
\end{equation} 
It is not hard to see that $\sum_{r=1}^{L-1}{L-1\choose r}\rho_{r,n}=(L-1)^{n-1}$. Hence, 
\begin{equation}
\label{ }
\mathsf{SMG}(n)=\frac{n}{K}\left(1-\frac{1}{L}\right)^{n-1}.
\end{equation}
To get the largest SMG, one may let $L=K$ yielding
\begin{eqnarray}
\label{ }
\sup_{K\geq 1}\mathsf{SMG}(n)=\left(1-\frac{1}{n}\right)^{n-1}
\end{eqnarray}
which is the result obtained in example 1 via masking without spreading. 

 \section{Optimality Results}
We have already seen that applying masking on top of spreading can result in larger achievable rates due to increasing the attained multiplexing gain. However, our results so far are based on the achievable rate $\mathsf{C}_{i}^{(\mathrm{lb})}(\vec{h}_{i})$ which is only a lower bound on the \emph{capacity} of the $i^{th}$ user. In deriving $\mathsf{C}_{i}^{(\mathrm{lb})}(\vec{h}_{i})$, the PDF of the transmitted signals is taken to be complex Gaussian which is not necessarily optimal. As such, we have no optimality arguments so far. 

In this section, we question the optimality of masking without spreading. In fact, we are interested to see if at any SNR level, there is an optimal PDF such that generating the transmitted signals based on this PDF makes spreading unnecessary. For this purpose, we define the \emph{masking capacity} of a user as the largest achievable rate by this user assuming all users follow the masking strategy with no spreading applied. We also require \emph{fairness} conditions by which we imply that users generate their signals using the same PDF.  Fixing $\varepsilon\in(0,1]$, the masking capacity of the $i^{th}$ user is defined by 
\begin{equation}
\label{ }
\mathscr{MC}_{i}(\varepsilon;\gamma,(h_{j,i})_{j=1}^{n})\triangleq\sup_{\substack{\boldsymbol{x}_{1},\cdots,\boldsymbol{x}_{n}\sim\mathrm{i.i.d}\\\mathrm{E}\{|\boldsymbol{x}_{1}|^{2}\}\leq \gamma}}\mathrm{I}(\boldsymbol{x}_{i},\boldsymbol{\mathfrak{m}}_{i};\boldsymbol{y}_{i})\end{equation}
where
\begin{equation}
\label{ }
\boldsymbol{y}_{i}=\varepsilon^{-\frac{1}{2}} h_{i,i}\boldsymbol{\mathfrak{m}}_{i}\boldsymbol{x}_{i}+\varepsilon^{-\frac{1}{2}}\sum_{j\neq i}\beta h_{j,i}\boldsymbol{\mathfrak{m}}_{j}\boldsymbol{x}_{j}+\boldsymbol{z}_{i}
\end{equation}
in which $\boldsymbol{\mathfrak{m}}_{i}$ is the masking coefficient of the $i^{th}$ user which is a $\mathrm{Ber}(\varepsilon)$ random variable and $\boldsymbol{z}_{i}$ is the $\mathcal{CN}(0,1)$ ambient noise random variable.  The parameter $\varepsilon$ is designed based on maximizing a globally available utility function such as $\mathrm{E}\left\{\mathsf{C}_{i}^{(\mathrm{lb})}(\vec{\boldsymbol{h}}_{i})\right\}$ assuming $(h_{i,j})_{i,j=1}^{n}$ are realizations of $\mathrm{i.i.d.}$ random variables with a continuous PDF.

We focus on a decentralized network of $n=2$ users. We call the users as user \#1 and user \#2. According to the results in example 3 (scheme B), the decision rule to regulate $\varepsilon$ is 
\begin{eqnarray}
\label{pel}
\hat{\varepsilon}=\arg\max_{\varepsilon\in(0,1]}\mathrm{E}\left\{\varepsilon\log\left(1+\frac{2^{-\mathscr{H}(\varepsilon)}|\boldsymbol{h}_{1,1}|^{2}\gamma}{\varepsilon\left(1+\frac{|\boldsymbol{h}_{2,1}|^{2}\gamma}{\varepsilon}\right)^{\varepsilon}}\right)\right\}.
\end{eqnarray}  
The main result of the paper is the following. 
\begin{thm}
There exist $\alpha_{1}\in [0,\frac{1}{2})$ and $\alpha_{2}\in(\frac{1}{2},1]$ such that for any $h_{1,1},h_{2,1}\in\mathbb{C}$, it is possible to achieve rates larger than $\mathscr{MC}_{1}(\hat{\varepsilon};\gamma,h_{1,1},h_{2,1})$ for sufficiently large values of $\gamma$ where $\hat{\varepsilon}\in(\alpha_{1},\alpha_{2})$ is given in (\ref{pel}).
\end{thm}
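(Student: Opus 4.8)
The plan is to reduce the statement to a comparison of high-SNR slopes. Both the achievable rate of a spreading-plus-masking scheme and $\mathscr{MC}_{1}(\hat\varepsilon;\gamma,h_{1,1},h_{2,1})$ grow at most like $\log\gamma$, so it is enough to produce a spreading-plus-masking scheme whose slope $\lim_{\gamma\to\infty}(\cdot)/\log\gamma$ strictly exceeds that of the masking capacity; a strict gap between the two slopes then forces the desired inequality for all sufficiently large $\gamma$. For the achievable side I would take Scheme A of Example 3 ($K=2$, $\mathscr{A}=\{-1,1\}$) run with masking probability $\hat\varepsilon$; by the multiplexing-gain analysis of Section III (Propositions 1 and 2) its achievable rate $\mathsf{C}_{1}^{(\mathrm{lb})}$ has slope $\mathsf{MG}_{\textrm{scheme A}}=\tfrac12-\tfrac{\bar\varepsilon^{2}}{2}-(\varepsilon\bar\varepsilon)^{2}-\tfrac{\varepsilon^{4}}{4}$ evaluated at $\varepsilon=\hat\varepsilon$.

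The heart of the proof is an upper bound on $\mathscr{MC}_{1}$. Writing $\boldsymbol{y}_{1}=\boldsymbol{u}_{1}+\boldsymbol{v}+\boldsymbol{z}_{1}$ with $\boldsymbol{u}_{1}=\varepsilon^{-1/2}h_{1,1}\boldsymbol{\mathfrak{m}}_{1}\boldsymbol{x}_{1}$ and $\boldsymbol{v}=\varepsilon^{-1/2}h_{2,1}\boldsymbol{\mathfrak{m}}_{2}\boldsymbol{x}_{2}$, one has $\mathrm{I}(\boldsymbol{x}_{1},\boldsymbol{\mathfrak{m}}_{1};\boldsymbol{y}_{1})=\mathrm{h}(\boldsymbol{y}_{1})-\mathrm{h}(\boldsymbol{v}+\boldsymbol{z}_{1})$. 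Using $\mathrm{h}(\boldsymbol{y}_{1})\le \mathrm{h}(\boldsymbol{y}_{1}\,|\,\boldsymbol{\mathfrak{m}}_{1},\boldsymbol{\mathfrak{m}}_{2})+2\mathscr{H}(\varepsilon)$ and $\mathrm{h}(\boldsymbol{v}+\boldsymbol{z}_{1})\ge \mathrm{h}(\boldsymbol{v}+\boldsymbol{z}_{1}\,|\,\boldsymbol{\mathfrak{m}}_{2})$ and expanding over the four mask patterns, I would obtain, with $a'=\varepsilon^{-1/2}h_{1,1}$ and $a$ the effective interference coefficient proportional to $h_{2,1}$,
\begin{equation}
\mathscr{MC}_{1}\le \sup_{\boldsymbol{x}}\Big[\varepsilon\bar\varepsilon\,\mathrm{I}(\boldsymbol{x}_{1};a'\boldsymbol{x}_{1}+\boldsymbol{z}_{1})+\varepsilon^{2}\,\mathrm{I}(\boldsymbol{x}_{1};a'\boldsymbol{x}_{1}+a\boldsymbol{x}_{2}+\boldsymbol{z}_{1})\Big]+2\mathscr{H}(\varepsilon).
\end{equation}
The first (collision-free) term has slope $\varepsilon\bar\varepsilon$; the task is to show the second (interference-limited) term contributes no slope, so that $\lim_{\gamma\to\infty}\mathscr{MC}_{1}/\log\gamma=\hat\varepsilon\bar{\hat\varepsilon}$. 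I would prove this by applying the extremal inequality of Liu-Viswanath \cite{LV} to the bracketed weighted sum, recasting it as a difference $\mathrm{h}(\,\cdot+\boldsymbol{Z}_{1})-\mu\,\mathrm{h}(\,\cdot+\boldsymbol{Z}_{2})$ whose maximizer is Gaussian whenever the multiplier obeys $\mu\ge1$; for Gaussian $\boldsymbol{x}$ the interference-limited mutual information is explicit and bounded, pinning the slope at $\hat\varepsilon\bar{\hat\varepsilon}$.

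To finish I would analyze the defining relation (\ref{pel}) for $\hat\varepsilon$. Studying the stationarity and monotonicity of $\mathrm{E}\{\varepsilon\log(1+\cdots)\}$ shows that, for large $\gamma$, its maximizer lies in an interval $(\alpha_{1},\alpha_{2})$ with $\alpha_{1}\in[0,\tfrac12)$ and $\alpha_{2}\in(\tfrac12,1]$. The upper endpoint $\alpha_{2}$ is the threshold beyond which the Liu-Viswanath multiplier would fall below $1$ (equivalently, beyond which symmetric discrete inputs begin to extract positive collision-slot degrees of freedom), so for $\hat\varepsilon<\alpha_{2}$ the Gaussian bound above is valid; the lower endpoint $\alpha_{1}$ is the threshold above which $\mathsf{MG}_{\textrm{scheme A}}(\varepsilon)>\varepsilon\bar\varepsilon$. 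Hence for $\hat\varepsilon\in(\alpha_{1},\alpha_{2})$ one has $\mathsf{MG}_{\textrm{scheme A}}(\hat\varepsilon)>\hat\varepsilon\bar{\hat\varepsilon}=\lim_{\gamma\to\infty}\mathscr{MC}_{1}/\log\gamma$, and the strict slope gap yields the theorem.

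The main obstacle is the Liu-Viswanath step. The effective noise $a\boldsymbol{x}_{2}+\boldsymbol{z}_{1}$ in the interference-limited term is non-Gaussian and, because of the fairness constraint, is driven by the very distribution being optimized, so the two entropy terms cannot be separated and the inequality is not directly in its standard form. Worse, the interference-limited term taken alone has unbounded slope, since symmetric PAM/lattice inputs align on the signal scale and achieve positive degrees of freedom; thus the extremal inequality must be applied to the full weighted objective, where the slope it forces out of the collision-free term dominates any collision-slot gain. Certifying the multiplier condition $\mu\ge1$ and translating it into the window $\hat\varepsilon<\alpha_{2}$ is where the real work lies.
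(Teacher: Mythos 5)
Your overall architecture --- reducing to a comparison of high-SNR slopes, the genie bound yielding $\varepsilon\bar\varepsilon\,\mathrm{I}(\boldsymbol{x}_{1};\cdot)+\varepsilon^{2}\,\mathrm{I}(\boldsymbol{x}_{1};\cdot)+O(1)$, recombining the entropies into a Liu--Viswanath difference, and using the $K=2$ spreading-plus-masking scheme on the achievable side --- is exactly the paper's. But two of your key claims are wrong in ways that would sink the execution. First, the multiplier direction is inverted: after recombination the LV difference is $\mathrm{h}(\boldsymbol{x}_{1}+\boldsymbol{z}'_{1})-\frac{\varepsilon}{\bar\varepsilon}\,\mathrm{h}(\boldsymbol{x}_{1}+\boldsymbol{z}''_{1})$, so the multiplier $\mu=\varepsilon/\bar\varepsilon$ satisfies $\mu\geq 1$ precisely when $\varepsilon\geq\frac{1}{2}$; the ``unconditional'' Gaussian-extremality regime is therefore the \emph{upper} part of the window, not $\hat\varepsilon<\alpha_{2}$ as you assert. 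In the regime $\varepsilon<\frac{1}{2}$ --- which is where $\hat\varepsilon$ actually lies for large $\gamma$, and where the slope bound $\varepsilon\bar\varepsilon$ is the relevant one --- you have $\mu<1$, and Lemma 7 applies only under the variance ordering $\sigma_{1}^{2}\leq\sigma_{2}^{2}$, i.e.\ $|h_{1,1}|\geq|h_{2,1}|$. Since the theorem is claimed for \emph{all} $h_{1,1},h_{2,1}\in\mathbb{C}$, an extra step is needed to remove that restriction (the paper uses monotonicity of $\mathscr{MC}_{1}$ in the direct gain); your plan never confronts this case.

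Second, your claimed conclusion $\lim_{\gamma\to\infty}\mathscr{MC}_{1}/\log\gamma=\hat\varepsilon\bar{\hat\varepsilon}$ is not what this method yields across the whole window. After the recombination, the collision slot leaves behind the bare entropy term $\varepsilon^{2}\,\mathrm{h}(\varepsilon^{-\frac{1}{2}}h_{1,1}\boldsymbol{x}_{1}+\varepsilon^{-\frac{1}{2}}h_{2,1}\boldsymbol{x}_{2}+\boldsymbol{z}_{1})$, which can only be bounded by the maximum-entropy lemma and hence contributes slope $\varepsilon^{2}$ regardless; the LV term merely prevents alignment gains (and, for $\varepsilon<\frac{1}{2}$, itself contributes slope $\frac{\bar\varepsilon-\varepsilon}{\bar\varepsilon}$). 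The bound that comes out is $\mathscr{MC}_{1}\lesssim\max\{\varepsilon^{2},\varepsilon\bar\varepsilon\}\log\gamma$, which equals $\varepsilon\bar\varepsilon\log\gamma$ only for $\varepsilon\leq\frac{1}{2}$ and is $\varepsilon^{2}\log\gamma$ for $\varepsilon\geq\frac{1}{2}$. Consequently the window must be defined by $\mathsf{MG}_{\textrm{scheme A}}(\varepsilon)>\max\{\varepsilon^{2},\varepsilon\bar\varepsilon\}$: your description of $\alpha_{1}$ (where the scheme's multiplexing gain crosses $\varepsilon\bar\varepsilon$, root of $5\varepsilon^{2}-8\varepsilon+2$) is right, but $\alpha_{2}\approx 0.5653$ is the root of $5\varepsilon^{3}-8\varepsilon^{2}+10\varepsilon-4$, i.e.\ where the multiplexing gain falls below $\varepsilon^{2}$ --- it has nothing to do with an LV-multiplier threshold. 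Finally, the theorem also requires verifying that $\hat\varepsilon$ defined by (\ref{pel}) actually lands in $(\alpha_{1},\alpha_{2})$ for large $\gamma$ (the paper checks $\hat\varepsilon\in(0.4,0.5)$ for $\gamma>30\mathrm{dB}$); your appeal to ``stationarity and monotonicity'' leaves this step open.
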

To prove Theorem 1, we need the following Lemma. 
   \begin{lem}
   Let $\mathbf{Z}_{1}$ and $\mathbf{Z}_{2}$ be circularly symmetric complex Gaussian random variables with variances $\sigma_{1}^{2}$ and $\sigma_{2}^{2}$ respectively and $\mathbf{X}$ be independent of $(\mathbf{Z}_{1},\mathbf{Z}_{2})$. Then,  the answer to the optimization problem
   \begin{equation}
\label{ }
\sup_{\mathbf{X}:\mathrm{E}\{|\mathbf{X}|^{2}\}\leq P}\mathrm{h}(\mathbf{X}+\mathbf{Z}_{1})-\xi\mathrm{h}(\mathbf{X}+\mathbf{Z}_{2})
\end{equation}
is a circularly symmetric complex Gaussian $\mathbf{X}$ for any $P>0$ and any $\xi\geq 1$. Also, if $\sigma_{1}^{2}\leq \sigma_{2}^{2}$, the same conclusion holds for any $\xi\in\mathbb{R}$.   \end{lem}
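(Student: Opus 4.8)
The plan is to split the maximization according to the sign of the multiplier $\xi$ and the ordering of the variances, since the difficulty is concentrated in a single regime. Write the objective as $J(\mathbf{X})\triangleq\mathrm{h}(\mathbf{X}+\mathbf{Z}_1)-\xi\,\mathrm{h}(\mathbf{X}+\mathbf{Z}_2)$, and observe that $J$ depends on the laws of $\mathbf{Z}_1,\mathbf{Z}_2$ only through their marginals, so I am free to couple the two noises in whatever joint law is convenient. I would first dispose of the case $\xi\le 0$. Here $-\xi\ge 0$, and for a fixed second moment $\mathrm{E}\{|\mathbf{X}|^2\}=p\le P$ both $\mathrm{h}(\mathbf{X}+\mathbf{Z}_1)$ and $\mathrm{h}(\mathbf{X}+\mathbf{Z}_2)$ are maximized by the same circularly symmetric Gaussian $\mathbf{X}$ (maximum-entropy principle), with the resulting maxima increasing in $p$; hence the Gaussian input at full power $P$ maximizes $J=\mathrm{h}(\mathbf{X}+\mathbf{Z}_1)+|\xi|\,\mathrm{h}(\mathbf{X}+\mathbf{Z}_2)$. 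This requires no assumption on the variances.

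The core elementary step is the degraded range $0<\xi<1$ under the hypothesis $\sigma_1^2\le\sigma_2^2$. Since only marginals matter, I would realize $\mathbf{Z}_2\stackrel{d}{=}\mathbf{Z}_1+\mathbf{Z}'$ with $\mathbf{Z}'\sim\mathcal{CN}(0,\sigma_2^2-\sigma_1^2)$ independent of $(\mathbf{X},\mathbf{Z}_1)$, so that $\mathbf{X}+\mathbf{Z}_2$ has the law of $(\mathbf{X}+\mathbf{Z}_1)+\mathbf{Z}'$. The scalar complex entropy power inequality then gives $2^{\mathrm{h}(\mathbf{X}+\mathbf{Z}_2)}\ge 2^{\mathrm{h}(\mathbf{X}+\mathbf{Z}_1)}+2^{\mathrm{h}(\mathbf{Z}')}$, and because $-\xi<0$ this yields the upper bound $J(\mathbf{X})\le g\big(\mathrm{h}(\mathbf{X}+\mathbf{Z}_1)\big)$, where $g(t)\triangleq t-\xi\log_2\!\big(2^{t}+c\big)$ and $c\triangleq 2^{\mathrm{h}(\mathbf{Z}')}$. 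A one-line derivative computation gives $g'(t)=1-\xi\,\frac{2^{t}}{2^{t}+c}>1-\xi>0$ for $0<\xi<1$, so $g$ is strictly increasing and $J$ is bounded by $g$ evaluated at the largest achievable value of $\mathrm{h}(\mathbf{X}+\mathbf{Z}_1)$. Under the power constraint that value is $\log_2\!\big(\pi e(P+\sigma_1^2)\big)$, attained exactly by the Gaussian input of power $P$, at which the entropy power inequality is moreover tight; thus the bound is achieved and the Gaussian is optimal.

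The remaining and genuinely hard regime is $\xi\ge 1$ with no ordering assumption on $\sigma_1^2,\sigma_2^2$. When $\sigma_1^2>\sigma_2^2$ the degraded coupling runs the other way, $\mathbf{Z}_1\stackrel{d}{=}\mathbf{Z}_2+\mathbf{Z}''$, so writing $\mathbf{Y}=\mathbf{X}+\mathbf{Z}_2$ the objective becomes $\mathrm{h}(\mathbf{Y}+\mathbf{Z}'')-\xi\,\mathrm{h}(\mathbf{Y})$; here the entropy power inequality only lower-bounds the positive term, and controlling it would require a reverse entropy power inequality, which is false in general. This is the main obstacle, and it is exactly the situation resolved by the extremal inequality of Liu--Viswanath \cite{LV}: for weight $\xi\ge 1$ and a covariance (here second-moment) constraint, the maximizer of $\mathrm{h}(\mathbf{X}+\mathbf{Z}_1)-\xi\,\mathrm{h}(\mathbf{X}+\mathbf{Z}_2)$ is Gaussian irrespective of the noise covariances, the proof proceeding through a perturbation argument and the worst-case (Gaussian) additive-noise characterization rather than the entropy power inequality. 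I would therefore invoke the scalar complex specialization of \cite{LV} for this regime. Collecting the three cases---$\xi\le 0$ and $0<\xi<1$ with $\sigma_1^2\le\sigma_2^2$ handled by the maximum-entropy and entropy-power arguments, and $\xi\ge 1$ by \cite{LV}---covers precisely the two claims of the Lemma and completes the proof.
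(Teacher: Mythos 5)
Your proof is correct, and it takes a genuinely different route from the paper's, whose entire proof is the one-line assertion that the lemma is a direct consequence of Theorem 1 in \cite{LV}. You split the parameter range and reserve the extremal inequality for the only regime that needs it: for $\xi\le 0$ the maximum-entropy principle alone suffices (both entropies are simultaneously maximized by the full-power circularly symmetric Gaussian, with no assumption on the variances); for $0<\xi<1$ with $\sigma_{1}^{2}\le\sigma_{2}^{2}$ the degraded coupling $\mathbf{Z}_{2}\stackrel{d}{=}\mathbf{Z}_{1}+\mathbf{Z}'$, the scalar complex EPI, and the strict monotonicity of $g(t)=t-\xi\log_{2}(2^{t}+c)$ give an upper bound that the Gaussian input attains, since EPI equality is automatic for circularly symmetric scalar Gaussians (and in the boundary case $\sigma_{1}^{2}=\sigma_{2}^{2}$, where $c=0$, the bound degenerates to $(1-\xi)\mathrm{h}(\mathbf{X}+\mathbf{Z}_{1})$ and the conclusion is immediate); only for $\xi\ge 1$ with unordered variances do you invoke \cite{LV}. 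This buys completeness: Theorem 1 of \cite{LV} addresses the weight regime $\mu\ge 1$ --- exactly the case the classical EPI cannot reach --- so the paper's blanket citation is elliptical for the $\xi<1$ portion of the second claim, and your case analysis supplies precisely the classical degraded-case arguments on which that portion actually rests. What the paper's route buys is brevity; one economy you both share is the transfer of \cite{LV}, stated for real random vectors under a covariance-matrix constraint, to the circularly symmetric complex scalar problem under a power (trace) constraint, together with the conclusion that the optimal Gaussian may be taken circularly symmetric --- this step is left implicit in both write-ups and would merit a sentence.
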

   \begin{proof}
   This is a direct consequence of Theorem 1 in \cite{LV}.   \end{proof}
Our strategy is to find an upper bound on $\mathscr{MC}_{1}(\varepsilon;\gamma,h_{1,1},h_{2,1})$ for arbitrary $\varepsilon\in(0,1]$ and proposing an achievable rate which surpasses this upper bound.
\subsection{Upper Bound on $\mathscr{MC}_{1}(\varepsilon;\gamma,h_{1,1},h_{2,1})$}
We proceed as follows. We have
\begin{eqnarray}
\label{ee1}
\mathrm{I}(\boldsymbol{x}_{1},\boldsymbol{\mathfrak{m}}_{1};\boldsymbol{y}_{1})&=&\mathrm{I}(\boldsymbol{x}_{1};\boldsymbol{y}_{1}|\boldsymbol{\mathfrak{m}}_{1})+\mathrm{I}(\boldsymbol{\mathfrak{m}}_{1};\boldsymbol{y}_{1})\notag\\
&\leq &\mathrm{I}(\boldsymbol{x}_{1};\boldsymbol{y}_{1}|\boldsymbol{\mathfrak{m}}_{1})+\mathrm{H}(\boldsymbol{\mathfrak{m}}_{1})\notag\\
&=&\varepsilon\,\mathrm{I}(\boldsymbol{x}_{1};\varepsilon^{-\frac{1}{2}}h_{1,1}\boldsymbol{\mathfrak{m}}_{1}\boldsymbol{x}_{1}+h_{2,1}\boldsymbol{\mathfrak{m}}_{2}\boldsymbol{x}_{2}+\boldsymbol{z}_{1}|\boldsymbol{\mathfrak{m}}_{1}=1)\notag\\
&&+\bar{\varepsilon}\,\,\mathrm{I}(\boldsymbol{x}_{1};\varepsilon^{-\frac{1}{2}}h_{1,1}\boldsymbol{\mathfrak{m}}_{1}\boldsymbol{x}_{1}+\varepsilon^{-\frac{1}{2}}h_{2,1}\boldsymbol{\mathfrak{m}}_{2}\boldsymbol{x}_{2}+\boldsymbol{z}_{1}|\boldsymbol{\mathfrak{m}}_{1}=0)+\mathscr{H}(\varepsilon)\notag\\
&\stackrel{(a)}{=}&\varepsilon\mathrm{I}(\boldsymbol{x}_{1};\varepsilon^{-\frac{1}{2}}h_{1,1}\boldsymbol{x}_{1}+\varepsilon^{-\frac{1}{2}}h_{2,1}\boldsymbol{\mathfrak{m}}_{2}\boldsymbol{x}_{2}+\boldsymbol{z}_{1})+\mathscr{H}(\varepsilon)\notag\\
&\stackrel{(b)}{\leq}&\varepsilon\mathrm{I}(\boldsymbol{x}_{1};\varepsilon^{-\frac{1}{2}}h_{1,1}\boldsymbol{x}_{1}+\varepsilon^{-\frac{1}{2}}h_{2,1}\boldsymbol{\mathfrak{m}}_{2}\boldsymbol{x}_{2}+\boldsymbol{z}_{1}|\boldsymbol{\mathfrak{m}}_{2})+\mathscr{H}(\varepsilon)\notag\\
&=&\varepsilon\bar{\varepsilon}\,\mathrm{I}(\boldsymbol{x}_{1};\varepsilon^{-\frac{1}{2}}h_{1,1}\boldsymbol{x}_{2}+\boldsymbol{z}_{1})+\varepsilon^{2} \mathrm{I}(\boldsymbol{x}_{1};\varepsilon^{-\frac{1}{2}}h_{1,1}\boldsymbol{x}_{1}+\varepsilon^{-\frac{1}{2}}h_{2,1}\boldsymbol{x}_{2}+\boldsymbol{z}_{1})+\mathscr{H}(\varepsilon)\notag\\
&&=\varepsilon\bar{\varepsilon}\Big(\mathrm{h}(\varepsilon^{-\frac{1}{2}}h_{1,1}\boldsymbol{x}_{1}+\boldsymbol{z}_{1})-\mathrm{h}(\boldsymbol{z}_{1})\Big)\notag\\
&&+\varepsilon^{2}\Big(\mathrm{h}(\varepsilon^{-\frac{1}{2}}h_{1,1}\boldsymbol{x}_{1}+\varepsilon^{-\frac{1}{2}}h_{2,1}\boldsymbol{x}_{2}+\boldsymbol{z}_{1})-\mathrm{h}(\varepsilon^{-\frac{1}{2}}h_{2,1}\boldsymbol{x}_{2}+\boldsymbol{z}_{1})\Big)+\mathscr{H}(\varepsilon)\notag\\
&\stackrel{(c)}{=}&\varepsilon\bar{\varepsilon}\,\left(\mathrm{h}(\varepsilon^{-\frac{1}{2}}h_{1,1}\boldsymbol{x}_{1}+\boldsymbol{z}_{1})-\frac{\varepsilon}{\bar{\varepsilon}}\,\,\,\mathrm{h}(\varepsilon^{-\frac{1}{2}}h_{2,1}\boldsymbol{x}_{1}+\boldsymbol{z}_{1})\right)\notag\\
&&+\varepsilon^{2} \mathrm{h}(\varepsilon^{-\frac{1}{2}}h_{1,1}\boldsymbol{x}_{1}+\varepsilon^{-\frac{1}{2}}h_{2,1}\boldsymbol{x}_{2}+\boldsymbol{z}_{1})-\varepsilon\bar{\varepsilon}\log(\pi e)+\mathscr{H}(\varepsilon)\notag\\
&\stackrel{(d)}{=}&\varepsilon\bar{\varepsilon}\,\left(\mathrm{h}(\boldsymbol{x}_{1}+\boldsymbol{z}'_{1})-\frac{\varepsilon}{\bar{\varepsilon}}\,\,\,\mathrm{h}(\boldsymbol{x}_{1}+\boldsymbol{z}''_{1})\right)
+\varepsilon^{2} \mathrm{h}(\varepsilon^{-\frac{1}{2}}h_{1,1}\boldsymbol{x}_{1}+\varepsilon^{-\frac{1}{2}}h_{2,1}\boldsymbol{x}_{2}+\boldsymbol{z}_{1})\notag\\
&&+\varepsilon\bar{\varepsilon}\log (\varepsilon^{-1}|h_{1,1}|^{2})-\varepsilon^{2}\log (\varepsilon^{-1}|h_{2,1}|^{2})-\varepsilon\bar{\varepsilon}\log(\pi e)+\mathscr{H}(\varepsilon)\end{eqnarray}
where $(a)$ follows by the fact that $\mathrm{I}(\boldsymbol{x}_{1};h_{1,1}\boldsymbol{\mathfrak{m}}_{1}\boldsymbol{x}_{1}+h_{2,1}\boldsymbol{\mathfrak{m}}_{2}\boldsymbol{x}_{2}+\boldsymbol{z}_{1}|\boldsymbol{\mathfrak{m}}_{1}=0)=0$, $(b)$ is by the fact that the mutual information between the input and output of the channel increases if a ``genie'' provides the receiver side of user \#1  with $\boldsymbol{\mathfrak{m}}_{2}$, $(c)$ follows by the fact that $\boldsymbol{x}_{1}$ and $\boldsymbol{x}_{2}$ are identically distributed and the fact that $\mathrm{h}(\boldsymbol{z}_{1})=\log(\pi e)$ and finally $(d)$ follows by the fact that for any complex random variable $\mathbf{X}$ and $a\in\mathbb{C}$, we have $\mathrm{h}(a\mathbf{X})=\mathrm{h}(\mathbf{X})+\log |a|^{2}$. Also, we have $\boldsymbol{z}'_{1}\sim\mathcal{CN}\left(0,\frac{\varepsilon}{|h_{1,1}|^{2}}\right)$ and $\boldsymbol{z}''_{1}\sim\mathcal{CN}\left(0,\frac{\varepsilon}{|h_{2,1}|^{2}}\right)$ in the last equality in (\ref{ee1}). Denoting the upper bound in (\ref{ee1}) by $\mathsf{UB}$,
\begin{eqnarray}
\mathscr{MC}_{1}(\varepsilon;\gamma,h_{1,1},h_{2,1})&\leq&\sup_{\substack{\boldsymbol{x}_{1},\boldsymbol{x}_{2}\sim\mathrm{i.i.d}\\\mathrm{E}\{|\boldsymbol{x}_{1}|^{2}\}\leq \gamma}}\mathsf{UB}\notag\\
&\stackrel{}{\leq}&\varepsilon\bar{\varepsilon}\sup_{\substack{\boldsymbol{x}_{1},\boldsymbol{x}_{2}\sim\mathrm{i.i.d}\\\mathrm{E}\{|\boldsymbol{x}_{1}|^{2}\}\leq \gamma}}\left(\mathrm{h}(\boldsymbol{x}_{1}+\boldsymbol{z}'_{1})-\frac{\varepsilon}{\bar{\varepsilon}}\,\,\mathrm{h}(\boldsymbol{x}_{1}+\boldsymbol{z}''_{1})\right)\notag\\&&+\varepsilon^{2}\sup_{\substack{\boldsymbol{x}_{1},\boldsymbol{x}_{2}\sim\mathrm{i.i.d}\\\mathrm{E}\{|\boldsymbol{x}_{1}|^{2}\}\leq \gamma}}\mathrm{h}(\varepsilon^{-\frac{1}{2}}h_{1,1}\boldsymbol{x}_{1}+\varepsilon^{-\frac{1}{2}}h_{2,1}\boldsymbol{x}_{2}+\boldsymbol{z}_{1})\notag\\
\label{koo}
&&+\varepsilon\bar{\varepsilon}\log (\varepsilon^{-1}|h_{1,1}|^{2})-\varepsilon^{2}\log (\varepsilon^{-1}|h_{2,1}|^{2})-\varepsilon\bar{\varepsilon}\log(\pi e).\end{eqnarray}

It is trivial that
\begin{eqnarray}
\label{th2}
\sup_{\substack{\boldsymbol{x}_{1},\boldsymbol{x}_{2}\sim\mathrm{i.i.d}\\\mathrm{E}\{|\boldsymbol{x}_{1}|^{2}\}\leq \gamma}}\mathrm{h}(\varepsilon^{-\frac{1}{2}}h_{1,1}\boldsymbol{x}_{1}+\varepsilon^{-\frac{1}{2}}h_{2,1}\boldsymbol{x}_{2}+\boldsymbol{z}_{1})=\log\left(\pi e\varepsilon^{-1}\left(|h_{1,1}|^{2}+|h_{2,1}|^{2}\right)\gamma+1\right)
\end{eqnarray}
which follows by the maximum entropy Lemma\cite{53}. 

Applying Lemma 1, if $\frac{\varepsilon}{\bar{\varepsilon}}\geq 1$ or $|h_{1,1}|>|h_{2,1}|$, or equivalently, $\varepsilon\geq \frac{1}{2}$ or $|h_{1,1}|>|h_{2,1}|$, the answer to the optimization $\max_{\substack{\boldsymbol{x}_{1},\boldsymbol{x}_{2}\sim\mathrm{i.i.d}\\\mathrm{E}\{|\boldsymbol{x}_{1}|^{2}\}\leq \gamma}}\left(\mathrm{h}(\boldsymbol{x}_{1}+\boldsymbol{z}'_{1})-\frac{\varepsilon}{\bar{\varepsilon}}\,\,\mathrm{h}(\boldsymbol{x}_{1}+\boldsymbol{z}''_{1})\right)$ is a complex Gaussian $\boldsymbol{x}_{1}$. We note that the power of the optimum Gaussian signal $\boldsymbol{x}_{1}$ is not necessarily $\gamma$. Let the optimum $\boldsymbol{x}_{1}$ be a $\mathcal{N}(0,v)$ random variable. We distinguish the following cases. 

\textit{Case 1-} If $\varepsilon\geq\frac{1}{2}$ and $\frac{h_{1,1}}{h_{2,1}}<\left(\frac{\varepsilon}{\bar{\varepsilon}}\right)^{\frac{1}{2}}$, then $v=0$. 

\textit{Case 2-} If $\varepsilon>\frac{1}{2}$, $\frac{h_{1,1}}{h_{2,1}}>\left(\frac{\varepsilon}{\bar{\varepsilon}}\right)^{\frac{1}{2}}$ and $\gamma>\frac{\varepsilon^{2}\bar{\varepsilon}}{2\varepsilon-1}\left(\frac{1}{h_{2,1}^{2}}-\frac{\varepsilon}{\bar{\varepsilon}h_{1,1}^{2}}\right)$, then $v=\frac{\varepsilon\bar{\varepsilon}}{2\varepsilon-1}\left(\frac{1}{h_{2,1}^{2}}-\frac{\varepsilon}{\bar{\varepsilon}h_{1,1}^{2}}\right)$. 

\textit{Case 3-} If $\varepsilon\leq\frac{1}{2}$ and $\frac{h_{1,1}}{h_{2,1}}>1$, then $v=\frac{\gamma}{\varepsilon}$. 

Verification of these cases is a straightforward task which is omitted here for the sake of brevity.  Therefore, as far as $\varepsilon\geq \frac{1}{2}$, the term $\sup_{\substack{\boldsymbol{x}_{1},\boldsymbol{x}_{2}\sim\mathrm{i.i.d}\\\mathrm{E}\{|\boldsymbol{x}_{1}|^{2}\}\leq \gamma}}\left(\mathrm{h}(\boldsymbol{x}_{1}+\boldsymbol{z}'_{1})-\frac{\varepsilon}{\bar{\varepsilon}}\,\,\mathrm{h}(\boldsymbol{x}_{1}+\boldsymbol{z}''_{1})\right)$ saturates by increasing $\gamma$. Using this fact together with (\ref{koo}) and (\ref{th2}), 
\begin{equation}
\label{s222}
\mathscr{MC}_{1}(\varepsilon; h_{1,1},h_{2,1},\gamma)\lesssim \varepsilon^{2}\log\gamma.
\end{equation}
 as far as $\varepsilon\geq \frac{1}{2}$. On the other hand, if $\varepsilon<\frac{1}{2}$ and $\frac{h_{1,1}}{h_{2,1}}>1$, 
 \begin{equation}
 \sup_{\substack{\boldsymbol{x}_{1},\boldsymbol{x}_{2}\sim\mathrm{i.i.d}\\\mathrm{E}\{|\boldsymbol{x}_{1}|^{2}\}\leq \gamma}}\left(\mathrm{h}(\boldsymbol{x}_{1}+\boldsymbol{z}'_{1})-\frac{\varepsilon}{\bar{\varepsilon}}\,\,\mathrm{h}(\boldsymbol{x}_{1}+\boldsymbol{z}''_{1})\right)\sim \frac{\bar{\varepsilon}-\varepsilon}{\bar{\varepsilon}}\log\gamma.
 \end{equation}
Using this together with  (\ref{koo}) and (\ref{th2}),
\begin{equation}
\label{s444}
\mathscr{MC}_{1}(\varepsilon; h_{1,1},h_{2,1},\gamma)\lesssim \varepsilon\bar{\varepsilon}\log\gamma
\end{equation}
as far as $\varepsilon<\frac{1}{2}$ and $\frac{h_{1,1}}{h_{2,1}}<1$. However, we can remove the condition $\frac{h_{1,1}}{h_{2,1}}>1$ by a simple arguement. Let us fix $h_{2,1}$. It is clear that $\mathscr{MC}_{1}(\varepsilon; h,h_{2,1},\gamma)<\mathscr{MC}_{1}(\varepsilon; h',h_{2,1},\gamma)$ for $h<h_{2,1}<h'$. Since $\mathscr{MC}_{1}(\varepsilon; h',h_{2,1},\gamma)\lesssim \varepsilon\bar{\varepsilon}\log\gamma$, we get $\mathscr{MC}_{1}(\varepsilon;h,h_{2,1},\gamma)\lesssim \varepsilon\bar{\varepsilon}\log\gamma$. Hence, (\ref{s444}) holds for all $\varepsilon<\frac{1}{2}$ regardless of the values of $h_{1,1}$ and $h_{2,1}$.

To recap, we have shown that
\begin{equation}
\label{s555}
\mathscr{MC}_{1}(\varepsilon; h_{1,1},h_{2,1},\gamma)\lesssim\left\{\begin{array}{cc}
    \varepsilon^{2}\log\gamma  &     \varepsilon\geq \frac{1}{2}\\
    \varepsilon\bar{\varepsilon}\log\gamma  &   \varepsilon<\frac{1}{2}
\end{array}\right.\end{equation}
We end this subsection with the following Corollary.
\begin{coro}
If $\varepsilon\leq \frac{1}{2}$, 
\begin{equation}
\mathscr{MC}_{1}(\varepsilon; h_{1,1},h_{2,1},\gamma)\sim\varepsilon\bar{\varepsilon}\log\gamma.
\end{equation}
\end{coro}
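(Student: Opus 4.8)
The plan is to establish the claim by sandwiching $\mathscr{MC}_{1}$ between two bounds with the same SNR scaling, since the relation $\sim$ requires both $\lesssim$ and $\gtrsim$. The upper bound is already in hand: inequality (\ref{s555}) gives $\mathscr{MC}_{1}(\varepsilon;h_{1,1},h_{2,1},\gamma)\lesssim\varepsilon\bar{\varepsilon}\log\gamma$ for $\varepsilon<\frac{1}{2}$, and at the boundary $\varepsilon=\frac{1}{2}$ the bound $\varepsilon^{2}\log\gamma$ coincides with $\varepsilon\bar{\varepsilon}\log\gamma$ because $\varepsilon^{2}=\varepsilon\bar{\varepsilon}$ there. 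Hence $\mathscr{MC}_{1}\lesssim\varepsilon\bar{\varepsilon}\log\gamma$ holds throughout the range $\varepsilon\leq\frac{1}{2}$, and only the reverse scaling $\mathscr{MC}_{1}\gtrsim\varepsilon\bar{\varepsilon}\log\gamma$ remains to be shown.

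For the lower bound I would exhibit a single feasible choice inside the supremum defining $\mathscr{MC}_{1}$. Take $\boldsymbol{x}_{1},\boldsymbol{x}_{2}$ to be i.i.d. $\mathcal{CN}(0,\gamma)$, which satisfies the power constraint $\mathrm{E}\{|\boldsymbol{x}_{1}|^{2}\}\leq\gamma$. With this choice the channel in the definition of $\mathscr{MC}_{1}$ is precisely the masking-only ($K=1$, no spreading) specialization of the system model (\ref{e1}), so the objective $\mathrm{I}(\boldsymbol{x}_{1},\boldsymbol{\mathfrak{m}}_{1};\boldsymbol{y}_{1})$ evaluated at the Gaussian input equals $\mathsf{C}_{1}$ of that instance (recall that here $\vec{\boldsymbol{s}}_{1}=\boldsymbol{\mathfrak{m}}_{1}$ and $K=1$). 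Since the supremum dominates any particular admissible input, and since $\mathsf{C}_{1}\geq\mathsf{C}_{1}^{(\mathrm{lb})}$ by construction, this yields the chain $\mathscr{MC}_{1}\geq\mathsf{C}_{1}\geq\mathsf{C}_{1}^{(\mathrm{lb})}$.

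It then remains to read off the SNR scaling of $\mathsf{C}_{1}^{(\mathrm{lb})}$ in this masking-only regime, which is exactly Scheme B of Example 3. There the lower bound takes the closed form $\varepsilon\log\!\big(1+2^{-\mathscr{H}(\varepsilon)}|\boldsymbol{h}_{1,1}|^{2}\gamma/(\varepsilon(1+|\boldsymbol{h}_{2,1}|^{2}\gamma/\varepsilon)^{\varepsilon})\big)$; as $\gamma\to\infty$ the argument of the logarithm grows like $\gamma^{1-\varepsilon}$, so $\mathsf{C}_{1}^{(\mathrm{lb})}\sim\varepsilon(1-\varepsilon)\log\gamma=\varepsilon\bar{\varepsilon}\log\gamma$, in agreement with $\mathsf{MG}_{\textrm{scheme B}}=\varepsilon\bar{\varepsilon}$ and with Proposition 1. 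Consequently $\lim_{\gamma\to\infty}\mathsf{C}_{1}^{(\mathrm{lb})}/\log\gamma=\varepsilon\bar{\varepsilon}$, giving $\mathscr{MC}_{1}\gtrsim\varepsilon\bar{\varepsilon}\log\gamma$. Combining this with the upper bound from (\ref{s555}) completes the sandwich and proves $\mathscr{MC}_{1}\sim\varepsilon\bar{\varepsilon}\log\gamma$ for $\varepsilon\leq\frac{1}{2}$.

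The argument carries no deep obstacle, being a two-sided scaling estimate; the only points demanding care are bookkeeping ones. First, one must confirm that the boundary value $\varepsilon=\frac{1}{2}$ is genuinely covered, which it is because (\ref{s555}) then returns $\varepsilon^{2}\log\gamma=\varepsilon\bar{\varepsilon}\log\gamma$. Second, and this is the main point to verify, one must check that the channel used to define $\mathscr{MC}_{1}$ is indeed the $K=1$ masking specialization of (\ref{e1}), so that the Gaussian input realizes $\mathsf{C}_{1}$; the normalization $\beta=\varepsilon^{-\frac{1}{2}}$ forced by (\ref{e11}) makes the two descriptions agree, and in any case a constant rescaling of signal and interference perturbs $\mathsf{C}_{1}^{(\mathrm{lb})}$ only by an $O(1)$ term that is invisible to the $\log\gamma$-scaling.
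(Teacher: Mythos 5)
Your proposal is correct and follows essentially the same route as the paper's own proof: the upper bound is read off from (\ref{s555}) (with the boundary case $\varepsilon=\tfrac{1}{2}$ handled by $\varepsilon^{2}=\varepsilon\bar{\varepsilon}$ there), and the lower bound $\mathscr{MC}_{1}\gtrsim\varepsilon\bar{\varepsilon}\log\gamma$ comes from Example 3, Scheme B, since Gaussian inputs are admissible in the supremum defining $\mathscr{MC}_{1}$ and achieve the scaling $\varepsilon\bar{\varepsilon}\log\gamma$. The paper states this lower bound in one line ("by the results in example 3"); you merely make explicit the chain $\mathscr{MC}_{1}\geq\mathsf{C}_{1}\geq\mathsf{C}_{1}^{(\mathrm{lb})}$ and the normalization bookkeeping, which is the same argument in expanded form.
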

\begin{proof}
By the results in example 3, $\mathscr{MC}_{1}(\varepsilon; h_{1,1},h_{2,1},\gamma)\gtrsim\varepsilon\bar{\varepsilon}\log\gamma$ for every $\varepsilon\in(0,1)$. However, by (\ref{s555}), $\mathscr{MC}_{1}(\varepsilon; h_{1,1},h_{2,1},\gamma)\lesssim\varepsilon\bar{\varepsilon}\log\gamma$ for all $\varepsilon\leq \frac{1}{2}$. This concludes the proof. \end{proof}

\subsection{Achieving Rates Larger Than $\mathscr{MC}_{1}(\varepsilon; h_{1,1},h_{2,1},\gamma)$}
 Applying spreading on top of masking, we show that there is a range of $\varepsilon$ such that it is possible to achieve rates larger than $\mathscr{MC}_{1}(\varepsilon; h_{1,1},h_{2,1},\gamma)$ as far as $\gamma$ is sufficiently large. To transmit its Gaussian signal $\boldsymbol{x}_{i}\sim\mathcal{CN}\left(0,\gamma\right)$, user \#$i$ spreads $\boldsymbol{x}_{i}$ along a $2\times 1$ random vector $\vec{\boldsymbol{\mathfrak{s}}}_{i}$ consisting of $\mathrm{i.i.d.}$ random numbers taking values in a finite alphabet $\mathscr{A}$ with equal probability. Thereafter, this user applies the masking process by constructing  the $2\times 1$ masking vector $\vec{\boldsymbol{\mathfrak{m}}}_{i}$ consisting of $\mathrm{i.i.d.}$ Bernoulli random variables taking the values $0$ and $1$ with probabilities $\bar{\varepsilon}$ and $\varepsilon$ respectively.  We assume that $\vec{\boldsymbol{\mathfrak{s}}}_{i}$ and $\vec{\boldsymbol{\mathfrak{m}}}_{i}$ are known to both ends of user \#$i$. Finally, this user transmits $\beta\boldsymbol{x}_{i}\vec{\boldsymbol{\mathfrak{m}}}_{i}\odot\vec{\boldsymbol{\mathfrak{s}}}_{i}$ in two consecutive transmission slots where  $\beta$ is to ensure the total transmission power per symbol $\boldsymbol{x}_{i}$ is $\gamma$. Assuming both users are synchronous, the following vector is received at the receiver side of user \#1
\begin{eqnarray}
\label{nb}
\vec{\boldsymbol{y}}_{1}=\beta h_{1,1}\boldsymbol{x}_{1}\vec{\boldsymbol{\mathfrak{m}}}_{1}\odot\vec{\boldsymbol{\mathfrak{s}}}_{1}+\beta h_{2,1}\boldsymbol{x}_{2}\vec{\boldsymbol{\mathfrak{m}}}_{2}\odot\vec{\boldsymbol{\mathfrak{s}}}_{2}+\vec{\boldsymbol{z}}_{1}
\end{eqnarray}
where $\vec{\boldsymbol{z}}_{1}$ is a vector of independent $\mathcal{CN}(0,1)$ random variables representing the ambient noise samples at the receiver side of user \#1. The achievable rate for this user is
\begin{equation}
\label{ }
R_{1}\triangleq\frac{\mathrm{I}\left(\boldsymbol{x}_{1};\vec{\boldsymbol{y}}_{1}|\vec{\boldsymbol{\mathfrak{s}}}_{1},\vec{\boldsymbol{\mathfrak{m}}}_{1}\right)}{2}.
\end{equation}
By our results in section III, 
\begin{equation}
\label{ }
\mathrm{I}\left(\boldsymbol{x}_{1};\vec{\boldsymbol{y}}_{1}|\vec{\boldsymbol{\mathfrak{s}}}_{1},\vec{\boldsymbol{\mathfrak{m}}}_{1}\right)\sim\Pr\Big\{\vec{\boldsymbol{\mathfrak{m}}}_{1}\odot\vec{\boldsymbol{\mathfrak{s}}}_{1}\notin\mathrm{span}(\vec{\boldsymbol{\mathfrak{m}}}_{2}\odot\vec{\boldsymbol{\mathfrak{s}}}_{2})\Big\}\log\gamma.
\end{equation}
Hence,
\begin{equation}
\label{e33}
R_{1}\sim\frac{1}{2}\Pr\Big\{\vec{\boldsymbol{\mathfrak{m}}}_{1}\odot\vec{\boldsymbol{\mathfrak{s}}}_{1}\notin\mathrm{span}(\vec{\boldsymbol{\mathfrak{m}}}_{2}\odot\vec{\boldsymbol{\mathfrak{s}}}_{2})\Big\}\log\gamma.
\end{equation}
We are interested in values of $\varepsilon$ so that $\mathscr{MC}_{1}(\varepsilon; h_{1,1},h_{2,1},\gamma)\lesssim R_{1}$ is strictly satisfied. By (\ref{s555}) and (\ref{e33}), it is sufficient to show that there is a range for $\varepsilon$ such that
\begin{equation}
\label{x2}
\frac{1}{2}\Pr\Big\{\vec{\boldsymbol{\mathfrak{m}}}_{1}\odot\vec{\boldsymbol{\mathfrak{s}}}_{1}\notin\mathrm{span}(\vec{\boldsymbol{\mathfrak{m}}}_{2}\odot\vec{\boldsymbol{\mathfrak{s}}}_{2})\Big\}>\max\{\varepsilon^{2},\varepsilon\bar{\varepsilon}\}.
\end{equation}

Let $\mathscr{A}=\{-1,1\}$. In this case the elements of $\vec{\boldsymbol{\mathfrak{m}}}_{1}\odot\vec{\boldsymbol{\mathfrak{s}}}_{1}$ and $\vec{\boldsymbol{\mathfrak{m}}}_{2}\odot\vec{\boldsymbol{\mathfrak{s}}}_{2}$ are $\mathrm{i.i.d.}$ random variables taking the values $0$, $1$ and $-1$ with probabilities $\bar{\varepsilon}$, $\frac{\varepsilon}{2}$ and $\frac{\varepsilon }{2}$ respectively. The event $\vec{\boldsymbol{\mathfrak{m}}}_{1}\odot\vec{\boldsymbol{\mathfrak{s}}}_{1}\in\mathrm{span}(\vec{\boldsymbol{\mathfrak{m}}}_{2}\odot\vec{\boldsymbol{\mathfrak{s}}}_{2})$ occurs if and only if $\vec{\boldsymbol{\mathfrak{m}}}_{1}\odot\vec{\boldsymbol{\mathfrak{s}}}_{1}=0_{2\times 1}$ or $\vec{\boldsymbol{\mathfrak{m}}}_{1}\odot\vec{\boldsymbol{\mathfrak{s}}}_{1}\neq0_{2\times 1}$ while $\vec{\boldsymbol{\mathfrak{m}}}_{1}\odot\vec{\boldsymbol{\mathfrak{s}}}_{1}=\pm \vec{\boldsymbol{\mathfrak{m}}}_{2}\odot\vec{\boldsymbol{\mathfrak{s}}}_{2}$. Then, one can easily see that $\Pr\Big\{\vec{\boldsymbol{\mathfrak{m}}}_{1}\odot\vec{\boldsymbol{\mathfrak{s}}}_{1}\notin\mathrm{span}(\vec{\boldsymbol{\mathfrak{m}}}_{2}\odot\vec{\boldsymbol{\mathfrak{s}}}_{2})\Big\}=1-\bar{\varepsilon}^{2}-2(\varepsilon\bar{\varepsilon})^{2}-\frac{\varepsilon^{4}}{2}$. This can also be deduced from  (\ref{boro}). Substituting this in (\ref{x2}) requires
\begin{equation}
1-\bar{\varepsilon}^{2}-2(\varepsilon\bar{\varepsilon})^{2}-\frac{\varepsilon^{4}}{2}>2\max\{\varepsilon^{2},\varepsilon\bar{\varepsilon}\}.
\end{equation}
 This simplifies to $5\varepsilon^{2}-8\varepsilon+2<0$ for $\varepsilon<\frac{1}{2}$ and $5\varepsilon^{3}-8\varepsilon^{2}+10\varepsilon-4<0$ for $\varepsilon\geq \frac{1}{2}$. Solving these inequalities, we get $\varepsilon\in(0.3101,0.5653)$. 
 
 It is not hard to see that $\hat{\varepsilon}$ given in (\ref{pel}) is in the interval $(0.4,0.5)$ for all $\gamma>30\mathrm{dB}$. Setting  $\alpha_{1}=0.3101$ and $\alpha_{2}=0.5653$, we see that $\hat{\varepsilon}\in(\alpha_{1},\alpha_{2})$ and $R_{1}$ is larger than $\mathscr{MC}_{1}(\hat{\varepsilon}; h_{1,1},h_{2,1},\gamma)$ for large values of $\gamma$. This completes the proof of Theorem 1.
 
Next, we demonstrate that increasing the size of the underlying alphabet can expand the range of $\varepsilon$ for which achieving a rate larger than $\mathscr{MC}_{1}(\varepsilon; h_{1,1},h_{2,1},\gamma)$ is possible.

\textit{Remark 3-} If $\mathscr{A}=\{-2,-1,1,2\}$, the elements of $\vec{\boldsymbol{\mathfrak{m}}}_{1}\odot\vec{\boldsymbol{\mathfrak{s}}}_{1}$ and $\vec{\boldsymbol{\mathfrak{m}}}_{2}\odot\vec{\boldsymbol{\mathfrak{s}}}_{2}$ are $\mathrm{i.i.d.}$ random variables taking the values $0$, $-2$, $-1$, $1$ and $2$ with probabilities $\bar{\varepsilon}$, $\frac{\varepsilon}{4}$, $\frac{\varepsilon }{4}$, $\frac{\varepsilon}{4}$ and $\frac{\varepsilon}{4}$ respectively. The event $\vec{\boldsymbol{\mathfrak{m}}}_{1}\odot\vec{\boldsymbol{\mathfrak{s}}}_{1}\in\mathrm{span}(\vec{\boldsymbol{\mathfrak{m}}}_{2}\odot\vec{\boldsymbol{\mathfrak{s}}}_{2})$ occurs if and only if $\vec{\boldsymbol{\mathfrak{m}}}_{1}\odot\vec{\boldsymbol{\mathfrak{s}}}_{1}=0_{2\times 1}$ or $\vec{\boldsymbol{\mathfrak{m}}}_{1}\odot\vec{\boldsymbol{\mathfrak{s}}}_{1}\neq0_{2\times 1}$ while $\vec{\boldsymbol{\mathfrak{m}}}_{1}\odot\vec{\boldsymbol{\mathfrak{s}}}_{1}=\pm \vec{\boldsymbol{\mathfrak{m}}}_{2}\odot\vec{\boldsymbol{\mathfrak{s}}}_{2}$ or $\vec{\boldsymbol{\mathfrak{m}}}_{1}\odot\vec{\boldsymbol{\mathfrak{s}}}_{1}=\pm2 \vec{\boldsymbol{\mathfrak{m}}}_{2}\odot\vec{\boldsymbol{\mathfrak{s}}}_{2}$ or $\vec{\boldsymbol{\mathfrak{m}}}_{1}\odot\vec{\boldsymbol{\mathfrak{s}}}_{1}=\pm\frac{1}{2} \vec{\boldsymbol{\mathfrak{m}}}_{2}\odot\vec{\boldsymbol{\mathfrak{s}}}_{2}$. We get $\Pr\Big\{\vec{\boldsymbol{\mathfrak{m}}}_{1}\odot\vec{\boldsymbol{\mathfrak{s}}}_{1}\notin\mathrm{span}(\vec{\boldsymbol{\mathfrak{m}}}_{2}\odot\vec{\boldsymbol{\mathfrak{s}}}_{2})\Big\}=1-\bar{\varepsilon}^{2}-2(\varepsilon\bar{\varepsilon})^{2}-\frac{3\varepsilon^{4}}{16}$.
Substituting this in (\ref{x2}) requires
\begin{equation}
1-\bar{\varepsilon}^{2}-2(\varepsilon\bar{\varepsilon})^{2}-\frac{3\varepsilon^{4}}{16}>2\max\{\varepsilon^{2},\varepsilon\bar{\varepsilon}\}.
\end{equation}
Hence, $35\varepsilon^{2}-64\varepsilon+16<0$ for $\varepsilon<\frac{1}{2}$ and $35\varepsilon^{3}-64\varepsilon^{2}+80\varepsilon-32<0$ for $\varepsilon\geq \frac{1}{2}$. Solving these inequalities, $\varepsilon\in(0.2988,0.5873)$. $\square$

\section{Conclusion}
We proposed an approach towards communication in decentralized wireless networks of separate transmitter-receiver pairs. A randomized signaling scheme was introduced in which each user locally spreads its Gaussian signal along a  randomly generated spreading code comprised of a sequence of nonzero elements over a certain alphabet. Along with spreading, each transmitter also masks its output independently from transmission to transmission. Using a conditional version of entropy power inequality and a key lemma on the differential entropy of mixed Gaussian random vectors, achievable rates were developed for the users. Assuming the channel gains are realization of independent continuous random variables, each user finds the optimum parameters in constructing the randomized spreading and masking sequences by maximizing the average achievable rate per user. It was seen that  as the number of users increases, the achievable Sum Multiplexing Gain of the network approaches that of a centralized orthogonal scheme where multiuser interference is completely avoided.  It was observed that in general the elements of a spreading code are not equiprobable over the underlying alphabet. This particularly happens if the number of active users is greater than three. Finally, using the recently developed extremal inequality of Liu-Viswanath, we presented an optimality result showing that transmission of Gaussian signals via spreading and masking yields higher achievable rates than the maximum achievable rate attained by applying masking only.  

\section*{Appendix A}
By Proposition 1,
\begin{equation}
\label{ }
\lim_{\gamma\to\infty}\frac{\mathsf{C}_{i}}{\log\gamma}\geq \frac{\Pr\left\{\vec{\boldsymbol{s}}_{i}\notin\mathrm{csp}(\boldsymbol{S}_{i})\right\}}{K}.\end{equation} 
In this appendix, we  prove that 
\begin{equation}
\label{ }
\lim_{\gamma\to\infty}\frac{\mathsf{C}_{i}}{\log\gamma}\leq \frac{\Pr\left\{\vec{\boldsymbol{s}}_{i}\notin\mathrm{csp}(\boldsymbol{S}_{i})\right\}}{K}.
\end{equation} 
By (\ref{e14}), it suffices to show that $\lim_{\gamma\to\infty}\frac{\mathrm{I}(\vec{\boldsymbol{x}}_{i};\vec{\boldsymbol{y}}_{i}|\vec{\boldsymbol{s}}_{i})}{\log\gamma}\leq \Pr\left\{\vec{\boldsymbol{s}}_{i}\notin\mathrm{csp}(\boldsymbol{S}_{i})\right\}$. Let us consider the \emph{informed} $i^{th}$ user where the receiver is aware of $\vec{\boldsymbol{s}}_{i}$ and $\boldsymbol{S}_{i}$. The achievable rate of this virtual user is $\frac{\mathrm{I}(\vec{\boldsymbol{x}}_{i};\vec{\boldsymbol{y}}_{i}|\boldsymbol{s}_{i},\boldsymbol{S}_{i})}{K}$. It is clear that $\mathrm{I}(\vec{\boldsymbol{x}}_{i};\vec{\boldsymbol{y}}_{i}|\vec{\boldsymbol{s}}_{i})\leq \mathrm{I}(\vec{\boldsymbol{x}}_{i};\vec{\boldsymbol{y}}_{i}|\vec{\boldsymbol{s}}_{i},\boldsymbol{S}_{i})$. However, 
\begin{eqnarray}
\label{bol12}
\mathrm{I}(\vec{\boldsymbol{x}}_{i};\vec{\boldsymbol{y}}_{i}|\vec{\boldsymbol{s}}_{i},\boldsymbol{S}_{i})&=&\sum_{\substack{\vec{s}\in\mathrm{supp}(\vec{\boldsymbol{s}}_{i})\\S\in\mathrm{range}(\boldsymbol{S}_{i})}}\Pr\{\vec{\boldsymbol{s}}_{i}=\vec{s}\}\Pr\{\boldsymbol{S}_{i}=S\}\mathrm{I}(\vec{\boldsymbol{x}}_{i};\vec{\boldsymbol{y}}_{i}|\vec{\boldsymbol{s}}_{i}=\vec{s},\boldsymbol{S}_{i}=S)\notag\\
&\stackrel{(a)}{=}&\sum_{\substack{\vec{s}\in\mathrm{supp}(\vec{\boldsymbol{s}}_{i})\\S\in\mathrm{range}(\boldsymbol{S}_{i})}}\Pr\{\vec{\boldsymbol{s}}_{i}=\vec{s}\}\Pr\{\boldsymbol{S}_{i}=S\}\log\frac{\det\left(\mathrm{cov}\left(\vec{\boldsymbol{y}}_{i}|\vec{\boldsymbol{s}}_{i}=\vec{s},\boldsymbol{S}_{i}=S\right)\right)}{\det\left(\mathrm{cov}\left(\vec{\boldsymbol{w}}_{i}+\vec{\boldsymbol{z}}_{i}|\boldsymbol{S}_{i}=S\right)\right)}\notag\\
&=&\sum_{\substack{\vec{s}\in\mathrm{supp}(\vec{\boldsymbol{s}}_{i})\\S\in\mathrm{range}(\boldsymbol{S}_{i})}}\Pr\{\vec{\boldsymbol{s}}_{i}=\vec{s}\}\Pr\{\boldsymbol{S}_{i}=S\}\log\det\left(\mathrm{cov}\left(\vec{\boldsymbol{y}}_{i}|\vec{\boldsymbol{s}}_{i}=\vec{s},\boldsymbol{S}_{i}=S \right)\right)\notag\\
&&-\sum_{S\in\mathrm{range}(\boldsymbol{S}_{i})}\Pr\{\boldsymbol{S}_{i}=S\}\log\det\left(\mathrm{cov}\left(\vec{\boldsymbol{w}}_{i}+\vec{\boldsymbol{z}}_{i}|\boldsymbol{S}_{i}=S\right)\right)\end{eqnarray}
where $(a)$ follows by the fact that fixing $\boldsymbol{S}_{i}=S$ converts the channel of the $i^{th}$ informed user to an additive Gaussian channel. On the other hand, 
\begin{eqnarray}
\label{bol11}
&&\sum_{\substack{\vec{s}\in\mathrm{supp}(\vec{\boldsymbol{s}}_{i})\\S\in\mathrm{range}(\boldsymbol{S}_{i})}}\Pr\{\vec{\boldsymbol{s}}_{i}=\vec{s}\}\Pr\{\boldsymbol{S}_{i}=S\}\log\det\left(\mathrm{cov}\left(\vec{\boldsymbol{y}}_{i}|\vec{\boldsymbol{s}}_{i}=\vec{s},\boldsymbol{S}_{i}=S \right)\right)\notag\\
&=&\sum_{\substack{\vec{s}\in\mathrm{supp}(\vec{\boldsymbol{s}}_{i})\\S\in\mathrm{range}(\boldsymbol{S}_{i})}}\Pr\{\vec{\boldsymbol{s}}_{i}=\vec{s}\}\Pr\{\boldsymbol{S}_{i}=S\}\log\det\left(I_{K}+\beta^{2}\gamma|h_{i,i}|^{2}\vec{s}\vec{s}^{T}+\beta^{2}\gamma S\Xi_{i}\Xi_{i}^{T}S^{T}\right).\notag\\
\end{eqnarray}
Noting that $\log\det\left(I_{K}+\beta^{2}\gamma|h_{i,i}|^{2}\vec{s}\vec{s}^{T}+\beta^{2}\gamma S\Xi_{i}\Xi_{i}^{T}S^{T}\right)$ scales like $\mathrm{rank}\left(\left[\vec{s}\,\,\,S\right]\right)\log\gamma$, we conclude that the first term on the right hand side of (\ref{bol12}) scales like $\mathrm{E}\left\{\mathrm{rank}\left(\left[\vec{\boldsymbol{s}}_{i}\,\,\,\boldsymbol{S}_{i}\right]\right)\right\}\log\gamma$. By the same token, the second term on the right hand side of (\ref{bol12}) scales like $\mathrm{E}\left\{\mathrm{rank}(\boldsymbol{S}_{i})\right\}\log\gamma$. Therefore, $\mathrm{I}(\vec{\boldsymbol{x}}_{i};\vec{\boldsymbol{y}}_{i}|\vec{\boldsymbol{s}}_{i})$ is upper bounded by a quantity which scales like  $\Big(\mathrm{E}\{\mathrm{rank}\left([\vec{\boldsymbol{s}}_{i}\,\,\,\boldsymbol{S}_{i}\right])\}-\mathrm{E}\{\mathrm{rank}(\boldsymbol{S}_{i})\}\Big)\log\gamma$. The result of the Proposition is immediate.
  \section*{Appendix B}
Let $\vec{\boldsymbol{s}}_{1}=\begin{pmatrix}
    \boldsymbol{s}_{1,1}  & \boldsymbol{s}_{1,2}
\end{pmatrix}^{T}$. Therefore, 
\begin{eqnarray}
\label{lol1}
\mathrm{H}\left(\vec{\boldsymbol{s}}_{1}\vec{\boldsymbol{s}}_{1}^{\dagger}\right)&=&\mathrm{H}\left(|\boldsymbol{s}_{1,1}|^{2},|\boldsymbol{s}_{1,2}|^{2},\boldsymbol{s}_{1,1}\boldsymbol{s}_{1,2}\right)\notag\\
&=&\mathrm{H}\left(|\boldsymbol{s}_{1,1}|^{2},|\boldsymbol{s}_{1,2}|^{2}\right)+\mathrm{H}\left(\boldsymbol{s}_{1,1}\boldsymbol{s}_{1,2}\big||\boldsymbol{s}_{1,1}|^{2},|\boldsymbol{s}_{1,2}|^{2}\right)\notag\\
&=&\mathrm{H}\left(|\boldsymbol{s}_{1,1}|^{2}\right)+\mathrm{H}\left(|\boldsymbol{s}_{1,2}|^{2}\right)+\mathrm{H}\left(\boldsymbol{s}_{1,1}\boldsymbol{s}_{1,2}\big||\boldsymbol{s}_{1,1}|^{2},|\boldsymbol{s}_{1,2}|^{2}\right)\notag\\
&=&\mathrm{H}\left(|\boldsymbol{s}_{1,1}|\right)+\mathrm{H}\left(|\boldsymbol{s}_{1,2}|\right)+\mathrm{H}\left(\boldsymbol{s}_{1,1}\boldsymbol{s}_{1,2}\big||\boldsymbol{s}_{1,1}|,|\boldsymbol{s}_{1,2}|\right)\notag\\
&=&2\mathscr{H}(\varepsilon)+\mathrm{H}\left(\boldsymbol{s}_{1,1}\boldsymbol{s}_{1,2}\big||\boldsymbol{s}_{1,1}|,|\boldsymbol{s}_{1,2}|\right).\end{eqnarray}
To compute $\mathrm{H}\left(\boldsymbol{s}_{1,1}\boldsymbol{s}_{1,2}\big||\boldsymbol{s}_{1,1}|,|\boldsymbol{s}_{1,2}|\right)$, we have
\begin{eqnarray}
\mathrm{H}\left(\boldsymbol{s}_{1,1}\boldsymbol{s}_{1,2}\big||\boldsymbol{s}_{1,1}|,|\boldsymbol{s}_{1,2}|\right)&=&\mathrm{H}\left(\boldsymbol{s}_{1,1}\boldsymbol{s}_{1,2}\big||\boldsymbol{s}_{1,1}|=1,|\boldsymbol{s}_{1,2}|=1\right)\Pr\{|\boldsymbol{s}_{1,1}|=1,|\boldsymbol{s}_{1,2}|=1\}\notag\\
&&+\mathrm{H}\left(\boldsymbol{s}_{1,1}\boldsymbol{s}_{1,2}\big||\boldsymbol{s}_{1,1}|=0,|\boldsymbol{s}_{1,2}|=1\right)\Pr\{|\boldsymbol{s}_{1,1}|=0,|\boldsymbol{s}_{1,2}|=1\}\notag\\
&&+\mathrm{H}\left(\boldsymbol{s}_{1,1}\boldsymbol{s}_{1,2}\big||\boldsymbol{s}_{1,1}|=1,|\boldsymbol{s}_{1,2}|=0\right)\Pr\{|\boldsymbol{s}_{1,1}|=1,|\boldsymbol{s}_{1,2}|=0\}\notag\\
&&+\mathrm{H}\left(\boldsymbol{s}_{1,1}\boldsymbol{s}_{1,2}\big||\boldsymbol{s}_{1,1}|=0,|\boldsymbol{s}_{1,2}|=0\right)\Pr\{|\boldsymbol{s}_{1,1}|=0,|\boldsymbol{s}_{1,2}|=0\}\notag\\
&\stackrel{(a)}{=}&\mathrm{H}\left(\boldsymbol{s}_{1,1}\boldsymbol{s}_{1,2}\big||\boldsymbol{s}_{1,1}|=1,|\boldsymbol{s}_{1,2}|=1\right)\Pr\{|\boldsymbol{s}_{1,1}|=1,|\boldsymbol{s}_{1,2}|=1\}\end{eqnarray}
where $(a)$ is by the fact that the terms $\mathrm{H}\left(\boldsymbol{s}_{1,1}\boldsymbol{s}_{1,2}\big||\boldsymbol{s}_{1,1}|=0,|\boldsymbol{s}_{1,2}|=1\right)$, $\mathrm{H}\left(\boldsymbol{s}_{1,1}\boldsymbol{s}_{1,2}\big||\boldsymbol{s}_{1,1}|=1,|\boldsymbol{s}_{1,2}|=0\right)$ and $\mathrm{H}\left(\boldsymbol{s}_{1,1}\boldsymbol{s}_{1,2}\big||\boldsymbol{s}_{1,1}|=0,|\boldsymbol{s}_{1,2}|=0\right)$ are zero. On the other hand, it is easy to see that $\Pr\{\boldsymbol{s}_{1,1}\boldsymbol{s}_{1,2}=1\big||\boldsymbol{s}_{1,1}|=1,|\boldsymbol{s}_{1,2}|=1\}=\nu^{2}+\overline{\nu}^{2}$. This implies $\mathrm{H}\left(\boldsymbol{s}_{1,1}\boldsymbol{s}_{1,2}\big||\boldsymbol{s}_{1,1}|=1,|\boldsymbol{s}_{1,2}|=1\right)=\mathscr{H}(\nu^{2}+\overline{\nu}^{2})$. Therefore, 
\begin{eqnarray}
\label{lol2}
\mathrm{H}\left(\boldsymbol{s}_{1,1}\boldsymbol{s}_{1,2}\big||\boldsymbol{s}_{1,1}|,|\boldsymbol{s}_{1,2}|\right)=\mathscr{H}(\nu^{2}+\overline{\nu}^{2})\Pr\{|\boldsymbol{s}_{1,1}|=1,|\boldsymbol{s}_{1,2}|=1\}
=\varepsilon^{2}\mathscr{H}(\nu^{2}+\overline{\nu}^{2}).\end{eqnarray}
Using (\ref{lol1}) and (\ref{lol2}), 
\begin{equation}
\label{ }
\mathrm{H}\left(\vec{\boldsymbol{s}}_{1}\vec{\boldsymbol{s}}_{1}^{\dagger}\right)=2\mathscr{H}(\varepsilon)+\varepsilon^{2}\mathscr{H}(\nu^{2}+\overline{\nu}^{2}).
\end{equation}
\section*{Appendix B}
Let $\vec{\boldsymbol{s}}_{1}=\begin{pmatrix}
    \boldsymbol{s}_{1,1}&\cdots&\boldsymbol{s}_{1,K}
\end{pmatrix}$. We have
\begin{eqnarray}
\label{nol}
\mathrm{H}(\vec{\boldsymbol{s}}_{1}\vec{\boldsymbol{s}}_{1}^{\dagger})&=&\mathrm{H}((\boldsymbol{s}_{1,k}\boldsymbol{s}_{1,l})_{k,l=1}^{K})\notag\\
&\stackrel{(a)}{=}&\mathrm{H}\left(\big(\boldsymbol{s}_{1,k}\boldsymbol{s}_{1,l}\big)_{\substack{k,l=1\\k\neq l}}^{K}\right)\notag\\
&\stackrel{(b)}{=}&\mathrm{H}\left(\boldsymbol{s}_{1,1}\boldsymbol{s}_{1,2},\boldsymbol{s}_{1,1}\boldsymbol{s}_{1,3},\cdots,\boldsymbol{s}_{1,1}\boldsymbol{s}_{1,K}\right)\notag\\\end{eqnarray}
where $(a)$ is by the fact that $\boldsymbol{s}_{1,k}^{2}=1$ for any $1\leq k\leq K$ and $(b)$ is by the fact that for any two distinct numbers $k,l\in\{2,\cdots,K\}$, the knowledge about $\boldsymbol{s}_{1,k}\boldsymbol{s}_{1,l}$ can be obtained by knowing $\boldsymbol{s}_{1,1}\boldsymbol{s}_{1,k}$ and $\boldsymbol{s}_{1,1}\boldsymbol{s}_{1,l}$ by the fact that $\boldsymbol{s}_{1,1}\boldsymbol{s}_{1,k}\boldsymbol{s}_{1,1}\boldsymbol{s}_{1,l}=\boldsymbol{s}_{1,k}\boldsymbol{s}_{1,l}\boldsymbol{s}_{1,1}^{2}=\boldsymbol{s}_{1,k}\boldsymbol{s}_{1,l}$. Let us define 
\begin{equation}
\label{ }
\widetilde{\vec{\boldsymbol{s}}}_{1}=\begin{pmatrix}
     \boldsymbol{s}_{1,2} &\cdots&\boldsymbol{s}_{1,K}
\end{pmatrix}^{T}.
\end{equation} 
By (\ref{nol}),  $\mathrm{H}(\vec{\boldsymbol{s}}_{1}\vec{\boldsymbol{s}}_{1}^{\dagger})=\mathrm{H}(\boldsymbol{s}_{1,1}\widetilde{\vec{\boldsymbol{s}}}_{1})$. Let $\mathcal{E}$ be the event where $\boldsymbol{s}_{1,1}=1$, while $k$ of the elements of $\widetilde{\vec{\boldsymbol{s}}}_{1}$, namely, $\boldsymbol{s}_{1,l_{1}},\cdots,\boldsymbol{s}_{1,l_{k-1}}$ and $\boldsymbol{s}_{1,l_{k}}$ are $1$ and the rest are $-1$ for some $0\leq k\leq K$ and $2\leq l_{1}<l_{2}<\cdots<l_{k}\leq K$. Also, let $\mathcal{F}$ be the event where $\boldsymbol{s}_{1,1}=-1$, $\boldsymbol{s}_{1,l}=-1$ for $l\in\{l_{1},l_{2},\cdots,l_{k}\}$ and $\boldsymbol{s}_{1,l}=1$ for $l\notin\{l_{1},l_{2},\cdots,l_{k}\}$. It is clear that 
\begin{equation}
\label{mol}
\boldsymbol{s}_{1,1}\widetilde{\vec{\boldsymbol{s}}}_{1}\mathbb{1}_{\mathcal{E}}=\boldsymbol{s}_{1,1}\widetilde{\vec{\boldsymbol{s}}}_{1}\mathbb{1}_{\mathcal{F}}.
\end{equation} 
We know that $\Pr\{\mathcal{E}\}=\nu^{k+1}\overline{\nu}^{K-k}$ and $\Pr\{\mathcal{F}\}=\nu^{K-k}\overline{\nu}^{k+1}$. Hence, using (\ref{mol}),
\begin{equation}
\label{ }
\mathrm{H}(\vec{\boldsymbol{s}}_{1}\vec{\boldsymbol{s}}_{1}^{\dagger})=-\sum_{k=0}^{K}{K\choose k}\left(\nu^{k+1}\overline{\nu}^{K-k}+\nu^{K-k}\overline{\nu}^{k+1}\right)\log\left(\nu^{k+1}\overline{\nu}^{K-k}+\nu^{K-k}\overline{\nu}^{k+1}\right).
\end{equation}  
   \bibliographystyle{IEEEbib}

 \end{document}